	\def\appendixautorefname{Appendix}%
\pgfplotsset{compat=1.16}
\newtheorem{thm}{Theorem}[section]
\newtheorem{lem}{Lemma}[section]
\newtheorem{defn}{Definition}[section]
\newcommand{\CL}{\mathcal{L}}
\newcommand{\vA}{\bm{A}}
\newcommand{\vB}{\bm{B}}
\newcommand{\vH}{\bm{H}}
\newcommand{\vI}{\bm{I}}
\newcommand{\vL}{\bm{L}}
\newcommand{\vX}{\bm{X}}
\newcommand{\vZ}{\bm{Z}}
\newcommand{\vsigma}{\bm{ \sigma}}
\newcommand{\vrho}{\bm{ \rho}}
\newcommand{\vertiii}[1]{{\left\vert\kern-0.25ex\left\vert\kern-0.25ex\left\vert #1 \right\vert\kern-0.25ex\right\vert\kern-0.25ex\right\vert}}
\newcommand{\norm}[1]{\Vert {#1} \Vert}
\newcommand{\e}{\mathrm{e}}
\newcommand{\rd}{\mathrm{d}}
\newcommand*{\tr}{\mathrm{Tr}}
\newcommand*{\poly}{\mathrm{poly}}
\newcommand*{\polylog}{\mathrm{polylog}}
\newcommand{\Lword}[1]{\text{Lindbladian}}
\newcommand{\floor}[1]{\left\lfloor{#1}\right\rfloor}
\DeclarePairedDelimiterX{\braket}[1]{\langle}{\rangle}{#1}
\DeclarePairedDelimiterX\ketbra[2]{| }{|}{#1 \delimsize\rangle\!\delimsize\langle #2}	
\DeclarePairedDelimiterX\dotp[2]{\langle}{\rangle}{#1, #2}
\DeclareMathAlphabet{\dutchcal}{U}{dutchcal}{m}{n}
\SetMathAlphabet{\dutchcal}{bold}{U}{dutchcal}{b}{n}
\DeclareMathAlphabet{\dutchbcal} {U}{dutchcal}{b}{n}
\DeclareRobustCommand*{\pmzerodot}{%
	\nfss@text{%
		\sbox0{$\vcenter{}$}
		\sbox2{0}%
		\sbox4{0\/}%
		\ooalign{%
			0\cr
			\hidewidth
			\kern\dimexpr\wd4-\wd2\relax 
			\raise\dimexpr(\ht2-\dp2)/2-\ht0\relax\hbox{%
				\if b\expandafter\@car\f@series\@nil\relax
				\mathversion{bold}%
				\fi
				$\cdot\m@th$%
			}%
			\hidewidth
			\cr
			\vphantom{0}
		}%
	}%
}
	\newcommand{\authnote}[3]{{\color{#3} {\bf  #1:} #2}}	
	\newcommand{\authnote}[3]{}
\def\l@subsection#1#2{}
\def\l@subsubsection#1#2{}
\begin{document}
\renewcommand{\appendixautorefname}{Appendix}
\renewcommand{\chapterautorefname}{Chapter}
\renewcommand{\sectionautorefname}{Section}
\renewcommand{\subsubsectionautorefname}{Section}

\title{Mixing time of quantum Gibbs sampling \\ for random sparse {H}amiltonians}
\date{\today}

\author{Akshar Ramkumar}
\author{Mehdi Soleimanifar}
\affil{California Institute of Technology}

\maketitle
\begin{abstract}
Providing evidence that quantum computers can efficiently prepare low-energy or thermal states of physically relevant interacting quantum systems is a major challenge in quantum information science. 
A newly developed quantum Gibbs sampling algorithm \cite{chen2023efficient} provides an efficient simulation of the detailed-balanced dissipative dynamics of non-commutative quantum systems. 
The running time of this algorithm depends on the mixing time of the corresponding quantum Markov chain, which has not been rigorously bounded except in the high-temperature regime. 
In this work, we establish a $\polylog(n)$ upper bound on its mixing time for various families of random $n \times n$ sparse Hamiltonians at any constant temperature.
We further analyze how the choice of the jump operators for the algorithm and the spectral properties of these sparse Hamiltonians influence the mixing time. 
Our result places this method for Gibbs sampling on par with other efficient algorithms for preparing low-energy states of quantumly easy Hamiltonians. 
\end{abstract}

\section{Introduction}
\label{sec:introduction}
\noindent
One of the main anticipated applications of quantum computers is the simulation and characterization of quantum systems in condensed matter physics \cite{Wecker2015QCstronglycorrelated}, quantum chemistry \cite{McArdle2020QuantumChemistry}, and high-energy physics \cite{preskill2018simulating, bauer2023SimulatingHighEnergy}.
The problem of simulating the dynamics (time evolution) of an interacting quantum system under a local or sparse Hamiltonian $\vH$ has largely been addressed, with efficient algorithms \cite{haah2018QAlgSimLatticeHam, low2016HamSimQSignProc, berry2014HamSimTaylor, low2016HamSimQubitization, gilyen2018QSingValTransf} that scale well with the number of particles, simulation time, and required precision.
However, the ability of quantum computers to evaluate the static features of quantum systems, such as their ground state or thermal properties, is less understood.

In this work, we focus on preparing the Gibbs (thermal) state $\vrho_\beta = \frac{e^{-\beta \vH}}{\tr(e^{-\beta \vH})}$ of a quantum system, which represents the equilibrium state when the system is in contact with a thermal bath at a fixed temperature $\beta^{-1}$. This computational problem, known as Gibbs sampling or ``cooling,'' is valuable not only for simulating thermodynamic properties but also as a subroutine in quantum algorithms for optimization and learning \cite{brandao2016QSDPSpeedup, apeldoorn2018ImprovedQSDPSolving, brandao2017QSDPSpeedupsLearning}.
However, to prepare the Gibbs state, quantum computers face challenges.  
In general, it is not believed that estimating the low-temperature properties of quantum systems can be solved efficiently by a quantum computer in the worst-case \cite{complexity_local_H}. 
Fortunately, it has been hypothesized that this worst-case hardness of finding low-temperature states implied by arguments from complexity theory is due to pathological Hamiltonians, which are not apparent in many physical systems that normally occur in nature. 
This hypothesis is substantiated by the empirical success of natural cooling, such as using refrigerators, in reaching thermal equilibrium.

\vspace{1em}

\noindent \textbf{Quantum Gibbs sampling.} Aiming to mimic nature's cooling processes, a series of recent works have introduced quantum Markov Chain Monte Carlo (MCMC) algorithms, or quantum Gibbs samplers \cite{chen2023efficient, chen2023QThermalStatePrep, shtanko2021preparing, wocjan2023szegedy, Rall2023thermalstate, jianggibbssampling, zhang2023Gibbs, ding2024efficient, gilyen2024quantumMet}, as promising alternatives for tackling a range of classically intractable low-temperature simulation tasks on quantum computers.
These algorithms are designed to replicate the success of classical Markov chains in preparing Gibbs states for classical Hamiltonians. 
The analysis of classical MCMC algorithms relies on the principle of detailed balance; however, achieving this in the quantum setting has been challenging and was only recently addressed by an algorithm in \cite{chen2023efficient}.
Part of the difficulty arises from a conflict between the finite energy resolution $\sigma_E$ achievable by efficient quantum algorithms and the seemingly strict requirement to precisely distinguish energy levels to satisfy detailed balance.
In this work, we focus primarily on this algorithm, referring to it as the CKG algorithm or the quantum Gibbs sampler when the context is clear.
We give a detailed review of this algorithm in Section~\ref{sec:CKGdetails} and Appendix~\ref{sec:CKGdetails2}.

The Gibbs sampling algorithm provides a fully general method for preparing Gibbs states by evolving an initial state $\vrho_0$ under a Lindbladian $\CL_{\beta}$, which is efficiently implementable on a quantum computer and produces the state \[\vrho_t = e^{\CL_{\beta} t}[\vrho_0]\] after time $t$. 
The runtime of the quantum Gibbs sampler is governed by the \emph{mixing time} of the corresponding quantum Markov chain, which is roughly the time required for $\vrho_t$ to approach the Gibbs state $\vrho_{\beta}$.
This in turn is bounded by the spectral gap $\lambda_{\text{gap}}(\CL_\beta)$ of the Lindbladian by 
\[t_{\text{mix}}(\CL_\beta)\le \frac{\mathcal{O}(\beta \norm{\vH}+ \log(n) )}{\lambda_{\text{gap}}(\CL_{\beta})}.\]

The mixing time varies based on the quantum system in question. Bounding this mixing time is challenging without access to fault-tolerant quantum computers, as we cannot run and benchmark the algorithm directly, making theoretical analysis essential. However, such analysis is hindered by a lack of technical tools for two key reasons.
Firstly, the theory of convergence of quantum Markov chains is new, unlike the very mature twin field for classical Markov chains.
Secondly, the Markov chain described by the algorithm is considerably complex, and depends on several parameters that we will discuss in more detail shortly: an energy resolution $\sigma_E$, a series of jump operators $\vA^a$ for $a\in [M]$, and the inverse temperature $\beta$. 
The space of possibilities makes the algorithm's performance more difficult to characterize. 

This motivates the identification of quantum systems whose mixing times are tractable for analysis yet exhibit rich features that provide insights into the performance of the quantum Gibbs sampler for more general non-commuting Hamiltonians.
In line with this, the mixing time of the CKG algorithm has recently been bounded for local Hamiltonians, showing a polynomial scaling with system size at high enough temperatures~\cite{rouze2024efficient}.

\vspace{1em}

\noindent \textbf{Mixing time of sparse Hamiltonians.} In this work, we consider an alternative approach by characterizing the mixing time of a family of \emph{sparse} Hamiltonians of the following form:
\begin{align}
    \vH = \sum_{i, j \in [n]} H_{ij}\ket{e_i} \bra{e_j},\label{eq:sparseH}
\end{align}
which can be understood as the Hamiltonian on a graph $G = (V,E)$ with $n=|V|$ vertices indexed by basis states $\ket{e_i}$, $i\in [n]$ and a set of edges $E$ connecting vertices with $H_{ij} \neq 0$.
When non-zero entries $H_{ij}$ are all equal to $1$, the Hamiltonian $\vH$ corresponds to the $n\times n$ adjacency matrix of the $n$-vertex graph. 
We define the degree $d$ of the graph $G$ as the sparsity of the underlying Hamiltonian and refer to Hamiltonians with constant or slowly increasing degrees $d=\polylog(n)$ as \emph{sparse}. 
Note that any $\log(n)$-qubit Hamiltonian that consists of $m = \polylog(n)$ terms each acting locally on $\kappa = O(1)$ qubits is a sparse Hamiltonian with degree~$d \leq m 2^\kappa\leq \polylog(n)$. However, not all sparse Hamiltonians admit local qubit encodings.

Having defined sparse Hamiltonians, we now consider the dissipative dynamics of this system induced by a set of $M$ jump operators expressed as follows:
\begin{align}
    \vA^a =  \sum_{i,j \in [n]}A_{ij}^a \ket{e_i}\bra{e_j}, \quad \quad a \in [M].
\end{align}

We will soon explain how the jump operators $\vA^a$ relate to the Lindbladian $\CL_\beta$.
Briefly, the resulting dynamics can be understood as a combination of two processes: a continuous-time quantum walk of a single particle on the graph of states due to the coherent evolution of the Hamiltonian $\vH$, which is combined with stochastic jumps on the graph determined by the jump operators $\vA^a$. 

Our interest in bounding the mixing time of the sparse Hamiltonians is multifaceted: 
\begin{itemize}
    \item[(1)] \textbf{Single-particle dynamics.} As stated earlier, bounding the mixing time of general interacting multipartite Hamiltonians is a challenging task.
    However, for simple choices of graphs $G$, the mixing time of the quantum Gibbs sampler may be easier to analyze, potentially leading to relevant techniques for tackling the case of interacting particles.
    In fact, we can think of the dynamics induced by the Hamiltonian $\vH$ \eqref{eq:sparseH} as the dynamics of a single-particle hopping on the graph $G$.
    This single-particle evolution on path graphs or grids is commonly analyzed in the tight-binding model in condensed matter physics. 
    That being said, even in the simplified case of a single particle, the Hamiltonian $\vH$ is non-commuting, characterizing a continuous-time quantum walk that can yield exponential quantum advantage for certain oracular problem on graphs such as the glued tress \cite{Childs2003Glued}.
    \item[(2)] \textbf{Chaotic Hamiltonians.}
    Our additional motivation for studying random sparse Hamiltonians stems from the fact that their spectra exhibit many of the same characteristics as \emph{chaotic} Hamiltonians, such as the SYK model \cite{SYK, kitaev2015SYK, kitaev2015simple} and random $p$-spin models \cite{swingle2024bosonic, winer2022spectral}. 
    Understanding whether chaotic Hamiltonians have a fast mixing time as they approach their thermal and low-energy states is a fundamental question in the study of quantum chaos \cite{ETH_thermalization_Chen21, anshuetz2024strongly}. 
    As a concrete step toward addressing this problem, we identify key spectral properties of random sparse Hamiltonians that can ensure a fast mixing time.
    \item[(3)] \textbf{Algorithmic applications.}
    Preparing quantum Gibbs states, and more broadly computing the matrix exponential of sparse matrices such as the adjacency or Laplacian of a graph, is a fundamental subroutine in solving various graph and optimization problems. 
    For instance, the Estrada index---defined as the trace of the matrix exponential of a graph's adjacency matrix---measures subgraph centrality and provides structural insights \cite{Estrada2005centrality}.
    Computing the matrix exponential is also related to matrix inversion and linear system solvers \cite{sachdeva2013matrix}. 
    Moreover, quantum Gibbs sampling has been applied to solving semidefinite programs (SDPs) in optimization problems \cite{GSLBrandao2022fasterquantum, brandao2016QSDPSpeedup, brandao2017QSDPSpeedupsLearning, apeldoorn2017QSDPSolvers}, offering quantum speedups for these problems.
\end{itemize}

\section{Our main results} 

Motivated by these considerations, we investigate the mixing time of quantum Gibbs samplers for sparse Hamiltonians and different choices of jump operators. Our study addresses two key questions regarding the performance of the quantum Gibbs sampler for sparse Hamiltonians.
First, we ask

\begin{center}
    \emph{What choices of jump operators leads to a fast mixing time?}
\end{center}
After exploring the effects of different jump operators $\vA^a$, we then focus on the spectral properties of sparse Hamiltonians to understand:
\begin{center}
    \emph{What spectral property of the Hamiltonian determines its mixing time?}
\end{center}
Answering these questions allows us to provide a broad and intuitive insights of how the quantum Gibbs sampler operates for general families of sparse Hamiltonians.

\subsection{Choice of jumps: graph-local vs unitary design} 
A natural set of jump operators for a given $n \times n$ Hamiltonian on a graph $G$ are $\vA^a = \frac{1}{\sqrt{n}}\ketbra{e_a}{e_a}$ or similar operators supported on a few neighboring vertices of $G$.
Importantly, these are not ``local'' in the sense of multi-particle Hamiltonians, which refers to being composed of terms that act on a small number of qubits---often also geometrically close to one another.
Utilizing graph-local jump operators also significantly simplifies the structure of the Lindbladian and the analysis of mixing times for certain graph families.

Moving beyond graph-local jumps, the Lindbladian $\CL_\beta$ of the quantum Gibbs sampler can still be efficiently implemented on a quantum computer with certain non-local jumps.
This is possible as long as each jump $\vA^a$ is efficiently implementable, the set of jumps $M$ includes both $\vA^a$ and its adjoint $\vA^{a \dag}$, and $\sum_{a \in [M]} \norm{\vA^{a\dag}\vA^a}_{\infty}=1$ (due to this normalization condition, we will sometimes speak of the jumping distribution $\mathcal{A}$, from which the jump operators $\vA^a$ are sampled with probability $\norm{\vA^{a\dag}\vA^a}$). 
This raises the question of whether there is an advantage in using \emph{non-local} jumps that have a bounded spectral norm, or if more structured local jumps are sufficient to achieve a fast-mixing quantum MCMC. After all, \emph{classical} continuous-time random walks are typically considered with local jumps on the graph vertices.
However, in the context of graphs, we will see that the structured nature of local jumps offers no advantage, but rather seems to cause a slowdown of the resulting algorithm.

\vspace{1em}

\noindent\textbf{Graph-local jumps.} To this end, in the next theorem, we establish tight bounds on the spectral gap of the Lindbladian for cyclic graphs for graph-local jumps $\vA^a$, with an approach similar to the one used in \cite{temmedaviesgap2013} to bound the spectral gap of a Davies generator.

\begin{thm}[\textbf{Spectral gap of cyclic graphs with local jumps}]\label{thm:cyclicgapn_intro}
    Fix temperature $\beta^{-1}$. There exists some constant energy resolution $\sigma_E$ for which the spectral gap of the CKG Lindbladian $\mathcal{L}_\beta$ for a cyclic graph with $n$ vertices with jump operators $\vA^a = \frac{1}{\sqrt{n}}\ketbra{e_a}{e_a}$ is asymptotically $\Theta(n^{-3})$.
\end{thm}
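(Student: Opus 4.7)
The plan is to reduce the spectral-gap computation for the full CKG Lindbladian $\mathcal{L}_\beta$ to a classical Markov-chain analysis on energy eigenstates, by fully exploiting the translation symmetry of the cyclic graph. Concretely, the cyclic shift $T:|e_a\rangle \mapsto |e_{a+1}\rangle$ commutes with $\vH$ and permutes the jump set, $T\vA^a T^{-1}=\vA^{a+1}$, so $\mathcal{L}_\beta$ becomes translation-invariant as a superoperator. This gives a block-diagonal decomposition of $\mathcal{L}_\beta$ across the $n$ momentum sectors labeled by $m = k-k' \pmod{n}$ in the operator space. Passing to the momentum eigenbasis $|k\rangle = n^{-1/2}\sum_a e^{2\pi i k a/n}|e_a\rangle$ with energies $E_k = 2\cos(2\pi k/n)$, the $m=0$ (population) sector collapses to a classical Markov chain on $n$ sites. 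The explicit Fourier computation $\langle k|\vA^a|k'\rangle = n^{-3/2} e^{-2\pi i (k-k') a/n}$ yields $\sum_a |\langle k|\vA^a|k'\rangle|^2 = n^{-2}$, so the chain has rates of the form $W_{k'\to k} = n^{-2}\,\gamma(E_k - E_{k'})$ where $\gamma$ is the CKG frequency profile, with stationary distribution $\pi_k \propto e^{-\beta E_k}$.

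For the upper bound on $\lambda_{\mathrm{gap}}$, I would exhibit a trial function $f:\{0,\ldots,n-1\}\to\mathbb{R}$ whose Rayleigh quotient $\mathcal{E}(f,f)/\mathrm{Var}_\pi(f)$ realizes the target scaling $O(n^{-3})$. A natural candidate tracks the two band-edge regions, where the density of states diverges as $\rho(E)\sim n/\sqrt{4-E^2}$ and the effective diffusion in energy is slow; a smoothed indicator of a band-edge bin, or a low-lying Chebyshev-type polynomial of $\vH$, should serve. For the matching lower bound, the strategy mirrors Temme's analysis of Davies generators in \cite{temmedaviesgap2013}: combine a Poincar\'e-type inequality for the reduced chain with a canonical-paths argument, routing each pair $(k,k')$ through a sequence of intermediate momenta whose energy steps are of order $\sigma_E$ and bounding the maximum congestion by $O(n^{3})$. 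Matching these two directions would pin down $\lambda_{\mathrm{gap}}(\mathcal{L}_\beta) = \Theta(n^{-3})$ for an appropriately chosen constant $\sigma_E$.

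The main obstacle is to correctly incorporate the two-fold degeneracy $E_k = E_{n-k}$ of the spectrum, which places zero-Bohr-frequency coherences $|k\rangle\langle n-k|$ in the non-trivial momentum sectors $m \equiv 2k \pmod n$. These coherences decay only through dephasing induced by transitions to other energies, so one must verify that no such coherence sector admits a slower decay mode than the population sector --- the density-of-states singularity near $E=\pm 2$ makes this check delicate, since dephasing rates can be small there. A second subtlety is the choice of $\sigma_E$: it must be a specific $n$-independent constant, chosen simultaneously small enough that the KMS filter $\gamma$ is well-concentrated and realizes the $n^{-3}$ upper-bound scaling of the slow mode, and large enough that the Markov chain remains connected with rates strong enough to match the $\Omega(n^{-3})$ lower bound. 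Tuning $\sigma_E$ to make both directions agree is what makes the bound tight rather than a one-sided estimate.
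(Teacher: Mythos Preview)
Your block decomposition via translation symmetry is exactly the right first step and matches the paper. The problem is that you have located the bottleneck in the wrong sector. The population ($m=0$) block is a fully connected classical chain with transition rates $n^{-2}\alpha(\nu_{lm})$; since the energies lie in $[-2,2]$, the function $\alpha$ is bounded above and below by positive constants, so all rates are $\Theta(n^{-2})$ and the stationary weights are $\Theta(n^{-1})$. This chain has gap $\Theta(n^{-1})$ (the paper proves $\Omega(n^{-1})$ by the trivial canonical path), so no trial function $f$ on the population sector can produce a Rayleigh quotient of $O(n^{-3})$. Your band-edge/density-of-states intuition does not help here: the chain is indexed by momentum $k$, not energy, and the rates do not vanish near the band edge.

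The true slow mode lives in the coherence sector $m=1$ (the block of operators $|k{+}1\rangle\langle k|$). There the decay term $\frac{1}{2n^2}\sum_j(\alpha(\nu_{jm_1})+\alpha(\nu_{jm_2}))$ is almost perfectly cancelled by the transition term $\frac{1}{n^2}\sum_l \theta(\nu_{l_1m_1},\nu_{l_2m_2})$, because for $l_1-l_2=m_1-m_2=1$ the two Bohr frequencies differ only by $\nu_{l_1l_2}-\nu_{m_1m_2}=O(n^{-1})$. The paper exploits this: the Gershgorin column bound on the $m=k$ block is shown to be $\frac{C}{n^2}\sum_{\mathbf l}(\nu_{l_1m_1}-\nu_{l_2m_2})^2 = \Omega(n^{-1}|\sin(\pi k/n)|^2)$, which is $\Theta(n^{-3})$ for $k=1$ and larger for all other $k$. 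The matching upper bound is obtained by taking the trial operator $v=\sum_l |l{+}1\rangle\langle l|$ (the all-ones vector on the $m=1$ block) and computing $\langle \mathcal L_\beta^\dagger v,v\rangle_{\rho_\beta^{-1}}/\langle v,v\rangle_{\rho_\beta^{-1}}=O(n^{-3})$. So the exact degeneracies $E_k=E_{n-k}$ you flag are not the issue; what matters is the \emph{near}-degeneracy of neighbouring Bohr frequencies in the $m=1$ sector, and your proposal does not engage with that mechanism at all.
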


In addition to theoretical analysis, we also generated data for cyclic graphs, path graphs, and random $d$-regular graphs with $n$ vertices, as shown in Figure \ref{fig:ckg_graph_numerical}. 
These numerical results suggest spectral gaps of $o(n^{-1})$ for generic sparse graphs with graph-local jumps. 
We observed that increasing the constant $d$ does improve the spectral gap decay, though it never improved past the asymptotic decay $O(n^{-1})$. 

These results are all suboptimal, since for an $n\times n$ Hamiltonian $\vH$, we expect an efficient result would be polynomial in the number of qubits, i.e. $\polylog(n)$ rather than $\poly(n)$. 
The poor performance can be attributed to two factors: (1) The operators $\vA^{a\dag} \vA^a$ have $L^1$ norm $\frac{1}{n}$. (2) In the energy basis, many entries of $\vA^a$ are highly correlated.

The first drawback effectively scales the Lindbladian down by $\frac{1}{n}$, since the $L^1$ norm of $\vA^\dag \vA$ can be as high as $1$ when the operator norm $\norm{\vA^\dag \vA} = \frac{1}{n}$. 
However, the chosen jump operators are projectors, so their $L^1$ and operator norms are equal.  
In both Theorem~\ref{thm:cyclicgapn_intro} and in the data, a spectral gap even worse than $\frac{1}{n}$ is observed. 
This is due to the second drawback. The aforementioned correlations lead to off-diagonal terms in the Lindbladian, which in general have the potential to dampen the spectral gap, and in the case of the cyclic graph provably do so. 
It appears that more generally, the biases of an ensemble of local jumps can introduce off-diagonal terms to the Lindbladian that decrease the spectral gap. 
The same harmful correlations appears to exist in higher degree graphs in addition to cyclic ones, though to a lesser extent as evidenced by the improved spectral gap.

\begin{figure}[tb!]
    \centering
    \textbf{CKG Lindbladian Spectral Gap Data}\par\medskip
    \includegraphics[width=1\linewidth]{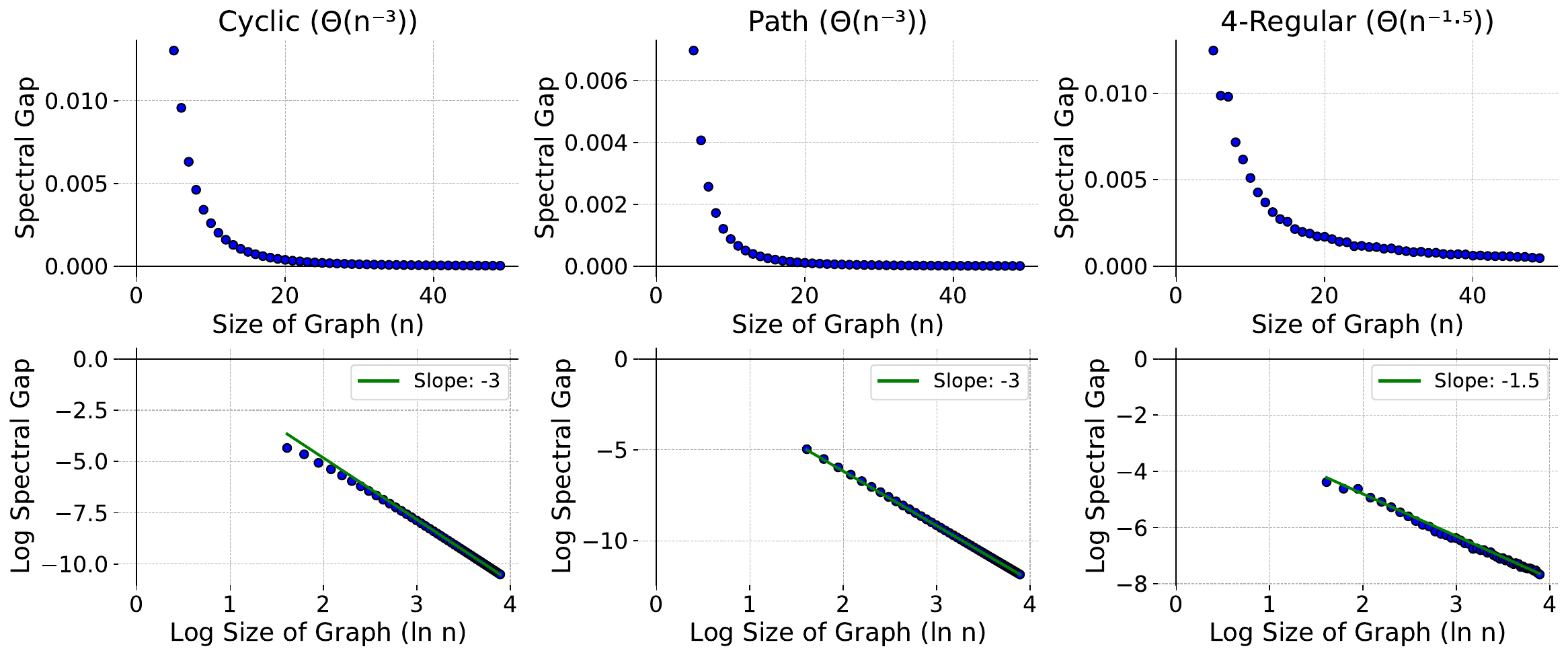}
    \caption{Linear (above) and log-log (below) graphs of spectral gap with respect to system size. Gaps of ten random 4-regular graphs were averaged for each data point. For the cyclic graphs (one-dimensional lattices), the proven asymptotic decay aligns closely with the data. }
    \label{fig:ckg_graph_numerical}
\end{figure}

\vspace{1em}
\noindent\textbf{Unitary design jumps.} To address some of these shortcomings, we next consider non-local jump operators, each independently drawn (along with its adjoint pair) according to a unitary $1$-design $\mathcal{D}(U(n))$ on $n$ vertices. 
More precisely, we define
\begin{defn}\label{def:design}
    A set of jump operators $\{\vA^a: a \in [M]\}$ is \textbf{drawn from a }$\mathbf{1}$\textbf{-design jumping distribution} if it is obtained by sampling $M/2$ jump operators i.i.d from a unitary $1$-design $\mathcal{D}(U(n))$, normalizing each by $\frac{1}{\sqrt{M}}$, and including these operators along with their adjoint. 
\end{defn}

We include the adjoint of each randomly chosen jump since the CKG Lindbladian requires the set of jump operators to be closed under adjoint, $\{\vA^a: a\in [M]\} = \{\vA^{a\dag}: a\in [M]\}$. 
When $n$ is a power of $2$, the unitary $1$-design can be constructed as a tensor product of random Pauli operators on $ \log_2(n)$ qubits, in which case the jumps are self-adjoint and can be sampled and implemented efficiently. 
The efficiency of our results on a general system relies on the ability to efficiently implement some unitary 1-design. 

As we will see, in our application, this 1-design sampling is effectively equivalent to sampling from a Haar-random distribution.
This approach improves on the results given for graph-local jumps, and is able to achieve an efficient algorithm in the number of qubits for a graph (running time $\polylog(n)$) for Gibbs sampling.
This improved performance is in part because all the eigenvalues of a Haar random unitary have magnitude 1. 
Hence, it avoids the problem of $\vA^\dag \vA$ having a relatively small $L^1$ norm given the constraint on its operator norm $\norm{\vA^\dag \vA}$.
These jumps also avoid the second problem encountered for the graph-local jumps: Since the number of degrees of freedom of randomness is very large over the ensemble, any form of bias is mitigated. 
Indeed, the resulting Lindbladian over the full ensemble has no off-diagonal terms resulting from correlated elements of the jump operator.

Our results extend beyond cyclic graphs to any graph of bounded degree $d = O(1)$ where $\norm{\vH}\leq d$ at constant temperature, or more generally when $\beta\norm{\vH}= O(1)$. 
We refer to these sparse Hamiltonians as bounded degree and formally define them as:

\begin{defn}
A \textbf{bounded degree} system is a sequence of temperatures $\beta(n)^{-1}$ and Hamiltonians $\vH(n)$ for which $\beta(n)\norm{\vH(n)}$ is bounded from above by a constant independent of system size. 
\end{defn}
\begin{thm}[\textbf{Constant spectral gap of Lindbladian in bounded degree systems}]\label{thm:constgapbounded_intro}
    Let $\beta(n)^{-1}$ be a sequence of temperatures and $\vH(n)$ a sequence of $n \times n$ Hamiltonians such that $\beta(n)\norm{\vH(n)} = O(1)$.
    
    With any constant probability $1-\xi$, the spectral gap of a Lindbladian $\mathcal{L}_\beta$ with $\sigma_E = \beta^{-1}$ and $M$ jump operators sampled from a 1-design jumping distribution for some $M = \Theta(\log(n))$, is bounded below by a constant, i.e. $\lambda_\text{gap} = \Omega(1)$.

     As a consequence and in the same setup,  the Gibbs state of $\vH$ can be prepared with error $\epsilon$ in diamond distance, in time $\poly(\log(n), \log(\epsilon^{-1}))$. 
\end{thm}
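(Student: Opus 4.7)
My plan is to decouple the problem into three parts: (i) understand the ``typical'' Lindbladian obtained by averaging over the $1$-design jumping distribution $\mathcal{D}(U(n))$, (ii) transfer this understanding to the random Lindbladian via matrix concentration, and (iii) convert the resulting spectral gap bound into the claimed Gibbs preparation time.

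First, I would exploit detailed balance to view $\CL_\beta$ as a self-adjoint operator in the KMS (or GNS) inner product twisted by $\vrho_\beta$, so that its eigenvalues are real, the spectral gap equals the second-smallest eigenvalue, and standard eigenvalue-perturbation tools apply. Because $\beta\norm{\vH} = O(1)$, this inner product differs from the Hilbert--Schmidt inner product only by constant factors, so norms computed in either setting are comparable. Writing $\CL_\beta = \sum_{a=1}^{M} \CL_\beta^{(a)}$ as an independent sum with each summand of operator norm $O(1/M)$, I would compute the average $\bar{\CL}_\beta := \mathbb{E}_{\mathcal{D}}[\CL_\beta]$ using the defining $1$-design identity $\mathbb{E}_{\vU \sim \mathcal{D}}[\vU X \vU^\dag] = (\tr(X)/n)\,\vI$. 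In the energy eigenbasis of $\vH$, this averaging washes out all jump-dependent correlations (the very source of the slowdown observed in \autoref{thm:cyclicgapn_intro}) and replaces each jump by a mean-field one, transitioning between every pair of eigenstates with weight $1/n$ and filtered by a Metropolis-like coefficient $\gamma(E_i - E_j)$ which, under $\beta\norm{\vH} = O(1)$, is bounded between two positive constants. The analysis of $\bar{\CL}_\beta$ on the diagonal sector then reduces to the gap of a detailed-balanced complete-graph random walk on $n$ eigenstates, whose spectral gap is a positive constant independent of $n$; the off-diagonal (coherence) sector should be controlled by a parallel Schur-complement or canonical-paths argument using the same $\gamma$-bounds and the $O(1)$ condition number of $\vrho_\beta$.

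Second, I would transfer the constant gap of $\bar{\CL}_\beta$ to $\CL_\beta$ via matrix concentration. The difference $\CL_\beta - \bar{\CL}_\beta$ is a sum of $M$ independent, centered, self-adjoint operators on an $n^2$-dimensional Hilbert--Schmidt space, each of norm $O(1/M)$ with bounded variance proxy. The matrix Bernstein inequality then shows that $M = \Theta(\log n)$ samples suffice so that $\norm{\CL_\beta - \bar{\CL}_\beta}$ is smaller than $\lambda_{\text{gap}}(\bar{\CL}_\beta)/2$ with probability at least $1-\xi$; Weyl's inequality yields $\lambda_{\text{gap}}(\CL_\beta) = \Omega(1)$ on this event. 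Finally, plugging into the quoted bound $t_{\text{mix}} \leq \mathcal{O}(\beta\norm{\vH}+\log n)/\lambda_{\text{gap}}$ gives $t_{\text{mix}} = O(\log n)$; combined with the $\polylog(n)$ per-time-step implementation of the CKG Lindbladian from \cite{chen2023efficient}, and the fact that a Pauli $1$-design on $\log_2 n$ qubits can be sampled and applied in $O(\log n)$ gates, this yields Gibbs preparation in time $\poly(\log n, \log(1/\epsilon))$.

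The hardest step is showing that $\bar{\CL}_\beta$ has a constant spectral gap. The mean-field intuition is clean, but making it rigorous requires carefully handling the coherence blocks in the energy-basis decomposition of the CKG Lindbladian, including the Hamiltonian Lamb-shift contribution, and verifying that the finite energy resolution $\sigma_E = \beta^{-1}$ does not merge nearly-degenerate eigenvalues in a way that destroys the effective transition rates. Once this mean-field gap is in hand, the matrix-concentration and runtime steps should be essentially routine, with all hidden constants controlled by $\beta\norm{\vH} = O(1)$ and the failure probability $\xi$.
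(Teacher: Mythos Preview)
Your proposal is correct and matches the paper's approach: compute $\mathcal{L}_\mu = \mathbb{E}_{\vA \sim \mathcal{D}}[\mathcal{L}_\beta]$ and show it has a constant gap (canonical-path bound on the classical block, direct bound on the dephasing part), then apply matrix Bernstein with $M = \Theta(\log n)$ independent adjoint pairs and Weyl's inequality to transfer the gap to $\mathcal{L}_\beta$. The off-diagonal sector is in fact simpler than you anticipate---after averaging, the coherent (Lamb-shift) term vanishes exactly and the decay term is purely diagonal in the energy basis, so the coherence sector of $\mathcal{L}_\mu$ is just a diagonal dephasing channel with $\Omega(1)$ rates, and no Schur-complement or canonical-paths machinery is needed there.
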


While the examples of bounded degree Hamiltonians we consider are mostly graphs, the above theorem applies to preparing the Gibbs state for any Hamiltonian at a temperature $\beta^{-1}$ such that $\beta^{-1}=O(\norm{\vH})$.

\subsection{Mixing time from spectral profile}
Theorem~\ref{thm:constgapbounded_intro} demonstrates that bounded-degree Hamiltonians with non-local jumps exhibit fast mixing times.
However, it leaves open the case of Hamiltonians with unbounded degrees, such as those with $d \leq \polylog(n)$.
More formally, we define

\begin{defn}
An \textbf{unbounded degree} system is a sequence of temperatures $\beta(n)^{-1}$ and Hamiltonians $\vH(n)$ for which $\lim_{n \to \infty}\beta(n)\norm{\vH(n)} \to \infty$. However, we still assume that $\beta\norm{\vH} = \polylog(n)$, polynomial in the number of qubits. 
\end{defn}

In our next result, we show that again selecting the jumping distribution (including adjoints) to be $M$ samples from a unitary 1-design for sufficiently large $M$ and choosing the energy resolution $\sigma_E = \beta^{-1}$ yields an algorithm whose efficiency depends on its low-energy spectrum. In particular, the runtime scales inverse polynomially with the fraction of eigenvalues $\delta(n)$ of $\vH$ that are within $O(\beta^{-1})$ of the minimum eigenvalue. 

\begin{thm}[\textbf{Spectral gap of Lindbladian in unbounded degree systems}]\label{thm:gapunbounded_intro}
    Let $\beta(n)^{-1}$ be a sequence of temperatures and $\vH(n)$ a sequence of $n \times n$ Hamiltonians, and let $\delta(n)$ be the fraction of eigenvalues of $\vH(n)$ within $O(\beta^{-1})$ of $\lambda_\text{min}$.
    
    With any constant probability $1-\xi$, the spectral gap of a Lindbladian $\mathcal{L}_\beta$ with $\sigma_E = \beta^{-1}$ and $M$ jump operators sampled from a 1-design jumping distribution for some $M = \Theta(\delta(n)^{-2}\log(n)\log(\beta\norm{\vH}))$, is lower bounded by $\Omega(\delta(n))$.

     As a consequence, and in the same setup, the Gibbs state of $\vH$ can be prepared with error $\epsilon$ in diamond distance, in time $\poly(\delta(n)^{-1}, \log(n), \log(\epsilon^{-1}))$.  
\end{thm}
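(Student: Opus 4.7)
The plan is to reduce Theorem~\ref{thm:gapunbounded_intro} to the bounded-degree result already established in Theorem~\ref{thm:constgapbounded_intro}. At inverse temperature $\beta$ with energy resolution $\sigma_E=\beta^{-1}$, the Gibbs state $\vrho_\beta$ is approximately uniform on the low-energy subspace
\[
\mathcal{S}_{\mathrm{low}} \;=\; \mathrm{span}\bigl\{\ket{e_i} : \lambda_i \le \lambda_{\min} + c\beta^{-1}\bigr\},
\]
which has dimension $\delta(n)\,n$ and on which the restricted Hamiltonian has spectral width $O(\beta^{-1})$. In particular, the restriction is a bounded degree system in the sense defined above. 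The task is then to show that, with high probability over the $1$-design draws, the Lindbladian $\mathcal{L}_\beta$ behaves on $\mathcal{S}_{\mathrm{low}}$ like one built from a $1$-design on an $n'=\delta(n)\,n$-dimensional space, and then to invoke Theorem~\ref{thm:constgapbounded_intro} to supply a constant effective gap on that subspace.

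First, using the variational (Dirichlet-form) characterization of $\lambda_{\mathrm{gap}}(\mathcal{L}_\beta)$ in the KMS inner product, I would decompose an arbitrary test observable along energy windows of width $\sigma_E=\beta^{-1}$. The jump weight $\gamma(\omega)$ in the CKG Lindbladian is sharply suppressed for $|\omega|\gg\beta^{-1}$, so cross-window Dirichlet contributions decouple up to constants. Because the Gibbs mass is dominated by $\mathcal{S}_{\mathrm{low}}$, a Poincar\'e inequality on $\mathcal{S}_{\mathrm{low}}$ controls the full gap up to an $\Omega(1)$ factor; the remaining $O(\log(\beta\|\vH\|))$ higher windows are absorbed by a union bound, and this is what contributes the $\log(\beta\|\vH\|)$ factor to the required number of samples $M$.

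Next I would analyze the induced jump ensemble $\{\vP\vA^a\vP\}$, where $\vP$ is the projector onto $\mathcal{S}_{\mathrm{low}}$. Writing $\vA^a=U^a/\sqrt{M}$, the $1$-design property yields the first-moment identity $\BE\bigl[\sum_a \vA^{a\dag}X\vA^a\bigr]=\tfrac{\tr(X)}{n}\vI$ for any $X$, so taking $X=\vP$ produces an expected within-subspace transition rate of $\delta(n)$. This rate is the origin of the $\Omega(\delta(n))$ gap. To promote this first-moment equality to an operator-norm bound uniformly over the restricted action, I would apply a matrix Chernoff/Bernstein inequality on $\mathcal{S}_{\mathrm{low}}$: the relative fluctuation around the $\delta(n)$-sized signal is of order $\sqrt{\log n /M}/\delta(n)$, so demanding concentration within a constant factor forces $M=\Omega(\delta(n)^{-2}\log n)$. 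Combined with the union bound over energy windows, this yields $M=\Theta(\delta(n)^{-2}\log n\,\log(\beta\|\vH\|))$ with success probability $1-\xi$ for any constant $\xi$.

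Once the restricted ensemble is concentrated, I apply Theorem~\ref{thm:constgapbounded_intro} to the $\delta(n)\,n$-dimensional bounded-degree effective system to obtain a constant gap for the Lindbladian on $\mathcal{S}_{\mathrm{low}}$; pulling this back through the $\delta(n)$ normalization loss (since the $M$-normalization is set by the full space rather than $\mathcal{S}_{\mathrm{low}}$) yields $\lambda_{\mathrm{gap}}(\mathcal{L}_\beta)=\Omega(\delta(n))$. The Gibbs preparation bound then follows from the standard inequality $t_{\mathrm{mix}}\le\mathcal{O}(\beta\|\vH\|+\log n)/\lambda_{\mathrm{gap}}$ together with $\beta\|\vH\|=\polylog(n)$ and the efficient implementability of $M=\poly(\delta(n)^{-1},\log n)$ $1$-design jumps. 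The step I expect to be the main obstacle is the quantitative decoupling of energy windows in the KMS Dirichlet form: detailed balance couples the real and imaginary parts of the form non-trivially, and controlling the off-diagonal blocks connecting $\mathcal{S}_{\mathrm{low}}$ to higher-energy windows uniformly over all test observables is precisely what drives the $\log(\beta\|\vH\|)$ overhead and requires careful use of the Gaussian profile of the jump weights.
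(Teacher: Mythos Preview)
Your approach differs substantially from the paper's, and the reduction you propose has a genuine gap. The paper never restricts to a low-energy subspace or invokes Theorem~\ref{thm:constgapbounded_intro} as a black box. Instead it computes the expected Lindbladian $\mathcal{L}_\mu=\mathbb{E}_{\vA\sim\mathcal{D}}[\mathcal{L}_\beta]$ explicitly in the energy basis (Lemma~\ref{lem:averagelindbladian}): the only non-diagonal piece is an $n\times n$ classical Markov generator on the diagonal entries, whose gap is bounded below by $\Omega(\delta(n))$ via a one-edge canonical-path argument, and the remaining diagonal (dephasing) entries are each $\Omega(\delta(n))$ because $\delta(n)\cdot n$ of the terms $\alpha(\nu_{jm})$ in the decay sum are $\Omega(1)$. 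Concentration then proceeds exactly as in the bounded-degree proof: matrix Bernstein with $R=O(\log(\beta\|\vH\|))$ from Lemma~\ref{lem:operatornorm} and target deviation $t=\tfrac{C}{2}\delta(n)$ forces $M=\Theta(\delta(n)^{-2}\log(\beta\|\vH\|)^2\log n)$, after which Weyl's inequality transfers the gap of $\mathcal{L}_\mu$ to $\mathcal{L}_\beta$. The $\log(\beta\|\vH\|)$ factor arises solely from the operator-norm bound on a single $\mathcal{L}_a$, not from any union over energy windows.

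The specific step that fails in your outline is the application of Theorem~\ref{thm:constgapbounded_intro} to the restricted system on $\mathcal{S}_{\mathrm{low}}$. The projected operators $\vP\vA^a\vP$ are not unitary, so the restricted ensemble is not a $1$-design on $\mathcal{S}_{\mathrm{low}}$ and the hypothesis of that theorem is not met; the first-moment identity you quote is not sufficient, because the structure of the CKG Lindbladian (transition, decay, and coherent parts separately) depends on the full pairwise correlations $A_{l_1m_1}\overline{A_{l_2m_2}}$ in the energy basis, not just on $\sum_a\vA^{a\dag}X\vA^a$. More fundamentally, the claim that a Poincar\'e inequality on $\mathcal{S}_{\mathrm{low}}$ controls the full KMS Dirichlet form is not justified: $\gamma(\omega)\approx 1$ for all $\omega\le 0$, so energy-\emph{lowering} transitions into $\mathcal{S}_{\mathrm{low}}$ from arbitrarily high windows are unsuppressed, and the decay and coherent terms involve sums over all energies regardless of where the test observable is supported. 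You flag this decoupling as the main obstacle, and indeed there is no clean way to carry it out; the paper sidesteps the issue entirely by bounding the gap of the global $\mathcal{L}_\mu$ directly.
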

\noindent
Note that if $\beta \norm{\vH}$ is bounded, then $\delta(n) = 1$. 
This result therefore generalizes Theorem~\ref{thm:constgapbounded_intro}. 

\subsection{Explicit examples of random sparse Hamiltonians} Having established a sufficient spectral condition for the fast mixing of random ensembles of sparse Hamiltonians, we now give explicit examples that satisfy this criterion. We also give one example, the hypercube, which does not, and for which local jumps in place of unitary design jumps achieve an exponential speedup. This example elucidates the potential of structured local jumps for speedups, in contrast to the case of the cyclic graph in which structured graph-local jumps yielded a slowdown.

\vspace{1em}

\noindent\textbf{Random regular graphs.} The first example is when $\vH$ is the adjacency matrix of a randomly selected $\log(n)$-regular graph, with $\polylog(n)$ random 1-design jumps.
In Section~\ref{sec:randomRegular}, we prove using Theorem~\ref{thm:gapunbounded_intro} that this ensemble has a Lindbladian spectral gap of $\Omega(\log(n)^{-3/4})$ at constant temperature. 
This yields a polynomial algorithm to prepare the Gibbs state.

\vspace{1em}

\noindent\textbf{Random signed Pauli ensemble.} The second example is the family of sparse Hamiltonians considered in \cite{chen2023sparse}, composed of random Pauli strings with random sign coefficients given by
\[\vH_{PS} = \sum_{j=1}^m \frac{r_j}{\sqrt{m}} \vsigma_j\]
where $\vsigma$ is a random Pauli string on $n_0$ qubits (such that the size of Hamiltonian is $n \times n$ for $n=2^{n_0}$), each $r_j$ is sampled randomly from $\{-1, 1\}$, and $m = O(\frac{n_0^5}{\epsilon^4})$ for a parameter $1\geq \epsilon \geq 2^{-o(n_0)}$.  
We show in Section~\ref{sec:randomPauli} that the CKG Lindbladian has a spectral gap of $\Omega(\epsilon^{-3/2})$ when we choose $M = \widetilde{O}\left(n_0^2\epsilon^{-3}\right)$ unitary 1-design jumps, inverse temperature $\beta = O(\epsilon^{-1})$, and $\sigma_E = \beta^{-1}$.

\vspace{1em}

\noindent\textbf{Hypercubes.} The final example is the family of hypercubes. 
A hypercube with $2^d$ vertices and degree $d$ can be interpreted as a Hamiltonian on $d$ qubits $\sum_i \vX_i$. At constant temperature, only an exponentially small fraction of eigenvalues lie near the minimum eigenvalue. 
As a result, the spectral profile implies a poor mixing time with unitary design jumps. 

However, we show in Theorem \ref{thm:hypercubelocalgap} that by choosing local jumps $\frac{1}{\sqrt{d}}\vZ_i$, the spectral gap is $\frac{1}{d}$, yielding an efficient algorithm for Gibbs sampling. 
Hypercubes therefore provide an example where not graph-local jumps, but rather local jumps on the qubits, ensure fast mixing. The crucial feature of local jumps that improves mixing time is that a local jump $\vA^a$ on a local Hamiltonian $\vH$ satisfies $\norm{[\vA^a, \vH]} = O(1)$—i.e., the jump operator only jumps between nearby eigenstates. 
This property is not held by graph-local jumps in general, so they displayed no improvement in the studied cases. In general, local jumps are the strongest candidates for fast-mixing on local Hamiltonians, though for which classes of local Hamiltonians fast-mixing can be achieved is still largely open.
\section{Proof sketch}

\subsection{Graph-local jumps}
The proof of the mixing time for graph-local jumps in the cyclic graph involves two steps:
First, in Appendix~\ref{sec:CKGinEigenbasis}, we derive a general expression for the CKG Lindbladian in the energy basis given by equation~\eqref{eq:graphlindbladian}.

We then utilize this expression along with the fully known spectrum of the cyclic graph to show that the Lindbladian is block-diagonal in the energy basis of this graph, as demonstrated in Appendix~\ref{cyclicgapnproof}. One of these blocks corresponds to a classical Markov chain on the diagonal entries of the state in the energy basis, for which we establish a spectral gap lower bound using the canonical path method. For the remaining $n-1$ blocks, we apply the Gershgorin circle theorem to bound their eigenvalues.

\subsection{Unitary design jumps}

\noindent\textbf{Bounded-degree systems.} To establish a lower bound on the spectral gap of the Lindbladian $\CL_\beta$ with unitary $1$-design jumps, we consider a decomposition of the form: 
\begin{align}
\CL_\beta = \CL_\mu + \delta\CL, \label{eq:decomposeL}
\end{align}
where $\CL_\mu = \mathbb{E}_{\vA \sim \mathcal{D}(U(n))}[\mathcal{L}_\beta]$ is the expected Lindbladian with the expectation taken over a single jump operator sampled from a unitary $1$-design distribution $\mathcal{D}(U(n))$, and $\delta\CL$ represents the remainder term.
Due the quadratic form of the Lindbladian given in expression~\eqref{eq:lindbladianForm} we see that $\mathbb{E}_{\vA \sim \mathcal{H}(U(n))}[\mathcal{L}_\beta] = \mathbb{E}_{\vA \sim \mathcal{D}(U(n))}[\mathcal{L}_\beta]$.
Here, $\mathcal{H}(U(n))$ is a Haar random distribution over jump operators. 

Note that a CKG Lindbladian must have jump operators in adjoint pairs, so a Lindbladian with a single jump operator will not satisfy detailed balance. 
However, the expected Lindbladian over one Haar random jump operator is equal to the expected Lindbladian over the adjoint pair of a Haar random jump operator, by linearity of expectation. 

The proof of Theorem~\ref{thm:constgapbounded_intro} proceeds by first showing that this expected Lindbladian $\mathbb{E}_{\vA \sim \mathcal{H}(U(n))}[\mathcal{L}_\beta]$ has a constant spectral gap, as long as $\beta(n)\norm{\vH(n)}$ is bounded by a constant as a function of system size. 
As before, we call such systems \textit{bounded degree}, since for constant $\beta$ bounded degree graphs satisfy the required property. 
Indeed, if such a system has degree bounded by $d$, it must have spectrum in $[-d, d]$ by Gershgorin's circle theorem, since every row consists of zeros and at most $d$ ones. 
Adding phases to the edges of these bounded degree graphs remains feasible by a similar argument, so there is no constraint of stoquasticity. 

The result is stated formally as follows:
\begin{lem}[\textbf{Constant spectral gap of average Lindbladian for bounded degree systems}]\label{lem:boundedaveragelindbladian}
    Let $\beta(n)^{-1}$ be a sequence of temperatures and $\vH(n)$ be a sequence of $n \times n$ Hamiltonians such that $\beta\norm{\vH} = O(1)$. 
    The spectral gap of $\mathcal{L}_\mu = \mathbb{E}_{\vA \sim \mathcal{D}(U(n))}\left[\mathcal{L}_\beta\right]$, the expected CKG Lindbladian with energy resolution $\sigma_E= \beta^{-1}$ over the ensemble of one jump operator sampled from a unitary 1-design, is asymptotically $\Omega(1)$.
\end{lem}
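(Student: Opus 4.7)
The plan is to exploit the 1-design property to compute $\CL_\mu$ explicitly in the energy eigenbasis of $\vH$, observe that populations (diagonal matrix elements) and coherences (off-diagonal matrix elements) decouple, and then bound the spectral gap of each block separately using $\beta\|\vH\| = O(1)$.

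\textbf{Step 1 (explicit average).} The CKG Lindbladian is quadratic in the single jump operator $\vA$, and a unitary $1$-design has the same second moments as the Haar measure, $\BE[U_{ij}^{*} U_{kl}] = \delta_{ik}\delta_{jl}/n$. Expanding $\vA$ and its operator Fourier components $\vA_\omega$ in the eigenbasis $\{|p\rangle\}$ of $\vH$ and averaging cell-by-cell, the Haar identity forces index pairings that collapse the gain kernel $\BE[\vA_{\omega_1}\vrho\vA_{\omega_2}^{\dag}]$ to a map that reads only from diagonal entries $\vrho_{qq}$ and writes only to diagonal entries, and simultaneously forces cross-frequency contributions with $\omega_1\neq\omega_2$ to vanish. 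By the same computation, $\BE[\vA_{\omega_1}^{\dag} \vA_{\omega_2}]$ is diagonal in the energy basis, so the anticommutator term preserves the diagonal/off-diagonal decomposition, and the averaged Lamb shift $\BE[\vH_{LS}]$, being a function of $\vH$ alone, is diagonal in the energy basis. Hence $\CL_\mu$ is block-diagonal in this basis: a ``population'' block on $\mathrm{span}\{|p\rangle\langle p|\}_p$ and one-dimensional ``coherence'' blocks on each $|p\rangle\langle q|$ with $p\neq q$.

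\textbf{Step 2 (classical block).} The population block is the generator of a classical reversible Markov chain on $n$ states with stationary distribution $\pi_p \propto e^{-\beta E_p}$ and transition rate $q\to p$ proportional to $\frac{1}{n}\gamma_\beta(E_p - E_q)$ (smoothed by the filter of width $\sigma_E=\beta^{-1}$). Under $\beta\|\vH\| = O(1)$ we have $\beta|E_p-E_q| = O(1)$ and $\sigma_E^{-1}|E_p - E_q| = O(1)$ uniformly in $p,q$, so both the Metropolis weight $\gamma_\beta(\cdot)$ and the filter are bounded above and below by positive constants, while $\pi_p = \Theta(1/n)$. This yields a Doeblin minorization $q(q \to p) \geq c\,\pi_p$, and a Dirichlet-form comparison with the uniform complete-graph chain (whose gap equals $1$) gives spectral gap $\Omega(1)$ on this block.

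\textbf{Step 3 (coherence blocks and combination).} Applied to $|p\rangle\langle q|$ with $p\neq q$, the averaged gain term vanishes (it only feeds from diagonal inputs), the anticommutator contributes pure damping at rate $\tfrac{1}{2}(D_p + D_q)$ with $D_r = \tfrac{1}{n}\sum_s \gamma_\beta(E_s - E_r) = \Theta(1)$, and the averaged Lamb shift produces only a purely imaginary phase. Hence every coherence eigenmode of $\CL_\mu$ has real-part decay rate $\Omega(1)$. Combining this with Step 2, whose unique zero mode is the Gibbs diagonal, yields $\lambda_{\text{gap}}(\CL_\mu) = \Omega(1)$.

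\textbf{Main obstacle.} The bulk of the work is in Step 1: one must faithfully Haar-average the full CKG Lindbladian---including the double Fourier integral with finite energy resolution, the transition weight $\gamma_\beta(\omega)$, the operator Fourier transform $\vA_\omega$, and the Lamb-shift contribution---and verify that the result really is block-diagonal in the energy basis with no leakage between populations and coherences. Once this reduction is in hand, the Doeblin/Dirichlet-form estimate on the classical block and the direct damping estimate on the coherence blocks are short standard arguments.
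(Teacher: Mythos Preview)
Your proposal is correct and follows essentially the same route as the paper: compute $\CL_\mu$ explicitly via the Haar second-moment identity $\BE[A_{ij}\overline{A_{kl}}]=\delta_{ik}\delta_{jl}/n$, observe the population/coherence block structure in the energy basis, and bound each block using $\beta\|\vH\|=O(1)$. The only minor differences are that the paper uses a one-edge canonical-path bound for the classical block (equivalent to your Doeblin minorization) and shows that the averaged coherent term vanishes identically---the $\tanh$ factor is evaluated at zero after the Haar pairing forces $l_i=m_i$---rather than merely contributing a diagonal phase.
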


To establish Lemma~\ref{lem:boundedaveragelindbladian}, we show that for any system, the average Lindbladian over a Haar random ensemble of jump operators decomposes as $\mathcal{L}_\mu= \mathcal{L}_\text{classical} + \mathcal{L}_\text{dephasing}$. 
For a density matrix in the energy basis, the evolution of $\mathcal{L}_\text{classical}$ is a classical continuous Markov chain of the diagonal.
The evolution of this classical Lindbladian maps the diagonal, in the limit, to the Gibbs distribution. 
The spectral gap of $\mathcal{L}_\text{classical}$ can be analyzed with the large suite of techniques for classical Markov chains. 

Meanwhile, $\mathcal{L}_\text{dephasing}$ damps the off-diagonal terms of the density matrix. 
In the limit as $t \to \infty$, the state therefore converges to a classical distribution on the diagonal in the energy basis, with no off-diagonal terms, as desired. 
The operator $\mathcal{L}_\text{dephasing}$ diagonalizes in the energy basis of density matrices, with each off-diagonal element decaying at an independent rate. 
It is therefore simple to analyze as well.
In summary, Lemma~\ref{lem:boundedaveragelindbladian} establishes that the $\lambda_\text{gap}(\mathcal{L}_\mu) = \min(\lambda_\text{gap}(\mathcal{L}_\text{classical}), \lambda_\text{min}(\mathcal{L}_\text{dephasing})) = \Omega(1)$.

However, this result does not imply that any given jump sampled from the unitary 1-design (with its adjoint pair) would yield a gapped Lindbladian. 
As a result, it does not yet yield an efficient Gibbs sampling algorithm. 
To obtain such a result in Theorem~\ref{thm:constgapbounded_intro}, we demonstrate that the remainder term $\delta \CL$ in \eqref{eq:decomposeL} has a small spectral norm when a Lindbladian is constructed from a sufficiently large number of jumps $M$, rather than just one.
In particular, a Lindbladian sampled with $\Theta(\log(n))$ normalized jumps from any 1-design concentrates closely to its expectation, thereby establishing a spectral gap lower bound.
Since this lower bound applies to any graph at constant temperature $\beta^{-1}$ with bounded degree, it applies to the periodic lattices, path graphs, and $k$-regular graphs discussed in the previous section.

\vspace{1em}

\noindent \textbf{Unbounded degree systems.} In the context of unbounded degree systems, 1-design unitaries can no longer, in general, achieve an algorithm that is efficient in $\log(n)$. 
Indeed, $\mathcal{L}_\mu = \mathbb{E}_{\vA \sim \mathcal{D}(U(n))}\left[\mathcal{L}_\beta\right] = \mathcal{L}_\text{classical} + \mathcal{L}_\text{dephasing}$ does not necessarily have a constant spectral gap in general, as it did in the case of bounded degree systems. 
However, we may establish a condition on the spectrum of $\vH$, with which we can recover a lower bound for the spectral gap:

\begin{lem}[\textbf{Spectral gap of average Lindbladian for unbounded systems}]\label{lem:averagelindbladian}
    Let $\vH(n)$ be a sequence of $n \times n$ Hamiltonians. For some $C$, let $\delta(n)$ be the proportion of eigenvalues $\lambda_j$ of $\vH$ such that $\beta^{-1} (\lambda_j-\lambda_\text{min}) \leq C$. 
    The spectral gap of $\mathcal{L}_\mu = \mathbb{E}_{\vA \sim \mathcal{D}(U(n))}\left[\mathcal{L}_\beta\right]$, the expected CKG Lindbladian over the Haar random unitary ensemble of its jump operator at temperature $\beta^{-1}$ with $\sigma_E = \Theta(\beta^{-1})$, is asymptotically $\Omega\left(\delta(n)\right)$.
\end{lem}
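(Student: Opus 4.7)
The plan is to extend the decomposition $\mathcal{L}_\mu = \mathcal{L}_\text{classical} + \mathcal{L}_\text{dephasing}$ established for the bounded-degree case in Lemma~\ref{lem:boundedaveragelindbladian}, and then to lower bound the spectral gap of each summand separately by $\Omega(\delta(n))$. Since the CKG Lindbladian is quadratic in the jump operator $\vA$, the 1-design expectation agrees with the Haar expectation, and an explicit computation in the eigenbasis of $\vH$ shows that $\mathcal{L}_\text{classical}$ implements a classical continuous-time Markov chain on the diagonal reversible with respect to the Gibbs distribution $\pi_i \propto e^{-\beta \lambda_i}$, while $\mathcal{L}_\text{dephasing}$ is block-diagonal on the off-diagonal matrix units $|\psi_i\rangle\langle\psi_j|$ and damps each one independently. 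Hence $\lambda_\text{gap}(\mathcal{L}_\mu) = \min\bigl(\lambda_\text{gap}(\mathcal{L}_\text{classical}),\, \lambda_\text{min}(\mathcal{L}_\text{dephasing})\bigr)$, which reduces the task to two separate scalar estimates.

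The main effort goes into the classical chain, whose Haar-averaged transition rates take the form $T_{ij} = \tfrac{1}{n}\,\phi(\lambda_i,\lambda_j)$, where $\phi$ bundles the CKG filter of width $\sigma_E = \beta^{-1}$ with the Metropolis/KMS Gibbs weight. The key structural observation is that $\phi(\lambda_i,\lambda_j) = \Omega(1)$ whenever both $i,j$ lie in the low-energy set $S_0 = \{k : \lambda_k - \lambda_\text{min} \leq C\beta^{-1}\}$, since then the energy transfer is within the filter bandwidth and the Gibbs penalty is bounded. Combined with $|S_0| = \delta n$, this yields $\pi(S_0) \geq e^{-C}\delta$ and shows that the chain restricted to $S_0$ is essentially a uniformly weighted complete graph with rates $\Omega(1/n)$, hence constant spectral gap. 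I would complete the Poincaré step by decomposing a zero-mean test function via the law-of-total-variance identity
\[
\text{Var}_\pi(f) = \pi(S_0)\,\text{Var}_{\pi|S_0}(f) + \pi(S_0^c)\,\text{Var}_{\pi|S_0^c}(f) + \pi(S_0)\pi(S_0^c)\bigl(\mathbb{E}_{S_0}f - \mathbb{E}_{S_0^c}f\bigr)^2,
\]
controlling the first term by the complete-graph Dirichlet form on $S_0$ and the remaining two terms by the detailed-balance flux from $S_0^c$ into $S_0$.

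The dephasing part then follows cheaply: each off-diagonal mode $|\psi_i\rangle\langle\psi_j|$ decays at a rate proportional to $\tfrac{1}{2}(R_i + R_j)$, where $R_k = \sum_{\ell \neq k} T_{k\ell}$ is the total classical escape rate out of $k$ (together with nonnegative contributions from the coherently filtered $\vA^\dagger \vA$ terms). The same structural observation gives $R_k \geq \sum_{\ell \in S_0} T_{k\ell} = \Omega(\delta)$ uniformly in $k$, so $\lambda_\text{min}(\mathcal{L}_\text{dephasing}) = \Omega(\delta)$ as well.

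The hardest part will be making the Poincaré step rigorous for states far above the low-energy window. The narrow filter bandwidth $\sigma_E$ suppresses direct transitions from high-energy states into $S_0$, yet the Cheeger-style argument above asks for a one-step flow into $S_0$. The saving observation is that states at energy $\lambda_k \gg \lambda_\text{min} + \beta^{-1}$ have Gibbs weight exponentially small in $\beta(\lambda_k - \lambda_\text{min})$, so their aggregate contribution to $\text{Var}_\pi(f)$ is small. Making this quantitative will either require a canonical-path argument descending in $O(\beta\|\vH\|) = \polylog(n)$ energy steps, each carrying $\Omega(1)$ Metropolis weight, whose congestion is controlled by the small $\pi(S_0^c)$-mass of the routed states; or a Diaconis--Saloff-Coste comparison of $\mathcal{L}_\text{classical}$ against a mean-field reference chain supported on $S_0$. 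Either route is expected to produce the claimed $\Omega(\delta)$ bound.
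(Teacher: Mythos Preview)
Your decomposition $\mathcal{L}_\mu = \mathcal{L}_\text{classical} + \mathcal{L}_\text{dephasing}$ and the dephasing estimate are correct and match the paper. The problem is your analysis of the classical chain, which is built on a misreading of the filter.

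You write that ``the narrow filter bandwidth $\sigma_E$ suppresses direct transitions from high-energy states into $S_0$''. This is false. The classical rate from $m$ to $l$ is $\tfrac{1}{n}\alpha(\nu_{lm})$ with $\nu_{lm}=E_l-E_m$, and $\alpha=\gamma * g$ where $\gamma(\omega)=\exp\bigl(-\beta\max(\omega+\tfrac{\beta\sigma_E^2}{2},0)\bigr)$ is a shifted Metropolis filter, not a band-pass filter. Convolving with the width-$\sigma_E$ Gaussian keeps $\alpha$ monotone decreasing, so $\alpha(\nu)=\Omega(1)$ for all $\nu\le O(\beta^{-1})$; in particular every energy-\emph{lowering} jump, no matter how large, has rate $\Omega(1/n)$. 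The only transitions that are damped are energy-\emph{raising} ones, and for those $\alpha(\nu)\asymp e^{-\beta\nu}$.

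Once you see this, the elaborate variance decomposition and canonical-path routing through $S_0$ are unnecessary. The paper simply applies the length-1 canonical-path bound $\lambda_\text{gap}\ge \min_{l\neq m}\frac{\alpha(\nu_{lm})/n}{\rho_{ll}}$ and computes directly: writing $\rho_{ll}=e^{-\beta E_l}/\mathcal{Z}$ with $\mathcal{Z}/n\ge \delta(n)e^{-\beta E_{\min}-C}$, and using $\alpha(\nu_{lm})\ge \text{const}\cdot e^{-\beta\max(\nu_{lm},0)}$, one gets
\[
\frac{\alpha(\nu_{lm})/n}{\rho_{ll}}\;\gtrsim\;\delta(n)\cdot\frac{e^{-\beta\max(E_l-E_m,0)}\,e^{-\beta E_{\min}}}{e^{-\beta E_l}}\;=\;\delta(n)\cdot e^{\beta(\min(E_l,E_m)-E_{\min})}\;\ge\;\delta(n),
\]
uniformly in $l,m$. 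That is the whole classical argument. Your ``hardest part'' evaporates because the Metropolis weight in $\alpha$ already cancels the Gibbs weight in $\rho_{ll}$; no multi-step descent or comparison chain is needed.
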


The lemma expresses that if $\lambda_\text{min}$ is within $O(\beta^{-1})$ of $\delta(n)$ of the eigenvalues, the spectral gap is at least $\delta(n)$.
Similarly to Theorem~\ref{thm:constgapbounded_intro}, using this result to obtain an efficient Gibbs sampling algorithm amounts to showing that a Lindbladian with enough independently sampled jump operators shares a similar asymptotic spectral gap to the average Lindbladian, using a concentration bound. 
When the average spectral gap from the above lemma is $\delta(n)$, the number of jump operators to concentrate around the expectation increases to $\delta(n)^{-2}$, along with an overhead of $\log(n)\log(\beta \norm{\vH})^2$. This result is captured in Theorem \ref{thm:gapunbounded_intro}, and results in an algorithm with runtime $\poly(\delta(n)^{-1}, \log(n),\log(\epsilon^{-1}))$ for Gibbs sampling, where $\epsilon$ is the error in diamond distance. This runtime bound relies on the standing assumption in this paper that $\log(\beta \norm{\vH}) = \poly(\log(n))$. 

\section{Technical details}
\subsection{The quantum Gibbs sampler}
\subsubsection{Lindbladian evolution}
The recently proposed CKG quantum MCMC algorithm addresses the problem of finding thermal states by imitating natural thermodynamic processes \cite{chen2023efficient, chen2023QThermalStatePrep}. 
In this process, a system of particles evolve in contact with a thermal bath at some fixed temperature $\beta^{-1}$.
Due to the interactions with the bath, the state of the system is described by a probabilistic mixture of quantum states $\vrho$.
This state evolves in time, by approximation, with Markovian \emph{dissipative} dynamics \begin{equation}
    \frac{\rd \vrho}{\rd t} = \CL_{\beta}[\vrho],\label{eq:dissipativeEvolution}
\end{equation}
given in terms of an operator $\CL_{\beta}$ known as the Lindbladian. 
This operator involves a coherent term $\vB$ that describes the interaction among the particles in the system. 
There is also a term in $\CL_{\beta}$ that is specified by a series of \textit{Lindblad operators} $\vL^j$ that drive the dissipative transitions. 
The dynamics of the coherent term $\vB$ are reversible, while the dissipative transitions drive all states toward some ``stationary state''. 
These transitions can be understood as perturbations from the bath, and as all states converge to the Gibbs state, the information of the system is leaking via these perturbations to the bath. 
The expression, in terms of $\vB$ and $\vL^j$, is:

\[\mathcal{L}_\beta[\cdot ] = -i[\vB, \cdot] + \sum_j \left(\vL^j (\cdot) \vL^{j \dag} - \frac{1}{2}\{\vL^{j \dag} \vL^j, \cdot\}\right).\]
The summands $\vL^j (\cdot) \vL^{j\dag}$ are termed the transition part of the Lindbladian, and $- \frac{1}{2}\{\vL^{j \dag} \vL^j, \cdot\}$ are the decay part of the Lindbladian. The choice of the Lindbladian operator $\CL_{\beta}$ can vary depending on the precise nature of interactions between the system and the bath. However, to prepare the Gibbs (thermal) state at temperature $\beta^{-1}$, the Lindbladian should be designed to satisfy 
\begin{align}
\frac{\rd \vrho_{\beta}}{\rd t} = \CL_{\beta}[\vrho_{\beta}] = 0 \quad\text{where}\quad \vrho_\beta := \e^{-\beta \vH}/\tr(\e^{-\beta \vH}),\label{eq:main_Gibbs_fixed_point}
\end{align}
and moreover $\vrho_\beta$ should be the unique stationary state of the Lindbladian. The long-term evolution of the system under this Lindbladian, as a result, would converge to the Gibbs state of the Hamiltonian $\vH$ at temperature $1/\beta$. 
\subsubsection{Detailed balance}
To ensure that the Lindbladian $\mathcal{L}_\beta$ converges to a state $\vrho_\beta$,  \cite{chen2023efficient} designs a Lindbladian that satisfies Kubo-Martin-Schwinger (KMS) detailed balance with respect to $\vrho_\beta$. 
KMS detailed balance is one of several ways of quantizing the notion of classical detailed balance for Markov chains.
KMS detailed balance of $\mathcal{L}_\beta$ is self-adjointness with respect to the inner product 
\begin{equation}
\langle \sigma_1, \sigma_2 \rangle_{\rho_\beta^{-1}} = \tr(\sigma_1^\dag \vrho_\beta^{-1/2}\sigma_2\vrho_\beta^{-1/2}) \tag{KMS Inner Product}.
\end{equation}
In particular, it is equivalent to the relation that 

\begin{equation}
\mathcal{L}_\beta[\cdot] = \vrho_\beta^{1/2}\mathcal{L}_\beta^\dag\left[\vrho_\beta^{-1/2} (\cdot) \vrho_\beta^{-1/2}\right]\vrho_\beta^{1/2}\tag{Detailed Balance}\label{detbalance}
\end{equation}
where $\mathcal{L}_\beta^\dag$ is the adjoint Lindbladian with respect to the Hilbert-Schmidt inner product $\langle \sigma_1, \sigma_2 \rangle = \tr(\sigma_1^\dag \sigma_2)$.
The adjoint operator $\mathcal{L}_\beta^\dag$, in the Heisenberg picture, describes the dynamics of observables under evolution by  $\mathcal{L}_\beta$.  
The Lindbladian evolution is described by some quantum channel and therefore the observable $I$ must always be fixed by $\exp(\mathcal{L}_\beta^\dag)$. 
This implies that $\mathcal{L}_\beta^\dag[I] = 0$. 
The detailed balance formula thereby implies that $\mathcal{L}_\beta[\vrho_\beta] = 0$, as desired. 
Note that KMS detailed balance can be dually described as the self-adjointness of $\mathcal{L}_\beta^\dag$ with respect to the inner product $\langle \sigma_1, \sigma_2 \rangle_{\rho_\beta} = \tr(\sigma_1^\dag \vrho_\beta^{1/2}\sigma_2\vrho_\beta^{1/2})$.

\subsubsection{Construction and parameters}\label{sec:CKGdetails}
The quantum Gibbs sampler in \cite{chen2023efficient} constructs a Lindbladian that satisfies the following two properties:

\begin{enumerate}
\item $\mathcal{L}_\beta$ satisfies detailed balance with respect to $\vrho_\beta$, and therefore $\mathcal{L}_\beta[\vrho_\beta] = 0$. 
\item The dynamics of $\mathcal{L}_\beta$ can be efficiently implemented. 
\end{enumerate}

Their Lindbladian, which we term the CKG Lindbladian, can be simulated on a quantum computer with a cost per unit time $t=1$ roughly equal to that of simulating the Hamiltonian dynamics of $\vH$. 
The CKG Lindbladian is closely related to the Davies generator, which is a physically motivated Lindbladian that satisfies detailed balance, but that is not efficiently implementable in general. 
A full description of both Lindbladians are given in the \hyperref[sec:appendix]{Appendix}. 

The Gibbs sampling algorithm evolves an initial state $\vrho_0$ according to the efficiently implemented Lindbladian $\CL_{\beta}$, and produces the state $$\vrho_t = e^{\CL_{\beta} t}[\vrho_0]$$ after time period $t$.
The \emph{mixing time} is roughly the time that it takes for the state $\vrho_t$ to approach the Gibbs state $\vrho_{\beta}$. That is, 
\begin{align}
    \e^{\CL_{\beta}t_{\text{mix}}}[\vrho_0] \approx \vrho_{\beta}.
\end{align}
The efficiency of the algorithm therefore scales linearly with the unit time simulation cost and the mixing time.
The algorithm has several parameters in the Lindbladian's construction. 
In addition to the \textit{inverse temperature} $\beta$, the algorithm specifies an \textit{energy resolution} $\sigma_E$. 
A salient feature of \cite{chen2023efficient}'s construction is that it can achieve detailed balance even though the algorithm only probes the energies of the Hamiltonian $\vH$ with approximate precision. 
$\sigma_E$ quantifies this level of precision.  
The cost of the Lindbladian simulation depends linearly on $\sigma_E^{-1}$, but increasing the precision may also improve the mixing time. 
Taking $\sigma_E \rightarrow 0$ for absolute precision recovers the Davies generator—when distinguishing the energies of the system exactly is infeasible, this Lindbladian cannot be simulated efficiently.

A set of \textit{jump operators} $\vA^a$ must also be specified for the Lindbladian. 
These operators are decomposed by frequency and reassembled in a particular way to construct the Lindblad operators that help $\mathcal{L}_\beta$ satisfy detailed balance. 
They must appear in adjoint pairs: i.e., if $\vA \in \{\vA^a\}$, then $\vA^\dag \in \{\vA^a\}$. 
The cost of simulation scales with the cost of implementing the oracle $\ket{a} \rightarrow \ket{a} \otimes \vA^a$. 
In particular, the jump operators must be normalized when implemented for the algorithm, satisfying $\sum_a \norm{\vA^{a\dag}\vA^a} \leq 1$. 
CKG Lindbladians are linear in their jump operators—if $\mathcal{L}_1$ has one jump operator $\vA^1$ and $\mathcal{L}_2$ has one jump operator $\vA^2$, then a Lindbladian $\mathcal{L}$ with jump operators $\vA^1$ and $\vA^2$ satisfies $\mathcal{L} = \mathcal{L}_1 + \mathcal{L}_2$.
If $\mathcal{L}_\beta$ was constructed from jumps $\vA^a$, then jump operators $\sqrt{s}\vA^a$ produce the Lindbladian $s \mathcal{L}_\beta$, scaling the mixing time by $s$. 
So we may therefore assume that $\sum_a \norm{\vA^{a\dag}\vA^a} = 1$ exactly, since renormalizing can only improve the spectral gap.
In its normalized form, the set of jump operators can be understood as a jumping distribution over $\vA^a$ which we will notate $a \sim \mathcal{A}$, where each is sampled with probability $\norm{\vA^{a\dag}\vA^a}$.

\subsection{Spectral gap}
Since the quantum MCMC algorithm was proposed recently, numerical and analytic characterizations of algorithm are limited. 
As for classical Markov chains, it has been shown that the mixing time of the algorithm can be characterized by the spectral gap $\lambda_{\text{gap}}(\CL_{\beta})$ of the Lindbladian. 
If the first eigenvalue $\lambda_1=0$ corresponds to eigenvector $\rho_\beta$, then the spectral gap is $\lambda_{\text{gap}}(\CL_{\beta}) = \min_{j>1} |\lambda_j|$ \cite{chen2023efficient}.
Lindbladians are in general negative semidefinite like classical Markov chain generators, so $\lambda_\text{gap}(\CL_{\beta})= \min_j (-\lambda_\text{j})$.
More precisely, it holds that
    \begin{align}
        \frac{\Omega(1)}{\lambda_{\text{gap}}(\CL_\beta)} \le t_{\text{mix}}(\CL)\le \frac{\log\left(\left\Vert\rho_\beta^{-1/2}\right\Vert\right)}{\lambda_{\text{gap}}(\CL_\beta)} \le \frac{\mathcal{O}(\beta \norm{\vH}+\log(\text{dim}(\vH)))}{\lambda_{\text{gap}}(\CL_\beta)}.
    \end{align}

In particular, analytically bounding this spectral gap from below is sufficient for obtaining an upper bound on the~mixing~time.
For so-called rapid mixing, in which the mixing time is logarithmic in the number of qubits, the spectral gap bound often does not suffice. 
For our purposes of proving efficiency in the number of qubits, however, this issue is moot.

\subsection{Unbounded degree systems}

In this section, we further explore and prove efficient Gibbs sampling results for certain classes of unbounded degree sparse Hamiltonians.

\subsubsection{Random \textbf{log}$\mathbf{(n)}$-regular graphs}\label{sec:randomRegular}
With high probability at constant temperature, a randomly selected $d=\log(n)$-regular graph, with $\poly(d)$ random 1-design jumps, has a Lindbladian spectral gap of $\Omega(d^{-3/4})$. 
This gives a polynomial algorithm to prepare the Gibbs state for most such graphs at constant temperature.

The gap of $\Omega(d^{-3/4})$ arises because a random $d$-regular graph, for $d \to \infty$, has one eigenvalue at $d$ and the rest distributed from $-2\sqrt{d-1}$ to $2\sqrt{d-1}$ in a distribution that converges to a (normalized) semicircle.
This semicircular distribution frequently appears in random matrix theory, for instance in the Gaussian unitary ensemble (GUE), which models the spectrum of many chaotic quantum systems. 
When the spectrum of a quantum system indeed follows this distribution, it implies that $\delta(n) = \Omega(d^{-3/4})$ of the eigenvalues lie within a constant of the minimum eigenvalue. 

\begin{thm}\label{thm:regspecstat}
With any constant probability $1-\xi$, for a randomly selected $d=\log(n)$-regular graph, there are $\delta = \Omega(d^{-3/4})$ eigenvalues within $O(1)$ of the minimum eigenvalue. 
\end{thm}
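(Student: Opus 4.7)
The plan is to combine three ingredients about random $d$-regular graphs for $d=\log(n)$: convergence of the bulk spectrum to the Wigner semicircle law, concentration of the extremal eigenvalue at the spectral edge $-2\sqrt{d-1}$, and an explicit computation of the semicircle mass in a window of constant length near that edge. Write $R := 2\sqrt{d-1}$ throughout.

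First, I would recall that, excluding the Perron eigenvalue at $d$, the empirical spectral distribution of a random $d$-regular graph, rescaled by $\sqrt{d-1}$, converges to the Wigner semicircle law on $[-2,2]$. In the regime $d=\log(n)\to\infty$ this follows from the Kesten--McKay law together with its quantitative/local versions (e.g.\ Dumitriu--Pal; Bauerschmidt--Huang--Knowles--Yau). In parallel, Friedman's theorem (with its extensions to slowly growing degree) ensures that with high probability $\lambda_{\text{min}} = -R - o(1)$, so any constant-length window $[\lambda_{\text{min}}, \lambda_{\text{min}} + C]$ sits inside $[-R, -R+2C]$ for $n$ large enough.

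Next, I would compute the semicircle mass near the edge. With density $\mu_{\text{sc}}(x)=\frac{2}{\pi R^2}\sqrt{R^2-x^2}$, substituting $s = R - |x|$ and using $R^2 - x^2 = 2Rs - s^2 \approx 2Rs$ for $s$ in a bounded range gives
\[
\int_{-R}^{-R+C} \mu_{\text{sc}}(x)\,\rd x \;=\; \int_0^{C} \frac{2}{\pi R^2}\sqrt{2Rs - s^2}\,\rd s \;=\; \Theta\!\left(R^{-3/2}\right) \;=\; \Theta\!\left(d^{-3/4}\right).
\]
Invoking the local semicircle law, the number of eigenvalues in this window concentrates around $n$ times this mass, producing $\Omega(n \cdot d^{-3/4})$ eigenvalues within $O(1)$ of $\lambda_{\text{min}}$ with probability at least $1-\xi$. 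Since the expected count is $\Omega(n/\polylog(n)) \to \infty$, a second-moment/Chebyshev bound on the counting function is already enough to convert the expectation estimate into a constant-probability lower bound; no concentration stronger than $o(1)$ deviation is needed.

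The main obstacle is invoking a local-law-type estimate in the specific regime $d = \log(n)$, where the usual local laws for random regular graphs require adaptation to this moderately growing degree. Fortunately, the target window has constant width rather than microscopic scale $n^{-\alpha}$, and the expected count $n\cdot d^{-3/4}$ is polynomially large in $n$, so the delicate edge-universality machinery is unnecessary; any of the quantitative Kesten--McKay results controlling the counting function on intervals of length $\Omega(1)$ suffices to close the argument.
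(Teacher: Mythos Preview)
Your proposal is correct and follows essentially the same route as the paper: compute the semicircle mass $\Theta(d^{-3/4})$ in a constant-width window at the spectral edge $-2\sqrt{d-1}$, then invoke a quantitative local semicircle law for sparse random regular graphs in the regime $d=\log(n)$ to transfer this to the actual eigenvalue count with high probability. The paper makes this precise by citing a specific local-law theorem (with explicit parameters $\alpha=\tfrac{11}{12}$, $\epsilon=\tfrac{1}{3}$, giving an error $O(d^{-5/6})=o(d^{-3/4})$) rather than separately appealing to Friedman's theorem or a second-moment argument.
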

As an immediate result of Theorem \ref{thm:regspecstat} and Theorem \ref{thm:gapunbounded_intro}, we obtain an algorithm polynomial in $d$ to prepare the Gibbs state of a $d$-regular graph. 
To prove the corollary, we use Theorem 2 in \cite{sparsereggraphs}. For this context, the following statement suffices. 
\begin{thm}\label{thm:sparseregthm}
Let degree $d = \log(n)$. For sufficiently large $n$, there exists some $D>0$ such that for any interval $I \subset \mathbb{R}$, $0<\alpha <1$, and $0<\epsilon<\alpha$ such that $|I|> Dd^{-\alpha+\epsilon}$, with probability $1-o(n^{-1})$ over all random $d$-regular graphs,
\[\left|\delta(n) - \mu \right| < d^{-\epsilon}|I|,\]
where $\mu = \int_I \rho_{sc}(x)dx$ and $\delta(n)$ is the fraction of eigenvalues of the $d$-regular graph in $I\sqrt{d-1}$. 
\end{thm}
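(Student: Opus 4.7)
The plan is to prove a local semicircle law for the rescaled adjacency matrix $\vH/\sqrt{d-1}$ of a uniformly random $d$-regular graph on $n$ vertices, with $d=\log n$. The conclusion is exactly of Stieltjes-transform type: we need control of the empirical spectral measure on windows whose width is slightly above the scale $d^{-\alpha}$, with a vanishing relative error $d^{-\epsilon}$.

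First, I would introduce the Stieltjes transform
\[
m_n(z) \;=\; \tfrac{1}{n}\tr\!\left(\tfrac{\vH}{\sqrt{d-1}} - z\right)^{-1}, \qquad z = E + i\eta,\ \eta > 0,
\]
together with the semicircle Stieltjes transform $m_{sc}(z)$, the unique solution in the upper half-plane of $m_{sc}^2 + z\,m_{sc} + 1 = 0$. The desired conclusion about $\delta(n) = \mu_n(I\sqrt{d-1})$ (where $\mu_n$ is the empirical spectral measure) reduces, via a Helffer--Sj\"ostrand contour representation of $\mu_n(I) - \mu_{sc}(I)$ in terms of $\Im m_n - \Im m_{sc}$, to the pointwise bound
\[
|m_n(z) - m_{sc}(z)| \;\lesssim\; d^{-\epsilon} \quad\text{for } \eta \asymp d^{-\alpha+\epsilon},
\]
holding with probability $1 - o(n^{-1})$ uniformly for $E$ in a compact set.

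The core analytic step is a self-consistent equation for the diagonal resolvent $G_{ii}(z) = \bigl((\vH/\sqrt{d-1} - z)^{-1}\bigr)_{ii}$. Applying the Schur complement at vertex $i$ and using that a typical neighborhood in a random $d$-regular graph converges to the infinite $d$-regular tree (on which $m_{sc}$ is \emph{exactly} the recursive fixed point), one obtains
\[
G_{ii}(z) \;=\; -\frac{1}{z + m_{sc}(z)} \;+\; \mathcal{E}_i(z),
\]
where $\mathcal{E}_i(z)$ is built from a quadratic form in the resolvent $G^{(i)}$ of the graph with vertex $i$ deleted, contracted against the random indicator vector of the neighbors of $i$. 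Averaging over $i$ and using stability of the self-consistent equation then yields $m_n(z) = m_{sc}(z) + \text{(fluctuation)}$; the task is to control the fluctuation term.

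The main obstacle, and the source of the $d^{-\alpha+\epsilon}$ threshold, is controlling this fluctuation in the sparse regime. Classical Wigner-type large-deviation bounds do not apply: adjacency entries are neither independent (the $d$-regular constraint is global) nor have the favorable variance structure of a Gaussian ensemble. I would overcome this by combining two ingredients drawn from the proof of Theorem~2 of \cite{sparsereggraphs}: (i) the \emph{switching method}, which implements a near-uniform perturbation of the random graph by swapping edge pairs and effectively decouples the neighborhood of a vertex from the rest of the graph, enabling a conditional large-deviation bound on $\mathbf{h}^\top G^{(i)} \mathbf{h}$; and (ii) a \emph{bootstrap in $\eta$}, starting from the trivial bound $|m_n - m_{sc}| = o(1)$ at the global scale $\eta \asymp 1$ and iteratively descending by factors of, say, $d^{-\epsilon/2}$, at each step using Schur self-consistency and the switching-based concentration estimate to upgrade the bound. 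The bootstrap terminates precisely when $\eta$ reaches $d^{-\alpha+\epsilon}$, which is where the remaining fluctuation becomes of order $d^{-\epsilon}$.

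Finally, once the pointwise bound is established with probability $1 - o(n^{-1})$ on a $\poly(n)$-dense grid of $z$-values, a union bound together with the $O(\eta^{-2})$ Lipschitz continuity of $m_n$ and $m_{sc}$ extends it to the whole strip $\eta \asymp d^{-\alpha+\epsilon}$. Converting this back to eigenvalue counts via the Helffer--Sj\"ostrand formula, with a cutoff scale $\eta$ slightly below $|I|/\sqrt{d-1}$, yields $|\delta(n) - \mu| < d^{-\epsilon}|I|$ as claimed. The hard part is the switching/bootstrap combination; everything else is standard Stieltjes-transform calculus.
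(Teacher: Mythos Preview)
The paper does not prove this theorem at all: it is quoted verbatim as Theorem~2 of \cite{sparsereggraphs} and used as a black box to deduce Theorem~\ref{thm:regspecstat}. So there is no ``paper's own proof'' to compare against; the authors simply import the result.

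Your sketch is a reasonable outline of how local semicircle laws for sparse random regular graphs are actually established in the literature (Stieltjes transform, Schur complement self-consistency, switching-based concentration in place of independence, bootstrap in $\eta$, Helffer--Sj\"ostrand to pass from resolvent bounds to eigenvalue counts). That said, in the context of this paper your effort is misplaced: the statement is an external input, and reproducing its proof here would be both out of scope and far longer than the paper itself. If anything, one could note that your sketch suppresses the genuinely delicate points---handling the deterministic degree constraint via switchings with quantitative error, and controlling the stability factor near the spectral edge where $|m_{sc}'|$ blows up---but since the paper never attempts these steps either, this is a comment on the cited reference rather than on the present work.
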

In the above, $\rho_{sc}$ is the asymptotic distribution as $d \to \infty$ of a random $d$-regular graph is the semicircular distribution with radius $2\sqrt{d-1}$:
\[\rho_{sc}(x) = \frac{1}{2\pi (d-1)}\sqrt{4(d-1) - x^2}.\]

\noindent We may now prove Theorem \ref{thm:regspecstat}:

\begin{proof}
Writing $T = -2\sqrt{d-1}$, we calculate the mass of $\rho_{sc}$ from $T$ to $T+C$:
\begin{align*}
\mu = \int_{T}^{T+C} \rho_{sc}(x) dx &\leq \int_{T}^{T+C} \frac{\sqrt{4(d-1) - x^2}}{2\pi (d-1)}  dx\\
&\leq \int_{T}^{T+C} \frac{\sqrt{4(d-1) - x^2}}{2\pi (d-1)} dx\\
&= \Omega(d^{-3/4}).
\end{align*}
The $C$ above is some positive constant, for which we now try to bound the fraction of eigenvalues $\delta(n)$ within $C$ of the minimum eigenvalue. 
The final equality holds because the first $\epsilon$ fraction of a semicircle has mass $\sim \epsilon^{3/2}$, and in this case the circle has radius $2\sqrt{d-1}$, and we integrate a fraction of size $\frac{C}{2\sqrt{d-1}} = \Theta(d^{-1/2})$. 

We now apply Theorem \ref{thm:sparseregthm} for the interval $I = [-2, -2 + \frac{C}{2\sqrt{d-1}}]$ choosing $\alpha = \frac{11}{12}$ and $\epsilon = \frac{1}{3}$. 
Since the length of $I$ is $\frac{C}{2\sqrt{d-1}}$, we have that 
\[|I| > Dd^{-\alpha + \epsilon}\log(d) = Dd^{ - 7/12}\log(d)\] 
for some $D>0$ and sufficiently large $d$.
The theorem therefore applies that with probability $1 - o(n^{-1})$, implying that $|\delta(n) -\mu| < d^{-1/3}|I| = O(d^{-5/6}) = o(d^{-3/4})$.
Since $\mu = \Omega(d^{-3/4})$, we conclude that $\delta(n) = \Omega(d^{-3/4})$. 
\end{proof}
\subsubsection{Pauli String Ensemble}\label{sec:randomPauli}
We now mention another ensemble of Hamiltonians studied by \cite{chen2023sparse} in the context of low-energy state preparation. 
 
In \cite{chen2023sparse}, efficient low energy state preparation with phase estimation is demonstrated under the same conditions as our efficient Gibbs sampling in Theorem~\ref{thm:gapunbounded_intro}.
Indeed, if many eigenvectors are close to the ground state energy, as we require, then performing phase estimation on the maximally mixed state has a high probability of measuring a low-energy state, so low-energy state preparation is possible as well. 
They study the following ensemble of Hamiltonians on $n_0$ qubits:

\[\vH_{PS} = \sum_{j=1}^m \frac{r_j}{\sqrt{m}} \vsigma_j\]
where $\vsigma$ is a random Pauli string on $n_0$ qubits, each $r_j$ is sampled randomly from $\{-1, 1\}$, and $m = \floor{c_2\frac{n_0^5}{\epsilon^4}}$. The parameter $\epsilon$ satisfies $\epsilon \geq 2^{-n_0/c_1}$, and $c_1, c_2$ are absolute constants.
The resulting spectrum is again close enough to a semicircular distribution to obtain an efficient Gibbs sampler for certain temperatures that depend on $\epsilon$.
As $\epsilon$ decreases, Gibbs sampling becomes efficient for even larger values of $\beta$ (lower temperatures), since the ensemble's spectrum converges closer to a perfect semicircular distribution at the edge of the spectrum. 

Using the results in their paper, we establish that Gibbs sampling is efficient in $n_0$ for certain values of $\epsilon$ and corresponding temperatures $\beta^{-1}$. 
\begin{thm}
Say that $\vH(n_0)$ is sampled from the ensemble $\vH_{PS}$ on $n_0$ qubits, with $\epsilon = 2^{-o(n_0)}$ and $\epsilon \leq 1$.
With any constant probability $1-\xi$, for sufficiently large $n_0$, $\delta = \Omega(\epsilon^{3/2})$ fraction of the eigenvalues lie within $O(\epsilon)$ of the minimum eigenvalue. 
\end{thm}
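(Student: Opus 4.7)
The plan is to mimic the argument given for random $\log(n)$-regular graphs in the proof of Theorem~\ref{thm:regspecstat}, with the Kesten--McKay distribution replaced by the Wigner semicircle and the concentration theorem from \cite{sparsereggraphs} replaced by the analogous spectral convergence theorem proved in \cite{chen2023sparse}. The key structural fact is that, for the chosen scaling $m = \Theta(n_0^5/\epsilon^4)$ and for $\epsilon \geq 2^{-o(n_0)}$, the paper \cite{chen2023sparse} establishes that the empirical spectral measure of $\vH_{PS}$ is close to the standard semicircle $\rho_{sc}(x) = \tfrac{1}{2\pi}\sqrt{4-x^2}\,\mathbbm{1}_{[-2,2]}(x)$ in a quantitative sense where $\epsilon$ plays the role of the approximation parameter; in particular, with constant probability, for every interval $I\subseteq[-2,2]$ of length at least $\Omega(\epsilon)$ the empirical mass in $I$ deviates from $\int_I \rho_{sc}$ by at most $o(\int_I \rho_{sc})$, and the smallest eigenvalue lies within $O(\epsilon)$ of $-2$.

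Assuming this spectral input, the bulk of the argument is an edge computation for the semicircle. Setting $T=-2$ and integrating over a window of width $C\epsilon$ at the left edge, one substitutes $x=T+y$ and uses $\sqrt{4-(y-2)^2}=\sqrt{y(4-y)}$ to obtain
\begin{align*}
\mu := \int_T^{T+C\epsilon}\rho_{sc}(x)\,dx
&= \frac{1}{2\pi}\int_0^{C\epsilon}\sqrt{y(4-y)}\,dy \\
&\geq \frac{1}{2\pi}\int_0^{C\epsilon}\sqrt{3y}\,dy \;=\;\Omega(\epsilon^{3/2}).
\end{align*}
This is the same edge-scaling phenomenon that gave $\Omega(d^{-3/4})$ in the regular-graph case: the first $\epsilon$-fraction of a semicircle of constant radius carries $\Theta(\epsilon^{3/2})$ mass.

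To conclude, I would take $I$ to be an interval of length $\Theta(\epsilon)$ starting at the actual $\lambda_{\min}(\vH_{PS})$ (shifted by the $O(\epsilon)$ fluctuation of the edge, so that $I$ still sits within $[-2,-2+O(\epsilon)]$ up to a translation absorbed by the constants). The spectral-approximation result from \cite{chen2023sparse} then gives that the empirical mass of $I$ is $\mu \pm o(\mu) = \Omega(\epsilon^{3/2})$, which is exactly the claim that a $\delta = \Omega(\epsilon^{3/2})$ fraction of eigenvalues lie within $O(\epsilon)$ of $\lambda_{\min}$.

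The main obstacle is bookkeeping the two error sources against each other: the semicircle edge fluctuations of $\lambda_{\min}(\vH_{PS})$ and the empirical-measure error on intervals of size $O(\epsilon)$ must both be controlled to be $o(\epsilon^{3/2})$, which is why the regime $\epsilon \geq 2^{-o(n_0)}$ and the quantitative strength of the result in \cite{chen2023sparse} are essential; any weaker edge control would wash out the $\epsilon^{3/2}$ signal and the theorem would fail. The remaining manipulations are routine once those approximation bounds are in hand.
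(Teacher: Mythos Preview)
Your approach is more circuitous than what the paper actually does. The paper's proof is essentially two black-box citations from \cite{chen2023sparse}: first, the operator norm bound $\Pr[\norm{\vH_{PS}} \ge 2(1+\epsilon)] \le \exp(-c_2 n_0)$, which gives $\norm{\vH_{PS}} \le 4$ with high probability; and second, a result stating directly that with probability $1-\exp(-c_3 n_0^{1/3})$ at least an $\Omega(\epsilon^{3/2})$ fraction of the eigenvalues satisfy $\lambda_i \le (1-\epsilon)\lambda_{\min}$. Combining these two facts gives $|\lambda_i - \lambda_{\min}| \le \epsilon|\lambda_{\min}| \le 4\epsilon$ for that fraction, and the proof is over. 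No semicircle edge integral is ever computed because the $\epsilon^{3/2}$ is already baked into the cited result.

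Your route---mirroring the random regular graph argument by integrating the semicircle over an $O(\epsilon)$ edge window and then invoking an empirical-measure approximation---is conceptually reasonable, but it rests on an input you have not verified: that \cite{chen2023sparse} proves a local law of the specific strength you need, namely relative error $o(1)$ for the empirical measure on edge intervals of length $\Theta(\epsilon)$ together with edge rigidity $\lambda_{\min} = -2 + O(\epsilon)$. You flag this yourself as the ``main obstacle,'' and it is a real one: edge-scale local laws with multiplicative error control are substantially harder than bulk statements, and if the available result only gives additive error of order $\epsilon$ (or even $\epsilon^{3/2}$ without the little-$o$) your argument collapses. The paper avoids this delicacy entirely by citing a result that has already done the edge work and packaged the answer as $\Omega(\epsilon^{3/2})$.
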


\begin{proof}
We utilize two results from \cite{chen2023sparse}. 
Firstly, they argue that $\Pr[\norm{\vH(n_0)} \geq 2(1+\epsilon)] \leq \exp(-c_2 n_0)$ when $m \geq \frac{n_0^3}{\epsilon^4}$, which is satisfied in this case. 
With an arbitrary constant probability for sufficiently large $n_0$, therefore, $\norm{\vH(n_0)} \leq 4$, since $\epsilon \leq 1$. 
The second result is that with probability $1 - \exp(-c_3 n_0^{1/3})$, at least $\Omega(\epsilon^{3/2})$ of the eigenvalues satisfy $\lambda_i \leq (1-\epsilon)\lambda_\text{min}$ where $c_3$ is an absolute constant. 
With any large constant probability, we therefore have that
\[|\lambda_i - \lambda_\text{min}| \leq \epsilon \lambda_\text{min} \leq 4\epsilon = O(\epsilon)\]
for $\Omega(\epsilon^{-3/2})$ of the eigenvalues. 
\end{proof}

By Theorem $\ref{thm:gapunbounded_intro}$, we obtain a Gibbs sampling algorithm that is $\poly(\epsilon^{-1}, n_0)$ to prepare the Gibbs state at inverse temperature $\epsilon^{-1}$. 
We may rephrase this result in terms of $\beta$. 
For any polynomially large $\beta$, it provides a Pauli string ensemble of Hamiltonians, $H_{PS}$ with $\epsilon = \beta^{-1}$, for which with high likelihood preparing the Gibbs state is efficient in $n_0$, assuming access to a block-encoding of the Hamiltonian of interest. 

\subsubsection{Hypercube graphs}
For hypercube with varying dimension at a constant temperature, using unitary 1-design jumps would yield an exponentially large runtime. 
The spectrum of a hypercube with dimension $d$ and $2^d$ vertices consists of the integers $-d, -d+2\dots, d-2, d$. The eigenvalue $j$ has multiplicity $\binom{d}{\frac{d+j}{2}}$. 
In particular, for any constant $C$, only an exponentially small fraction of the eigenvalues $\delta(d)$ lie below $-d + C$. 
This leads to a naive algorithm with at worst exponential complexity in $d$. 

However, a better result can be obtained by considering the hypercube as a system of $d$ qubits. 
The graph with dimension $d$ has $2^d$ vertices, which can be considered length $d$ bitstrings.
With this representation, the adjacency matrix is then the sum of Pauli $X$ operators on each qubit, $\sum_{i=1}^d X_i$, since the hypercube has an edge between any two bitstrings of Hamming distance 1. 
Choosing $d$ jump operators as $\frac{1}{\sqrt{d}}Z_a$, the mixing time can therefore be improved to $\poly(d, \log(\epsilon^{-1}))$:

\begin{thm}[Spectral Gap for Hypercube with Local Jumps]\label{thm:hypercubelocalgap}
For fixed $\beta^{-1}$, there exists some energy resolution $\sigma_E$ such that the spectral gap of the CKG Lindbladian $\mathcal{L}_\beta$ for a $d$-dimensional hypercube with jump operators $\vA^a = \frac{1}{\sqrt{d}}Z_a$, is asymptotically $\Omega(d^{-1})$.
\end{thm}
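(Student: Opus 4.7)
The plan is to exploit the tensor-product structure of $\vH=\sum_{i=1}^d X_i$ to reduce the $d$-qubit CKG Lindbladian to a sum of commuting single-qubit Lindbladians, and then invoke a constant-gap bound for the one-qubit problem. First, the Gibbs state factorizes as $\vrho_\beta=\bigotimes_{i=1}^d \vrho_\beta^{(i)}$, since the $X_i$ mutually commute. More importantly, for each jump $\vA^a=\tfrac{1}{\sqrt{d}}Z_a$ we have $[Z_a,X_i]=0$ whenever $i\neq a$, so
\begin{equation*}
    e^{\mathrm{i}\vH t}\,Z_a\,e^{-\mathrm{i}\vH t}\;=\;e^{\mathrm{i}X_a t}\,Z_a\,e^{-\mathrm{i}X_a t}.
\end{equation*}
Consequently the frequency (Fourier) decomposition of $Z_a$ with respect to $\vH$ involves only the raising and lowering operators of qubit $a$ in the $X_a$ eigenbasis (tensored with identity on the remaining qubits), carrying frequencies $\pm 2$.

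Plugging these frequency components into the CKG construction of the Lindblad operators $\vL^j$ and the coherent correction $\vB$---both of which are built from the frequency components weighted by functions depending only on the frequency and $\sigma_E$---shows that the Lindbladian contribution from jump $\vA^a$ equals $\tfrac{1}{d}\,\tilde{\mathcal L}^{(a)}\otimes\mathrm{id}_{-a}$, where $\tilde{\mathcal L}^{(a)}$ is the single-qubit CKG Lindbladian on qubit $a$ built from Hamiltonian $X_a$ and un-normalized jump $Z_a$. Summing over the $d$ jumps yields
\begin{equation*}
    \mathcal L_\beta \;=\; \frac{1}{d}\sum_{a=1}^d \tilde{\mathcal L}^{(a)},
\end{equation*}
a sum of $d$ superoperators that act on disjoint qubits and hence commute pairwise.

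Because the $\tilde{\mathcal L}^{(a)}$ commute, they can be simultaneously block-diagonalized, and the spectrum of $\mathcal L_\beta$ consists of averages $\tfrac{1}{d}\sum_a \mu_a$ with each $\mu_a$ an eigenvalue of $\tilde{\mathcal L}^{(a)}$. By the detailed-balance property established in the CKG construction, each $\tilde{\mathcal L}^{(a)}$ is negative semidefinite (up to KMS rescaling) with a one-dimensional kernel spanned by $\vrho_\beta^{(a)}$, so the kernel of $\mathcal L_\beta$ is spanned by $\vrho_\beta$ and
\begin{equation*}
    \lambda_{\text{gap}}(\mathcal L_\beta) \;=\; \frac{1}{d}\,\lambda_{\text{gap}}(\tilde{\mathcal L}),
\end{equation*}
where $\tilde{\mathcal L}$ denotes any single copy (all $d$ are unitarily equivalent). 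It remains to show that, for some constant $\sigma_E$, the single-qubit Lindbladian $\tilde{\mathcal L}$ with $\vH=X$ and jump $Z$ has $\lambda_{\text{gap}}(\tilde{\mathcal L})=\Omega(1)$. This is a completely explicit $4\times 4$ superoperator: in the $X$-eigenbasis it decomposes into a two-state Metropolis-type classical Markov chain on the populations (with transition rates $\gamma(\pm 2)$ coming from the CKG weights) and two independently decaying coherences, and all three rates are manifestly $\Omega(1)$ when $\beta$ and $\sigma_E$ are constants.

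The main obstacle is the first reduction: verifying, from the explicit formulas for the CKG Lindbladian reviewed in Section~\ref{sec:CKGdetails} and Appendix~\ref{sec:CKGdetails2}, that \emph{every} ingredient---the Fourier-weighted jump operators, the double operator integrals defining $\vB$, and the Gaussian envelopes tied to $\sigma_E$---acts as the identity on every qubit $i\neq a$. This follows morally from $[Z_a,X_i]=0$ for $i\neq a$, but it must be tracked carefully through the definitions to rule out any spurious cross-qubit terms. Once the single-qubit factorization is in hand, the commuting-generators combinatorics and the explicit single-qubit gap calculation are routine.
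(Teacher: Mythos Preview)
Your proposal is correct and follows essentially the same route as the paper: both factor $\mathcal{L}_\beta$ into a sum of commuting single-qubit CKG Lindbladians via the tensor structure (the paper isolates this as a lemma and handles the coherent term $\vB$ through its uniqueness), and then reduce the gap to an explicit $4\times 4$ single-qubit computation. One small correction to your sketch: in the CKG (finite $\sigma_E$) case the two coherences $\ketbra{0}{1}$ and $\ketbra{1}{0}$ are not independently decaying but are coupled by a $\theta(2,-2)$ off-diagonal term, which is precisely why $\sigma_E$ must be taken sufficiently small to guarantee a positive gap in that $2\times 2$ block.
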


In the case of the hypercube, the local jump operators $\frac{1}{\sqrt{d}}Z_i$ only jump between eigenstates whose eigenvalues differ by 1. 
This vastly improves the performance of the classical Markov chain and dephasing Markov chain within the Lindbladian.
However, the Lindbladian does not consist only of these two terms, as it did in the limit of independently sampled 1-design jumps. Off-diagonal terms do exist, and the presence of $Z_a$ for \textit{every} index is necessary to ensure that these off-diagonal terms do not completely eliminate the spectral gap. 
In some way, there must be ``enough uncorrelated'' local energy jumps to dampen these off-diagonal terms.
For more complicated local Hamiltonians, it is not clear what the corresponding constraint the jump operators might be to guarantee fast mixing. 
However, in general the choice of jump operators for local Hamiltonians seems to include a compromise between locality and suppression of correlations.

\section{Connection to previous work}
Our results show that for a Hamiltonian $\vH$ with temperature $\beta^{-1}$ such that some $\delta(n)$ fraction of the eigenstates are within $O(\beta^{-1})$ of the ground state energy, the CKG quantum Gibbs sampler with 1-design jumps efficiently prepares the Gibbs state with diamond distance error at most $\epsilon$. 
The running time scales polynomially with $\delta(n)^{-1}$, $\beta \norm{\vH}$, $\log(\epsilon^{-1})$, and the complexity of the block encoding of $\vH$.

This result is a baseline test that shows the CKG algorithm performs as well as other methods for preparing low-energy states of Hamiltonians. 
Indeed our spectral condition is precisely the same as a condition that ensures easy quantum phase estimation of a near-ground state. 
Namely, performing quantum phase estimation on the maximally mixed state can prepare a random eigenstate with its measured energy, and with probability $\delta(n)$ it must be within $O(\beta^{-1})$ of the minimum eigenvalue.
Obtaining $O(\delta(n)^{-1})$ samples and taking the minimum energy can therefore prepare a near ground-state eigenvector. This approach is the basis of the previous analysis of random sparse Hamiltonians in~\cite{chen2023sparse}.

Moreover, in \cite{chowdhury2016gibbssampling}, a quantum algorithm is presented that prepares the Gibbs state with a complexity that scales as $\poly(\frac{n}{\mathcal{Z}(\beta)}, \log(\epsilon^{-1}))$. If $\delta(n)$ of the eigenstates are within $O(\beta^{-1})$ of the ground state energy, then $\frac{n}{\mathcal{Z}(\beta)} = \Omega(\delta(n)^{-1})$, and therefore under such conditions, this algorithm efficiently prepares the Gibbs states as well.
Effectively, the CKG Gibbs sampler with ``generic'' 1-design jumps performs the same as previously developed algorithms—a potential advantage in cooling must arise from a smart (i.e., local and unbiased) choice of jump operators.

Finally, our conditions on the spectrum and the structure of random unitary design jumps resemble previous works on chaotic Hamiltonians that apply the Eigenstate Thermalization Hypothesis (ETH) to prove the fast mixing of dissipative dynamics \cite{ETH_thermalization_Chen21, PhysRevResearch.6.033147}.
In particular, in \cite{ETH_thermalization_Chen21}, the proposed algorithm implements a ``rounded'' Davies generator, yielding a physical Lindbladian that block-diagonalizes into components consisting of small-energy transitions. 
They propose their own version of ETH that relies on jump operators, for small Bohr frequencies $\omega$, having independent Gaussian-distributed entries. 
The assumption that these entries are independent for the result is very strong, allowing them to conclude that their jump operators are both local \emph{and} that distinct energy transitions are completely uncorrelated. 

Our work shows fast mixing unconditionally for quantumly easy Hamiltonians, replacing these local jumps and ETH assumption for the rounded Davies generator with 1-design jump operators for the newer CKG Lindbladian.
A similar ETH assumption to \cite{ETH_thermalization_Chen21} would also yield fast-mixing for the CKG Lindbladian with local jumps, but more generally some approach must be taken to characterize how correlations induced by implementing local jumps (in contrast with 1-design jump operators) can be mitigated to ensure fast-mixing for certain local Hamiltonians. 

It is an interesting future direction to extend our analysis to unconditionally establish fast mixing of chaotic Hamiltonians such as the SYK model \cite{SYK, kitaev2015SYK, kitaev2015simple} under the CKG Lindbladian. 

\section{Acknowledgments}
We thank Jiaoyang Huang, Luca Nashabeh, John Preskill, and Yongtao Zhan   for valuable discussions. 
We are grateful to Chi-Fang Chen for insightful discussions and proposing this project in its early stages.
MS is supported by AWS Quantum Postdoctoral Scholarship and funding from the National Science Foundation NSF CAREER award CCF-2048204.
Institute for Quantum Information and Matter is an NSF Physics Frontiers Center. AR acknowledges the Caltech SURF program and Zhuang Tang and Gebing Yi for their generous support. 

\section{Appendix}
\label{sec:appendix}

\subsection{Preliminaries}
We begin with a description of the Davies generator, which is the limit of the CKG Lindbladian as $\sigma_E \to 0$. 
This generator was developed from a physical approximation of an open thermalizing quantum system, but at low temperatures it is unphysical, and as a result in general it is hard to implement.
We then proceed to generalize the notions to the efficiently implementable CKG Lindbladian. 
\subsubsection{Davies generator}\label{sec:CKGdetails2}
In the description of the Davies generator for a given system $\vH$, there is a coherent term and there are Lindblad operators $\vA^a$.
The $\vA^a$ terms must appear in adjoint pairs in the construction of the Davies generator. 
The dissipative part of the  Lindbladian is expressed as follows:

\begin{align}
    \mathcal{L}_\beta[\cdot]= \sum_{a \in [M]} \int_{-\infty}^\infty \gamma(\omega) \left(\vA^a_\omega (\cdot) \vA_\omega^\dag - \frac{1}{2}\{\vA_\omega^\dag\vA_\omega, \cdot\}\right)d\omega, \label{eq:lindbladianForm}
    \end{align}
where $\vA^a_\omega$ is the Operator Fourier Transform (OFT) of jump operator $\vA^a$: 

\[\vA^a_\omega = \frac{1}{\sqrt{2\pi}}\int_{-\infty}^\infty e^{i\vH t }\vA^a e^{-i\vH t}e^{-i\omega t}dt.\]
The Davies' generator chooses Lindblad operators $\sqrt{\gamma(\omega)}\vA^a_\omega$, each of which selects the energy transitions, or Bohr frequencies, in $\vA^a$ that are precisely $\omega$. Because it requires certainty in energy, by Heisenberg's uncertainty principle of energy and time, in the general case simulating the evolution of the Davies generator efficiently is infeasible. 
In the above, $\gamma$ is some function satisfying $\gamma(\omega) = \gamma(\omega) = e^{-\beta\omega}\gamma(-\omega)$. 
The Lindblad operators are scaled by $\gamma(\omega)$ precisely to satisfy KMS detailed balance.
Since $A^a_\omega$ represents jumps with Bohr frequency $\omega$, the functional equation of $\gamma$ ensures a desired ratio of jumps with Bohr frequency $\omega$ and $-\omega$.  
We choose the Metropolis filter, $\gamma(\omega) = \min(1, e^{-\beta\omega})$, though another common filter $\gamma(\omega) = \frac{1}{1 + e^{-\beta\omega}}$ for ``Glauber dynamics'' could also be used for the same results. 

The Davies generator satisfies detailed balance with respect to $\vrho_\beta$, the thermal state. 
In some presentations of the Davies generator, it contains a coherent term $-i[\vH, \cdot]$. 
If this term is included, the generator does not satisfy detailed balance, so we do not follow this convention. 
However, the term does not affect the fixed point of the generator, since $\vrho_\beta$ commutes with $\vH$ and therefore $-i[\vH, \vrho_\beta] = 0$. 
\subsubsection{CKG Lindbladian}
The CKG Lindbladian is defined almost identically to the Davies generator, but is altered slightly so that it still obeys detailed balance, but is efficiently implementable. 

\[\mathcal{L}_\beta[\cdot] = \underbrace{-i[\vB, \cdot]}_{\text{coherent term}} + \sum_{a \in [M]} \int_{-\infty}^\infty \gamma(\omega) \left(\underbrace{\hat{\vA}^a(\omega) (\cdot) \hat{\vA}^a(\omega)^\dag}_{\text{transition term}} - \underbrace{\frac{1}{2}\{\hat{\vA}^a(\omega)^\dag\hat{\vA}^a(\omega), \cdot\}}_{\text{decay term}}\right)d\omega,\]
where $\hat{\vA}^a(\omega)$ is now the Gaussian-supported OFT of jump operator~$\vA^a$: 

\[\hat{\vA}^a(\omega) = \frac{1}{\sqrt{2\pi}}\int_{-\infty}^\infty e^{i\vH t }\vA^a e^{-i\vH t}e^{-i\omega t}f(t)dt.\]
To ensure that the jump operators do not have infinite precision in energy, a Gaussian supported OFT is performed instead to obtain $\hat{\vA}^a(\omega)$, which selects a Gaussian band energies of around $\omega$. 

Here, $f(t) = e^{-\sigma_E^2t^2}\sqrt{\sigma_E\sqrt{2/\pi}}$, with Fourier transform $\hat{f}(\omega) = \frac{1}{\sqrt{\sigma_E\sqrt{2\pi}}}\exp(-\frac{\omega^2}{4\sigma_E^2})$. As a result, the operator 
$\hat{\vA}^a(\omega)$ can be shown to be equal to $\sum_\nu \hat{f}(\omega - \nu)\vA^a_\nu$.
The function $f(t)$ was chosen so that its squared Fourier transform $\hat{f}^2(\omega)$ is a Gaussian with standard deviation $\sigma_E$, which features prominently in the Lindbladian (since it consists of quadratic terms in $\hat{\vA}^a(\omega)$). 
Taking $\sigma_E = \Theta(\beta^{-1})$ yields an efficient simulation algorithm with the assumption of a block-encoding of $\vH$ and a block-encoding for the jump operators $\sum_{a \in [M]} \ket{a} \otimes \vA^a$, so $\sigma_E$ is taken to be on the order of $\beta^{-1}$ in this paper. 

Since $\vA^a(\omega)$ is a noisy decomposition of $\vA^a$ into frequencies, it is not immediately clear whether there is a choice of function $\gamma(\omega)$ for which they can be recombined to achieve detailed balance.
Indeed, as shown in \cite{chen2023efficient}, there is!
The choice of $\gamma(\omega)$ is such that the the transition part of $\mathcal{L}_\beta$, the summand $\sum_{a \in [M]} \int_{-\infty}^\infty \gamma(\omega) \hat{\vA}^a(\omega) (\cdot) \hat{\vA}^a(\omega)^\dag d\omega$, still satisfies KMS detailed balance. 
\cite{chen2023efficient} proved that there is a unique choice of $\vB$, up to translation by a scalar, such that $-i[\vB, \cdot] - \frac{1}{2}\sum_{a \in [M]} \int_{-\infty}^\infty \gamma(\omega)  \{\hat{\vA}^a(\omega)^\dag\hat{\vA}^a(\omega), \cdot\} d\omega$ also satisfies detailed balance.
For the Davies generator, this coherent term $\vB$ is simply $0$ (or corresponds to a Lamb shift that commutes with the Hamiltonian), and the decay term by itself already satisfies detailed balance. 
$\vB$ can be expressed in general as:

\[\vB = \sum_{a \in [M]} \sum_{\nu_1, \nu_2} \frac{\tanh(-\beta(\nu_1 - \nu_2)/4)}{2i}(\vA^a_{\nu_2})^\dag\vA^a_{\nu_1}.\]

The choice of $\gamma$ for this algorithm, for which the filter is efficiently implementable, is $\gamma(\omega) = \exp\left(-\beta \max\left(\omega + \frac{\beta \sigma_E^2}{2}, 0\right)\right)$.
As $\sigma_E \to 0$ it converges to the Metropolis filter of the Davies generator. 
In particular, this $\gamma$ is precisely Metropolis filter for the Davies generator shifted by $\beta \sigma_E^2$.
The following lemma asserts that this is no coincidence.

\begin{lem}
If $\widetilde{\gamma}$ satisfies the functional equation $\widetilde{\gamma}(\omega) = \widetilde{\gamma}(-\omega)\exp(-\beta \omega)$, then the transition part of the CKG Lindbladian $\sum_a \int_{-\infty}^{\infty} \gamma(\omega)\left(\hat{\vA}^a(\omega)( \cdot) \hat{\vA}^a(\omega)^\dag\right)d \omega$ satisfies detailed balance for $\gamma(\omega) = \widetilde{\gamma}\left(\omega + \frac{\beta \sigma_E^2}{2}\right)$. 
\end{lem}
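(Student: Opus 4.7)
The plan is to verify KMS detailed balance of the transition map $\mathcal{T}[\sigma] := \sum_a \int \gamma(\omega)\,\hat{\vA}^a(\omega)\sigma \hat{\vA}^a(\omega)^\dag\, d\omega$ directly in the energy eigenbasis, by expanding each Gaussian OFT in terms of the exact Bohr-frequency components $\vA^a_\nu$ and reducing the condition to a single scalar functional equation in a shift parameter $s=(\nu+\nu')/2$. The key algebraic facts I will use are: $(i)$ $\hat{\vA}^a(\omega)=\sum_\nu \hat f(\omega-\nu)\vA^a_\nu$ with $\hat f$ a real even Gaussian of variance proportional to $\sigma_E^2$; $(ii)$ the identity $\vrho_\beta^{1/2}\vA^a_\nu\vrho_\beta^{-1/2}=e^{-\beta\nu/2}\vA^a_\nu$, a direct consequence of the eigenbasis definition of $\vA^a_\nu$; and $(iii)$ $(\vA^a_\nu)^\dag=\vA^{a\dag}_{-\nu}$, together with the hypothesis that the jump set is closed under adjoint.

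First I would rewrite both sides of the detailed balance identity
\[
\mathcal{T}[\sigma] \;=\; \vrho_\beta^{1/2}\,\mathcal{T}^\dag\!\left[\vrho_\beta^{-1/2}\sigma \vrho_\beta^{-1/2}\right]\vrho_\beta^{1/2}
\]
as double sums over $\nu,\nu'$. Using $(ii)$ the right-hand side yields terms of the form $e^{\beta(\nu+\nu')/2}(\vA^a_\nu)^\dag\sigma \vA^a_{\nu'}$, while the left-hand side naturally produces terms $\vA^a_\nu\sigma (\vA^a_{\nu'})^\dag$. I would then use $(iii)$ together with closure of $\{\vA^a\}$ under adjoint to relabel the $a$-sum on the left, obtaining a representation also in terms of $(\vA^a_\nu)^\dag\sigma\vA^a_{\nu'}$ (at the cost of replacing $\gamma(\omega)$ by $\gamma(-\omega)$ and using evenness of $\hat f$). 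Matching coefficients of the independent operators $(\vA^a_\nu)^\dag\sigma\vA^a_{\nu'}$ reduces detailed balance to the scalar identity, for every $s\in\mathbb{R}$,
\[
\int d\omega\,\gamma(-\omega-s)\,g(\omega) \;=\; e^{\beta s}\int d\omega\,\gamma(\omega+s)\,g(\omega),
\]
where $s=(\nu+\nu')/2$ emerges after completing the square in the product $\hat f(\omega-\nu)\hat f(\omega-\nu')$ and shifting $\omega\mapsto \omega-s$, and $g$ is the normalized Gaussian of variance $\sigma_E^2$.

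The last step, which I expect to be the real calculational heart of the argument, is to verify this scalar identity for $\gamma(\omega)=\widetilde\gamma(\omega+\beta\sigma_E^2/2)$. Substituting and applying the hypothesis $\widetilde\gamma(-x)=\widetilde\gamma(x)e^{\beta x}$ to the left-hand side turns the integrand into $\widetilde\gamma(\omega+s-\beta\sigma_E^2/2)\,e^{\beta\omega+\beta s-\beta^2\sigma_E^2/2}\,g(\omega)$. The crucial observation is the Gaussian translation identity $e^{\beta\omega}g(\omega)=e^{\beta^2\sigma_E^2/2}\,g(\omega-\beta\sigma_E^2)$. Applying this and changing variables $\omega\mapsto \omega+\beta\sigma_E^2$ converts the argument $\omega+s-\beta\sigma_E^2/2$ into $\omega+s+\beta\sigma_E^2/2$, yielding exactly $e^{\beta s}\int \widetilde\gamma(\omega+s+\beta\sigma_E^2/2)g(\omega)\,d\omega$, which matches the right-hand side. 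The shift $\beta\sigma_E^2/2$ in the definition of $\gamma$ is precisely what is needed to absorb the Gaussian-smearing factor $e^{\beta^2\sigma_E^2/2}$ that would otherwise obstruct the KMS relation; this is why naïvely using the Metropolis filter fails and why the shifted version succeeds.

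\textbf{Main obstacle.} The bookkeeping of adjoint relabelings and index swaps when passing from $\mathcal{T}$ to $\vrho_\beta^{1/2}\mathcal{T}^\dag[\vrho_\beta^{-1/2}(\cdot)\vrho_\beta^{-1/2}]\vrho_\beta^{1/2}$ is delicate but routine; the nontrivial input is the Gaussian translation identity, which is what makes the shift $\beta\sigma_E^2/2$ the \emph{unique} correction (up to solutions of the homogeneous KMS equation) that restores detailed balance at nonzero energy resolution.
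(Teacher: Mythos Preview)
Your proposal is correct and follows essentially the same route as the paper. Both proofs expand $\hat\vA^a(\omega)$ in Bohr-frequency components, use $\vrho_\beta^{1/2}\vA^a_\nu\vrho_\beta^{-1/2}=e^{-\beta\nu/2}\vA^a_\nu$ and closure of the jump set under adjoint, and then reduce detailed balance to the single scalar identity $\alpha(s)=e^{-\beta s}\alpha(-s)$ with $s=(\nu+\nu')/2$ and $\alpha=\gamma*g$; the paper packages the last computation as an explicit completion of the square in $\alpha(\nu)$ versus $e^{-\beta\nu}\alpha(-\nu)$, while you invoke the equivalent Gaussian translation identity $e^{\beta\omega}g(\omega)=e^{\beta^2\sigma_E^2/2}g(\omega-\beta\sigma_E^2)$ directly, but this is the same calculation.
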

\begin{proof}
We prove the result for just one jump operators and its adjoint pair, since it extends by linearity to the full result. 
We can write the operator as follows, grouping together terms with opposite values of $\omega$ for the two jump operators:
\[\mathcal{L}_t = \int_{-\infty}^\infty \gamma(\omega) \hat{\vA}(\omega)(\cdot)\hat{\vA}(\omega)^\dag + \gamma(-\omega)\widehat{\vA^\dag}(-\omega)(\cdot)\widehat{\vA^\dag}(-\omega)^\dag d\omega.\]
By the identity $\hat{\vA}(\omega) = \sum_\nu \hat{f}(\omega - \nu)\vA_\nu$, note that $\widehat{\vA^\dag}(-\omega) = \sum_\nu \hat{f}(-\omega - \nu) (\vA^\dag)_{\nu} = \sum_\nu \hat{f}(-\omega + \nu) \vA_{\nu}$. We may therefore expand the previous expression to 
\begin{align*}
\mathcal{L}_t &= \sum_{\nu_1, \nu_2}\int_{-\infty}^\infty \gamma(\omega) \left(\hat{f}(\omega - \nu_1)\vA_{\nu_1}\right)(\cdot)\left(\hat{f}(\omega - \nu_2)\vA_{\nu_2}\right)^\dag + \gamma(-\omega)\left(\hat{f}(-\omega + \nu_1)\vA_{\nu_1}^\dag\right)(\cdot)\left(\hat{f}(-\omega + \nu_2)\vA_{\nu_2}^\dag\right)^\dag d\omega\\
&= \sum_{\nu_1, \nu_2}\left(\int_{-\infty}^\infty \hat{f}(\omega - \nu_1)\hat{f}(\omega - \nu_2)\gamma(\omega) d\omega\right)\vA_{\nu_1}(\cdot)\vA_{\nu_2}^\dag + \left(\int_{-\infty}^\infty \hat{f}(-\omega + \nu_1)\hat{f}(-\omega + \nu_2)\gamma(-\omega) d\omega\right)\vA_{\nu_1}^\dag(\cdot)\vA_{\nu_2}\\
&= \sum_{\nu_1, \nu_2}\theta(\nu_1, \nu_2)\vA_{\nu_1}(\cdot)\vA_{\nu_2}^\dag + \theta(-\nu_1, -\nu_2)\vA_{\nu_1}^\dag(\cdot)\vA_{\nu_2}.
\end{align*}

Throughout the rest of the paper, we denote 
\begin{align}
    \theta(\nu_1, \nu_2): = \int_{-\infty}^\infty \hat{f}(\omega - \nu_1)\hat{f}(\omega - \nu_2)\gamma(\omega) d\omega,\quad \text{and} \quad \alpha(\nu) := \theta(\nu,\nu).
\end{align}
Meanwhile, $\hat{f}(\omega - \nu)$ is the square root of a Gaussian distribution around $\nu$ with standard deviation $\sigma_E$.
This means the function $\alpha(\nu)$ is the convolution of $\gamma$ and a Gaussian of width $\sigma_E$. 
The function $\theta(\nu_1,\nu_2)$ has the property that $$\theta(\nu_1, \nu_2) = \alpha
\left(\frac{\nu_1+\nu_2}{2}\right)\exp\left(-\frac{(\nu_1 - \nu_2)^2}{8\sigma_E^2}\right).$$ 

\noindent
To verify detailed balance, we must prove that $\mathcal{L}_t[\cdot] = \vrho_\beta^{1/2}\mathcal{L}_\beta^\dag[\vrho_\beta^{-1/2} (\cdot) \vrho_\beta^{-1/2}]\vrho_\beta^{1/2}$. 
The adjoint of $\mathcal{L}_t$~is 
\[\mathcal{L}_t^\dag = \sum_{\nu_1, \nu_2}\theta(\nu_1, \nu_2)\vA_{\nu_1}^\dag(\cdot)\vA_{\nu_2} + \theta(-\nu_1, -\nu_2)\vA_{\nu_1}(\cdot)\vA_{\nu_2}^\dag.\]
We must therefore prove that 
\begin{align*}
\mathcal{L}_t &\stackrel{?}{=} \sum_{\nu_1, \nu_2}\theta(\nu_1, \nu_2)\vrho_\beta^{1/2}\vA_{\nu_1}^\dag\vrho_\beta^{-1/2}(\cdot)\vrho_\beta^{-1/2}\vA_{\nu_2}\vrho_\beta^{1/2} + \theta(-\nu_1, -\nu_2)\vrho_\beta^{1/2}\vA_{\nu_1}\vrho_\beta^{-1/2}(\cdot)\vrho_\beta^{-1/2}\vA_{\nu_2}^\dag\vrho_\beta^{1/2}\\
&= \sum_{\nu_1, \nu_2} \theta(\nu_1, \nu_2)\exp\left(\beta\frac{(\nu_1 + \nu_2)}{2}\right)\vA_{\nu_1}^\dag(\cdot)\vA_{\nu_2} + \theta(-\nu_1, -\nu_2)\exp\left(\beta\frac{-(\nu_1 + \nu_2)}{2}\right)\vA_{\nu_1}(\cdot)\vA_{\nu_2}^\dag.
\end{align*}
The second equality holds because $\vrho_\beta = \frac{\exp(-\beta vH)}{Z}$. Therefore, since $\vA_{\nu}^\dag$ only makes jumps with $\Delta E = -\nu$ and $\vA_\nu$ only makes jumps with $\Delta E = \nu$, it can be verified that $\vrho_\beta^{1/2} \vA_\nu \vrho_\beta^{-1/2} = \exp(-\frac{\beta}{2} \nu)$ and likewise $\vrho_\beta^{-1/2} \vA_\nu \rho_\beta^{1/2} = \exp(\frac{\beta}{2} \nu)$.

Inspecting the expression for $\mathcal{L}_t$, we conclude that to satisfy the above equality it is sufficient to have $\theta(\nu_1, \nu_2) = \theta(-\nu_1, \nu_2)\exp(\beta\frac{(\nu_1 + \nu_2)}{2})$ for any pair $\nu_1, \nu_2$.
In the case that $\nu = \nu_1 = \nu_2$, this is precisely the condition that $\alpha(\nu) = \alpha(-\nu)\exp(-\beta \nu)$. 
In fact, this functional equation for $\alpha$ is not only necessary but sufficient for the general case. 
Indeed, 
\begin{align*}
\theta(\nu_1, \nu_2) &= \alpha\left(\frac{\nu_1 + \nu_2}{2}\right)\exp\left(-\frac{\left(\nu_1 - \nu_2\right)^2}{8\sigma_E^2}\right)\\
&= \exp\left(-\beta\left(\frac{\nu_1+\nu_2}{2}\right)\right)\alpha\left(\frac{-(\nu_1 + \nu_2)}{2}\right)\exp\left(-\frac{\left((-\nu_1) - (-\nu_2)\right)^2}{8\sigma_E^2}\right)\\
&= \theta(-\nu_1, \nu_2)\exp\left(-\beta\left(\frac{\nu_1+\nu_2}{2}\right)\right).
\end{align*}
To conclude, we must prove that $\alpha$ satisfies $\alpha(\nu) = \alpha(-\nu)\exp(-\beta \nu)$, where $\alpha = \gamma * g$ and $g$ is a Gaussian with width $\sigma_E$ around 0. 
We may write:
\begin{align*}
\exp(-\beta \nu)\alpha(-\nu) &= \exp(-\beta \nu)\int_{-\infty}^\infty \gamma(\omega)g(-\nu - \omega)d\omega\\
&= \frac{\exp(-\beta \nu)}{\sigma_E\sqrt{2\pi}}\int_{-\infty}^\infty \gamma(\omega)\exp\left(-\frac{(\nu+\omega)^2}{2\sigma_E^2}\right)d\omega\\
&= \frac{\exp(-\beta \nu)}{\sigma_E\sqrt{2\pi}}\int_{-\infty}^\infty \gamma(-\omega - \beta \sigma_E^2)\exp\left(-\beta\left(\omega + \frac{\beta \sigma_E^2}{2}\right)\right)\exp\left(-\frac{(\nu+\omega)^2}{2\sigma_E^2}\right)d\omega\\
&= \frac{1}{\sigma_E\sqrt{2\pi}}\int_{-\infty}^\infty \gamma(-\omega - \beta \sigma_E^2)\exp\left[-\beta\left(\omega +\nu + \frac{\beta \sigma_E^2}{2}\right) - \frac{(\nu + \omega)^2}{2\sigma_E^2}\right]d\omega 
\end{align*}
where the third equality holds by utilizing the functional equation for $\widetilde{\gamma}$, which states that $\gamma(\omega - \frac{\beta \sigma_E^2}{2}) = \gamma(-\omega - \frac{\beta \sigma_E^2}{2})\exp\left(-\beta \omega\right)$ and therefore $\gamma(\omega) = \gamma(-\omega - \beta \sigma_E^2)\exp\left(-\beta \left(\omega + \frac{\beta \sigma_E^2}{2}\right)\right)$. Now, we express $\alpha$, writing

\begin{align*}
\alpha(\nu) &= \int_{-\infty}^\infty \gamma(\omega)g(\nu - \omega)d\omega\\
&= \frac{1}{\sigma_E\sqrt{2\pi}}\int_{-\infty}^\infty \gamma(\omega)\exp\left(-\frac{(\nu-\omega)^2}{2\sigma_E^2}\right)d\omega\\
&= \frac{1}{\sigma_E\sqrt{2\pi}}\int_{-\infty}^\infty \gamma(-\omega-\beta \sigma_E^2)\exp\left(-\frac{(\nu+\omega+\beta\sigma_E^2)^2}{2\sigma_E^2}\right)d\omega
\end{align*}
where in the third step we replace $\omega$ by $\omega - \beta \sigma_E^2$. By calculation, we see that $-\frac{(\nu+\omega +\beta\sigma_E^2)^2}{2\sigma_E^2} = -\beta\left(\omega +\nu + \frac{\beta \sigma_E^2}{2}\right) - \frac{(\nu + \omega)^2}{2\sigma_E^2}$, and therefore our expressions for $\alpha(\nu)$ and $\exp(-\beta \nu)\alpha(-\nu)$ are equal, completing the proof. 
\end{proof}
The CKG Lindbladian requires a choice of $\gamma$ for which $\alpha = \gamma * g$ satisfies the functional equation that was originally satisfied by $\gamma$ in the Davies generator. 
Fortunately, if $\tilde{\gamma}(\omega)$ is such a solution to the equation, like the Metropolis filter of the Davies generator, then $\tilde{\gamma}(\omega + \frac{\beta \sigma_E^2}{2}) * g$ does as well. 
For the rest of this paper, therefore, we use $\gamma$ to denote $\gamma(\omega) = \exp\left(-\beta \max\left(\omega + \frac{\beta \sigma_E^2}{2}, 0\right)\right)$, and $\alpha$ to denote $\gamma * g$. 

We also record a result about the operator norm of $\mathcal{L}_\beta$:
\begin{lem}\label{lem:operatornorm}
Consider the CKG Lindbladian $\mathcal{L}_\beta$ with temperature $\beta^{-1}$, using the Metropolis filter, and with jump operators $\vA^a$ for which $\sum_a \norm{\vA^{a\dag}\vA^a} \leq 1$. 
This Lindbladian satisfies
\begin{align}
    \norm{\mathcal{L}_\beta}_{\infty \to \infty} = O(\log(\beta\norm{\vH})),\label{eq:normLbound}
\end{align}
where $\norm{\cdot}_{\infty \to \infty}$ is the operator norm of $\mathcal{L}_\beta$, with respect to the operator norm on the input and output vector spaces. 
\end{lem}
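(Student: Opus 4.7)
The plan is to decompose the Lindbladian as $\mathcal{L}_\beta = \mathcal{C} + \mathcal{T} + \mathcal{D}$, where $\mathcal{C}(\cdot) = -i[\vB, \cdot]$ is the coherent part, $\mathcal{T}(\cdot) = \sum_a \int \gamma(\omega)\hat{\vA}^a(\omega)(\cdot)\hat{\vA}^a(\omega)^\dag\,d\omega$ is the transition part, and $\mathcal{D}(\cdot) = -\tfrac{1}{2}\{M, \cdot\}$ with $M := \sum_a \int \gamma(\omega)\hat{\vA}^a(\omega)^\dag\hat{\vA}^a(\omega)\,d\omega$ is the decay part, and to show that the logarithmic factor arises entirely from $\|\vB\|$. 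For the transition term, since $\mathcal{T}$ is completely positive, $\|\mathcal{T}\|_{\infty \to \infty} \leq 2\|\mathcal{T}(\mathbbm{1})\|$, and for the anticommutator $\|\mathcal{D}\|_{\infty\to\infty} \leq \|M\|$. Both of these reduce to bounding the operator norm of $\int \hat{\vA}^a(\omega)^\dag \hat{\vA}^a(\omega)\,d\omega$ (and the analogous $\vA\vA^\dag$ quantity). A Plancherel identity applied to the operator-valued function $t \mapsto f(t)\, e^{i\vH t}\vA^a e^{-i\vH t}$ shows this integral equals $\int |f(t)|^2 e^{i\vH t}\vA^{a\dag}\vA^a e^{-i\vH t}\,dt$, whose norm is at most $\|\vA^{a\dag}\vA^a\|$ since $\int |f|^2\,dt = 1$ by construction. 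Because $0 \leq \gamma \leq 1$ for the Metropolis filter, summing with $\sum_a \|\vA^{a\dag}\vA^a\| \leq 1$ gives $\|\mathcal{T}\|_{\infty\to\infty},\, \|\mathcal{D}\|_{\infty\to\infty} = O(1)$.

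For the coherent part, the plan is to obtain a tractable integral representation of $\vB$ by Fourier inversion. Writing the scalar coefficient as $b(\omega) := \tanh(-\beta\omega/4)/(2i) = \int B(t)\,e^{i\omega t}\,dt$, a standard residue computation yields $B(t) = \frac{1}{\beta\sinh(2\pi t/\beta)}$, which is real and odd, behaves as $\tfrac{1}{2\pi t}$ near the origin, and decays like $e^{-2\pi|t|/\beta}$ at infinity. Substituting into the defining double sum for $\vB$ and collapsing it via the identity $\sum_\nu e^{i\nu t}\vA^a_\nu = e^{i\vH t}\vA^a e^{-i\vH t}$ produces the compact representation
\[
\vB \,=\, \sum_{a \in [M]} \int B(t)\, e^{i\vH t}\vA^{a\dag}\vA^a e^{-i\vH t}\,dt.
\]

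The main obstacle is that $B \notin L^1$ because of its $1/t$ singularity, so a naive triangle inequality on this integral diverges near $t = 0$. We plan to resolve this by exploiting the oddness of $B$: folding $t \leftrightarrow -t$ recasts the integral as $\vB = \sum_a \int_0^\infty B(t)\, Z_a(t)\,dt$, where $Z_a(t) := e^{i\vH t}X_a e^{-i\vH t} - e^{-i\vH t}X_a e^{i\vH t}$ with $X_a := \vA^{a\dag}\vA^a$, and crucially $Z_a(0) = 0$, which cancels the pole. A derivative estimate $\|\tfrac{d}{dt}(e^{i\vH t}X_a e^{-i\vH t})\| \leq \|[\vH, X_a]\| \leq 2\|\vH\|\|X_a\|$ yields the two-sided bound $\|Z_a(t)\| \leq \min(2\|X_a\|,\, 4t\|\vH\|\|X_a\|)$. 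Splitting the integral at $t = 1/\|\vH\|$ and $t = \beta$ (assuming $\beta\|\vH\| \geq 1$, else the claim is trivial), the small-$t$ piece contributes $O(\|X_a\|)$ because the linear branch of the $\min$ exactly cancels the $1/t$ in $B$, the middle range produces the logarithm via $\int_{1/\|\vH\|}^{\beta} dt/(2\pi t) = O(\log(\beta\|\vH\|))$, and the tail past $\beta$ is $O(\|X_a\|)$ by exponential decay of $B$. Summing over $a$ with $\sum_a \|X_a\| \leq 1$ then gives $\|\vB\| = O(\log(\beta\|\vH\|))$ and hence $\|\mathcal{C}\|_{\infty\to\infty} \leq 2\|\vB\| = O(\log(\beta\|\vH\|))$, which combined with the constant bounds on $\mathcal{T}$ and $\mathcal{D}$ establishes \eqref{eq:normLbound}.
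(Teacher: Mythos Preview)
Your proof is correct and follows the same decomposition as the paper: bound the transition and decay parts by $O(1)$ via an operator Parseval identity together with $0\le\gamma\le 1$, and attribute the logarithm entirely to $\|\vB\|$. The only substantive difference is that the paper does not prove the bound on $\|\vB\|$ at all---it simply cites Proposition~B.2 of \cite{chen2023efficient}---whereas you supply a self-contained derivation via the time-domain representation $\vB = \sum_a \int B(t)\,e^{i\vH t}\vA^{a\dag}\vA^a e^{-i\vH t}\,dt$ with $B(t) = 1/(\beta\sinh(2\pi t/\beta))$, exploiting the oddness of $B$ to cancel the $1/t$ pole against $Z_a(0)=0$ and splitting the integral at $t=1/\|\vH\|$ and $t=\beta$ to isolate the $\log(\beta\|\vH\|)$ contribution. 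So the two approaches agree structurally; you have unpacked a black-box citation into an explicit argument.
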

\begin{proof}
Proposition B.2 in \cite{chen2023efficient} implies that the coherent term of the Lindbladian $B$ has operator norm $O(\log(\beta \norm{\vH}))$, which implies immediately that 
\[\norm{-i[\vB, \vrho]} \leq \norm{\vB\vrho}+\norm{\vrho\vB} \leq 2\norm{\vB}\norm{\vrho}.\]
In particular, $\norm{-i [\vB, \cdot]}_{\infty \to \infty} = O\left(\log(\beta \norm{\vH})\right)$. 

It remains to bound the transition and decay terms of the Lindbladian. 
The transition term of the Lindbladian can be bounded as
\begin{align}
    \left \lVert \sum_{a \in [M]} \int_{-\infty}^\infty \gamma(\omega) \hat{\vA}^a(\omega) \vrho \hat{\vA}^a(\omega)^\dag
d\omega  \right \rVert &\leq \left \lVert \rho \right \rVert \cdot \left \lVert \gamma \right  \rVert \cdot \left \lVert \sum_{a \in [M]} \int_{-\infty}^\infty \hat{\vA}^a(\omega) \hat{\vA}^a(\omega)^\dag
d\omega \right \rVert \nonumber\\
&\leq \int_{-\infty}^{\infty} |f(t)|^2 dt \cdot \left \lVert\sum_a \vA^{a\dag}\vA^{a}\right \rVert \leq 1.
\end{align}
In the first line, we applied Lemma K.1 from \cite{chen2023local}. To reach the second line, we used the fact that $\max_{\omega} |\gamma(\omega)| \leq 1$, applied Operator Parseval's identity from Lemma K.1 of \cite{chen2023local} or Proposition A.1 of \cite{chen2023QThermalStatePrep}, and utilized the normalization of the jump operators along with the Gaussian distribution.

A very similar calculation implies that the norm of the decay term can be bounded by
\begin{align}
    \left \lVert \sum_{a \in [M]} \int_{-\infty}^\infty \frac{1}{2} \gamma(\omega) \{\hat{\vA}^a(\omega)^\dag\hat{\vA}^a(\omega), \vrho\} d\omega \right \rVert \leq 1.
\end{align}
Combining these norm bounds on the  coherent, transition and decay terms implies \eqref{eq:normLbound}.

\end{proof}

\subsubsection{CKG Lindbladian in the energy basis}\label{sec:CKGinEigenbasis}
We consider a quantum system consisting of basis states $\ket{e_i}$ and a Hamiltonian $\vH$.
We choose some jump operators $\vA^a$ and denote $A^a_{lm} = \bra{l}\vA^a \ket{m}$. 
Notate the energy eigenstates as $\ket{j}$ with energy $E_j$. 
We independently calculate the three parts of the Lindbladian: the transition term $\mathcal{L}_t$, the decay term $\mathcal{L}_d$, and coherent term $\mathcal{L}_c$ so that $\mathcal{L}_\beta = \mathcal{L}_c + \mathcal{L}_t + \mathcal{L}_d$. 
First, as mentioned above, the OFT of the jump operator $\vA^a$ is:

\begin{align*}
\hat{\vA}^a(\omega) &= \frac{1}{\sqrt{2\pi}}\sum_{lm}A^a_{lm}\int_{-\infty}^\infty e^{i\vH t}\ketbra{l}{m}e^{-i\vH t}e^{i\omega t}f(t)dt\\
&= \frac{1}{\sqrt{2\pi}}\sum_{lm}A^a_{lm}\int_{-\infty}^\infty \ketbra{l}{m}e^{i\nu_{lm}t-i\omega t}f(t)dt\\
&= \sum_{lm}A^a_{lm}\hat{f}(\omega - \nu_{lm})\ketbra{l}{m}
\end{align*}
where $\nu_{lm} = E_l - E_m$. 
To represent superoperators as linear maps, we vectorize operators with respect to the basis of operators $\ketbra{m_1}{m_2}$. 
In particular, $\ket{\mathbf{m}}$ with $\mathbf{m} = (m_1, m_2)$ will notate the basis operator $\ketbra{m_1}{m_2}$. 
Now, we may expand $\mathcal{L}_t[N] = \sum_a\int_{-\infty}^\infty \gamma(\omega) \hat{\vA}^a(\omega) M \hat{\vA}^i(\omega)^\dag d\omega$ as

\begin{align*}
&\sum_a \int_{-\infty}^\infty \gamma(\omega)\left(\sum_{l_1m_1}A^a_{l_1m_1}\hat{f}(\omega - \nu_{l_1m_1})\ketbra{l_1}{m_1}\right)N\left(\sum_{l_2m_2}A^a_{l_2m_2}\hat{f}(\omega - \nu_{l_2m_2})\ketbra{l_2}{m_2}\right)^\dag d \omega=\\
&\sum_a \int_{-\infty}^\infty\gamma(\omega) \left(\sum_{l_1m_1}A^a_{l_1m_2}\hat{f}(\omega - \nu_{l_1m_1})\ketbra{l_1}{m_1}\right)N\left(\sum_{l_2m_2}\overline{A^a_{l_2m_2}}\hat{f}(\omega - \nu_{l_2m_2})\ketbra{m_2}{l_2}\right)d \omega=\\
    &\sum_{a, \mathbf{m}, \mathbf{l}} A^a_{l_1m_1}\overline{A^a_{l_2m_2}}\left(\int_{-\infty}^\infty \gamma(\omega)\hat{f}(\omega - \nu_{l_1m_1})\hat{f}(\omega - \nu_{l_2m_2})d\omega\right)(\ketbra{l_1}{m_1}N\ketbra{m_2}{l_2}).
\end{align*}
From the equation above, we may describe the elements of $\mathcal{L}_t$ as  \begin{align*}
\bra{\mathbf{l}}\mathcal{L}_t\ket{\mathbf{m}} &= \sum_a A^a_{l_1m_1}\overline{A^a_{l_2m_2}}\left(\int_{-\infty}^\infty \gamma(\omega)\hat{f}(\omega - \nu_{l_1m_1})\hat{f}(\omega - \nu_{l_2m_2})d\omega\right)\\
&= \sum_a A^a_{l_1m_1}\overline{A^a_{l_2m_2}}\theta(\nu_{l_1m_1}, \nu_{l_2m_2})
\end{align*}
with the notation $\theta(\nu_1, \nu_2)$ as defined previously. 
Notice that as $\sigma_E$ tends to zero, $\theta$ is only nonzero when its two inputs are equal.
The term $\bra{\mathbf{l}}\mathcal{L}_t\ket{\mathbf{m}}$ is therefore only nonzero if $\nu_{l_1m_1}= \nu_{l_2m_2}$, or equivalently, $\nu_{l_1l_2}= \nu_{m_1m_2}$. 
In the Davies limit, therefore, $\ketbra{m_1}{m_2}$ can only map to terms $\ketbra{l_1}{l_2}$ with the same Bohr~frequency.

Now, we calculate the decay part $\mathcal{L}_d$. We may expand $\mathcal{L}_d[N] = -\frac{1}{2}\sum_a \hat{\vA}^a(\omega)^\dag \hat{\vA}^a(\omega) N-\frac{1}{2}\sum_a N\hat{\vA}^a(\omega)^\dag \hat{\vA}^a(\omega)$ and deal with these terms individually. 
With the benefit of hindsight, we choose different indices for the terms of the jump operators, so that the expression will similarly simplify to yield $\bra{\mathbf{l}}\mathcal{L}_d\ket{\mathbf{m}}$. 
The first term is:

\begin{align*}
&-\frac{1}{2}\sum_a \int_{-\infty}^\infty \gamma(\omega)\left(\sum_{j_1m_2}A^a_{j_1m_2}\hat{f}(\omega - \nu_{j_1m_2})\ketbra{j_1}{m_2}\right)^\dag\left(\sum_{j_2l_2}A^a_{j_2l_2}\hat{f}(\omega - \nu_{j_2l_2})\ketbra{j_2}{l_2}\right)N d\omega = \\
&-\frac{1}{2}\sum_a \int_{-\infty}^\infty \gamma(\omega) \left(\sum_{j_1m_2}\overline{A^a_{j_1m_2}}\hat{f}(\omega - \nu_{j_1m_2})\ketbra{m_2}{j_1}\right)\left(\sum_{j_2l_2}A^a_{j_2l_2}\hat{f}(\omega - \nu_{j_2l_2})\ketbra{j_2}{l_2}\right) N d \omega= \\
&-\frac{1}{2} \sum_{a, l_2, m_2, j}\overline{A^a_{jm_2}}A^a_{jl_2}\left(\int_{-\infty}^\infty \gamma(\omega)\hat{f}(\omega - \nu_{jm_2})\hat{f}(\omega - \nu_{jl_2})d\omega\right)(\ketbra{l_2}{m_2}N).
\end{align*}
Similarly, the other term is:
\begin{align*}
&-\frac{1}{2}\sum_a \int_{-\infty}^\infty \gamma(\omega)\left(\sum_{j_1l_1}A^a_{j_1l_1}\hat{f}(\omega - \nu_{j_1l_1})\ketbra{j_1}{l_1}\right)^\dag\left(\sum_{j_2m_1}A^a_{j_2m_1}\hat{f}(\omega - \nu_{j_2m_1})\ketbra{j_2}{m_1}\right) N d\omega = \\
&-\frac{1}{2}\sum_a \int_{-\infty}^\infty \gamma(\omega) \left(\sum_{j_1l_1}\overline{A^a_{j_1l_1}}\hat{f}(\omega - \nu_{j_1l_1})\ketbra{l_1}{j_1}\right)\left(\sum_{j_2m_1}A^a_{j_2m_1}\hat{f}(\omega - \nu_{j_2m_1})\ketbra{j_2}{m_1}\right) N d \omega= \\
&-\frac{1}{2} \sum_{a, l_1, m_1, j}\overline{A^a_{jl_1}}A^a_{jm_1}\left(\int_{-\infty}^\infty \gamma(\omega)\hat{f}(\omega - \nu_{jl_1})\hat{f}(\omega - \nu_{jm_1})d\omega\right)(\ketbra{l_1}{m_1}N).
\end{align*}
From this expression, we find that 
\begin{align}
\bra{\mathbf{l}}\mathcal{L}_d\ket{\mathbf{m}} =-\frac{1}{2} \left( \delta_{l_1m_1}\sum_{a,j}\overline{A^a_{jm_2}}A^a_{jl_2}\theta(\nu_{jm_2}, \nu_{jl_2}) + \delta_{l_2m_2}\sum_{i, j}\overline{A^a_{jl_1}}A^a_{jm_1}\theta(\nu_{jl_1}, \nu_{jm_1})\right).
\end{align}

Notice that here in the limit of $\sigma_E \to 0$, in both summands $E_{m_2} = E_{l_2}$ and $E_{m_1} = E_{l_1}$ for the $\theta$ term to be nonzero, so again $\nu_{l_1l_2} = \nu_{m_1m_2}$. 
This reasoning demonstrates that the Davies generator is block-diagonal, where each block corresponds to a Bohr frequency $\nu$. 
If the Bohr frequencies $\nu_{ij}$ for $i \neq j$ are all distinct, then the Davies generator consists of a ``classical'' block for the basis elements $\ket{m, m}$, and the rest of the Lindbladian is diagonal. 
This Lindbladian can be understood as a combination of an evolution on the diagonal (a classical Markov chain), and the damping of off-diagonal terms (a dephasing channel).  
\\[0.5cm]
Finally, we write the form of the coherent part, $\mathcal{L}_c$. 
It appears very similar to $\mathcal{L}_d$, since the commutator is taken with $B=  \sum_{a \in [M]} \sum_{\nu_1, \nu_2} \frac{\tanh(-\beta(\nu_1 - \nu_2)/4)}{2i}(\vA^a_{\nu_2})^\dag\vA^a_{\nu_1}$, similarly to the anticommutator with $ -\frac{1}{2}\sum_{a \in [M]} (\vA^a)^\dag\vA^a = -\frac{1}{2}\sum_{a \in [M]} \sum_{\nu_1, \nu_2} (\vA^a_{\nu_2})^\dag\vA^a_{\nu_1}$ in the decay term. 

\begin{align*}
\bra{\mathbf{l}}\mathcal{L}_c\ket{\mathbf{m}} =\frac{1}{2} \Biggl( &\delta_{l_1m_1}\tanh(\beta (\nu_{m_2l_2})/4)\sum_{i,j}\overline{A^i_{jm_2}}A^i_{jl_2}\theta(\nu_{jm_2}, \nu_{jl_2}) - \\ 
&\delta_{l_2m_2}\tanh(\beta (\nu_{l_1m_1})/4)\sum_{i, j}\overline{A^i_{jl_1}}A^i_{jm_1}\theta(\nu_{jl_1}, \nu_{jm_1})\Biggr ).
\end{align*}

Summing the descriptions of $\mathcal{L}_t$, $\mathcal{L}_d$, and $\mathcal{L}_c$ gives the full Lindbladian. 
Collecting all the terms obtained above, we find the below formulae. Note that they can all be taken in terms of $\alpha(\nu) = \theta(\nu, \nu)$ using the identity $\theta(\nu_1, \nu_2) = \alpha(\frac{\nu_1+\nu_2}{2})\exp\left(-\frac{(\nu_1 - \nu_2)^2}{8\sigma_E^2}\right)$.
\begin{align}
&\bra{\mathbf{l}}\mathcal{L}_t\ket{\mathbf{m}} = \sum_a A^a_{l_1m_1}\overline{A^a_{l_2m_2}}\theta(\nu_{l_1m_1}, \nu_{l_2m_2}),\nonumber\\
&\bra{\mathbf{l}}\mathcal{L}_d\ket{\mathbf{m}} =-\frac{1}{2} \left( \delta_{l_1m_1}\sum_{a,j}\overline{A^a_{jm_2}}A^a_{jl_2}\theta(\nu_{jm_2}, \nu_{jl_2}) + \delta_{l_2m_2}\sum_{a, j}\overline{A^a_{jl_1}}A^a_{jm_1}\theta(\nu_{jl_1}, \nu_{jm_1})\right),\nonumber\\
&\bra{\mathbf{l}}\mathcal{L}_c\ket{\mathbf{m}} =\frac{1}{2} \Biggl( \delta_{l_1m_1}\tanh(\beta \nu_{m_2l_2}/4)\sum_{a,j}\overline{A^a_{jm_2}}A^a_{jl_2}\theta(\nu_{jm_2}, \nu_{jl_2}) - \nonumber\\
&\hspace{2.3cm}\delta_{l_2m_2}\tanh(\beta \nu_{l_1m_1}/4)\sum_{a, j}\overline{A^a_{jl_1}}A^a_{jm_1}\theta(\nu_{jl_1}, \nu_{jm_1})\Biggr ). \label{eq:graphlindbladian}
\end{align}

It is worth noting that restricting to the indices ($m, m$), the coherent term is zero, since in both nonzero terms $l_1 = m_1 = m_2 = l_2$ and therefore the $\tanh$ terms vanish. 
The transition and decay parts of the Lindbladian simplify to:
\begin{align*}
&\bra{\mathbf{l}}\mathcal{L}_t\ket{\mathbf{m}} = \sum_a |A^a_{lm}|^2\alpha(\nu_{lm}),\\
&\bra{\mathbf{l}}\mathcal{L}_d\ket{\mathbf{m}} =-\delta_{lm}  \sum_{a,j}|A^a_{jm}|^2\alpha(\nu_{jm}).
\end{align*}
This restriction of the Lindbladian is therefore a classical Markov chain generator, as the columns sum to zero and the off-diagonal terms are nonnegative. 
We refer to this as the classical part of the Lindbladian—it represents the internal dynamics of the diagonal of the state. 
This restriction als has stationary state $\rho_{jj}$—indeed, the operator satisfies classical detailed balance, precisely because $\alpha$ satisfies the functional equation $\alpha(\nu) = \alpha(-\nu)\exp(-\beta \nu)$.
The Lindbladian is not necessarily block-diagonal (i.e., the classical restriction is not necessarily a block), but when it is, this represents a decoupling of the classical dynamics of the diagonal, that tends to the Gibbs distribution over the energy eigenstates, from the dephasing that takes off-diagonal elements to zero as the state converges to the Gibbs state.

\subsection{Graph-Local Jumps for Cyclic Graphs}\label{cyclicgapnproof}
In this section we prove Theorem \ref{thm:cyclicgapn_intro}. 
Consider a cyclic graph with $n$ vertices with adjacency matrix $\vH$, and eigenvectors $\ket{j}$.

The eigenbasis of a cyclic graph consists of vectors $\ket{j} = n^{-1/2}\sum_a \zeta_n^{-aj}\ket{e_a}$ with eigenvalues $2\cos\left(\frac{2 \pi j}{n} \right)$. 
The jump operators on the graph are chosen to be graph-local $\vA^a = n^{-1/2}\ketbra{e_a}{e_a}$, and therefore have coefficients $A^a_{lm} = n^{-3/2}\zeta_n^{a(l-m)}$. 
Now we observe that $\sum_a \zeta_n^{a(i-j)} = n\delta_{ij}$. We therefore have the relation that

\[\sum_a A_{l_1m_1}^a \overline{A_{l_2m_2}^a} = n^{-3}\sum_a \zeta_n^{a((l_1 - m_1)-(l_2-m_2))} = n^{-2}\delta_{(l_1-m_1)(l_2 - m_2)}.\]
We compute the components of the Lindbladian with these jump operators:
\begin{align*}
\bra{\mathbf{l}}\mathcal{L}_t\ket{\mathbf{m}}& = \sum_a A^a_{l_1m_1}\overline{A^a_{l_2m_2}}\theta(\nu_{l_1m_1}, \nu_{l_2m_2})
\bra{\mathbf{l}}\mathcal{L}_t\ket{\mathbf{m}}
\\&= n^{-2}\theta(\nu_{l_1m_1}, \nu_{l_2m_2})\delta_{(l_1 - m_1)(l_2 - m_2)}.
\end{align*}
With a similar computation, we obtain
\begin{align*}
\bra{\mathbf{l}}\mathcal{L}_d\ket{\mathbf{m}} &=-\frac{1}{2} \left( \delta_{l_1m_1}\sum_{a,j}\overline{A^a_{jm_2}}A^a_{jl_2}\theta(\nu_{jm_2}, \nu_{jl_2}) + \delta_{l_2m_2}\sum_{i, j}\overline{A^a_{jl_1}}A^a_{jm_1}\theta(\nu_{jl_1}, \nu_{jm_1})\right)\\
&= -\frac{n^{-2}}{2}\delta_{l_1m_1}\delta_{l_2m_2} (\theta(\nu_{jm_2}, \nu_{jl_2})+\theta(\nu_{jl_1}, \nu_{jm_1}))\\
&= -\frac{n^{-2}}{2}\delta_{l_1m_1}\delta_{l_2m_2} (\alpha(\nu_{jm_2})+\alpha(\nu_{jm_1})).
\end{align*}
Finally, we find that the coherent part vanishes:
\begin{align*}
\bra{\mathbf{l}}\mathcal{L}_c\ket{\mathbf{m}} &=\frac{1}{2} \Biggl( \delta_{l_1m_1}\tanh(\beta (\nu_{m_2l_2})/4)\sum_{i,j}\overline{A^i_{jm_2}}A^i_{jl_2}\theta(\nu_{jm_2}, \nu_{jl_2}) - \\ 
&\hspace{1.1cm}\delta_{l_2m_2}\tanh(\beta (\nu_{l_1m_1})/4)\sum_{i, j}\overline{A^i_{jl_1}}A^i_{jm_1}\theta(\nu_{jl_1}, \nu_{jm_1})\Biggr ) \\&= 
\frac{n^{-2}}{2}\delta_{l_1m_1}\delta_{l_2m_2}(\tanh(\beta(\nu_{m_2l_2})/4)\theta(\nu_{jm_2}, \nu_{jl_2})+\tanh(\beta(\nu_{l_1m_1})/4)\theta(\nu_{jl_1}, \nu_{jm_1}))\\&= 
0.
\end{align*}
The above formulae imply that the Lindbladian is block diagonal in the eigenbasis.
The coherent term vanishes and the decay term is fully diagonal.
Setting $k = m_1 - m_2$ and $k' = l_1 - l_2$, the transition term $\bra{\mathbf{l}}\mathcal{L}_t\ket{\mathbf{m}}$ is nonzero only if $k = k'$, due to the factor $\delta_{(l_1 - m_1)(l_2 - m_2)} = \delta_{(l_1 - l_2)(m_1 - m_2)}$. 
There is therefore one block corresponding to each $k$, which we will denote $\mathcal{L}^k$. 
The block for $k=0$ is the classical block of the Markov chain on the diagonal entries of the state. 
Finding its spectral gap, and then lower bounding the eigenvalues of the remaining $n-1$ blocks, yields a bound for the spectral gap of the Lindbladian. 

\subsubsection{Classical block}
We will show that the spectral gap of the classical block is asymptotically $\Omega(n^{-1})$. We have that
\[\bra{\mathbf{l}}\mathcal{L}^0_t\ket{\mathbf{m}} = \frac{1}{n^{2}}\delta_{(l_1 - m_1)(l_2 - m_2)}\theta(\nu_{l_1m_1}, \nu_{l_2m_2}) = \frac{1}{n^{2}}\alpha(\nu_{lm}).\] 
Secondly, \[\bra{\mathbf{l}}\mathcal{L}^0_d\ket{\mathbf{m}} = -\frac{1}{2n^{2}} \left( \delta_{l_1m_1}\delta_{l_2m_2}\left(\sum_{j} \alpha(\nu_{jm_2}) + \sum_{j} \alpha(\nu_{jm_1}) \right)\right) = -\frac{1}{n^{2}}\delta_{lm} \sum_j \alpha(\nu_{jm}).\]
It is evident that the columns sum to zero and the off-diagonal terms are nonnegative, so the matrix can be considered a Markov chain generator. 

We will repeatedly use the canonical path bound, a standard technique in the theory of classical Markov chains, to establish lower bounds on their spectral gaps.
The canonical path lemma states that, fixing a ``canonical'' path between each pair of vertices on a graph, a corresponding spectral graph bound can be obtained on the Markov chain.
For our purposes, every value of the Markov chain generator is nonzero, so it suffices to consider the canonical path between any two vertices to be the edge joining them. 
The lemma then simplifies to the following statement:
\begin{lem}\label{lem:canonpath}
Say $L^0$ is a Markov chain generator with stationary state $\sigma$. Then, the spectral gap satisfies the following bound:

\[\lambda \geq  \min_{(l, m)} \frac{L^0_{lm}}{\sigma_l}.\]
\end{lem}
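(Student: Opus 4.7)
The plan is to prove this as a specialization of the canonical-path (Poincaré) inequality for reversible Markov chain generators to the case where every pair of states is directly connected, which holds in our setting because $L^0_{lm}>0$ for all $l\neq m$ (every $\alpha(\nu_{lm})$ is strictly positive, since $\alpha = \gamma * g$ is a convolution of strictly positive functions).

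First I would record the two ingredients. Classical detailed balance with respect to $\sigma$, namely $\sigma_l L^0_{ml}=\sigma_m L^0_{lm}$, is inherited by the diagonal restriction $L^0$ of $\mathcal{L}_\beta$ from KMS detailed balance of $\mathcal{L}_\beta$: the computation in Section~\ref{sec:CKGinEigenbasis} exhibits $L^0$ as a reversible classical chain with Gibbs stationary weights, essentially because $\alpha(\nu)=\alpha(-\nu)e^{-\beta\nu}$. The standard variational formula then gives
\begin{equation*}
\lambda \;=\; \min_{f\neq \mathrm{const}}\frac{\mathcal{E}(f,f)}{\mathrm{Var}_\sigma(f)},
\end{equation*}
where, after using detailed balance to symmetrize the expression $-\langle f, L^0 f\rangle_\sigma$,
\begin{equation*}
\mathcal{E}(f,f)=\tfrac{1}{2}\sum_{l,m}\sigma_l L^0_{ml}(f_l-f_m)^2, \qquad \mathrm{Var}_\sigma(f)=\tfrac{1}{2}\sum_{l,m}\sigma_l\sigma_m(f_l-f_m)^2.
\end{equation*}

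Next I would compare the two quadratic forms term by term. For each $(l,m)$ with $l\neq m$, detailed balance gives
\begin{equation*}
\frac{\sigma_l\sigma_m}{\sigma_l L^0_{ml}}\;=\;\frac{\sigma_m}{L^0_{ml}}\;=\;\frac{\sigma_l}{L^0_{lm}},
\end{equation*}
so the pointwise bound $\sigma_l\sigma_m(f_l-f_m)^2 \leq (\sigma_l/L^0_{lm})\cdot \sigma_l L^0_{ml}(f_l-f_m)^2$ summed over pairs yields
\begin{equation*}
\mathrm{Var}_\sigma(f) \;\leq\; \Big(\max_{l\neq m}\tfrac{\sigma_l}{L^0_{lm}}\Big)\,\mathcal{E}(f,f),
\end{equation*}
which rearranges to $\lambda \geq \min_{(l,m)} L^0_{lm}/\sigma_l$, the claim.

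There is essentially no real obstacle here: this is the standard canonical-path (or flow) bound instantiated with the trivial choice of canonical path being a single direct edge, which is available precisely because $L^0$ has strictly positive off-diagonal entries. The only small subtleties are verifying classical detailed balance (already observed when deriving \eqref{eq:graphlindbladian}) and tracking index conventions carefully so that $L^0_{lm}$ ends up paired with $\sigma_l$ in the final ratio rather than with $\sigma_m$.
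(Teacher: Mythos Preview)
Your argument is correct and is precisely the canonical-path (Poincar\'e) inequality instantiated with single-edge paths, which is exactly what the paper invokes: the paper does not write out a proof of this lemma but simply cites the standard canonical-path bound and notes that, since all off-diagonal entries $L^0_{lm}$ are strictly positive, one may take each canonical path to be the direct edge $(l,m)$, whereupon the general bound collapses to $\lambda \ge \min_{(l,m)} L^0_{lm}/\sigma_l$. Your variational derivation via $\mathcal{E}(f,f)/\mathrm{Var}_\sigma(f)$ and termwise comparison using detailed balance is exactly how that specialization is established, so you are filling in the details the paper omits rather than taking a different route.
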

\noindent
Applying this bound in this case, and noting that the stationary state of this Markov chain is $\rho_{ll}$, we obtain the lower bound
\begin{align}
\lambda \geq \min_{l\neq m} \frac{\alpha(\nu_{lm})n^{-2}}{\rho_{ll}}. 
\end{align}
The first equality holds because every canonical path is length 1, so the only path containing the edge $(l, m)$ is $\gamma_{lm}$.
We may upper bound $\rho_{ll}$ with $\rho_{ll} \leq \frac{e^{2\beta }}{\sum_{i} E_i} \leq n^{-1} \frac{e^{2\beta}}{e^{-2\beta}} = n^{-1}e^{4\beta}$. 
Moreover, $|\nu_{lm}| \leq 4$ since all energies lie in $[-2, 2]$, so $\alpha$ is bounded below by a positive constant $C$ that is independent of $n$. 
We conclude that $\lambda \geq \min_{l\neq m} \frac{\alpha(\nu_{lm})n^{-2}}{\rho_{ll}} \geq \frac{Cn^{-2}}{e^{4\beta}n^{-1}} = \Omega(n^{-1})$, as desired. 

\subsubsection{Non-classical blocks}
We utilize the Gershgorin bound on the columns of the $k$th block of $-\mathcal{L}_\beta$, $-\mathcal{L}^k$, which states that the eigenvalues of $\mathcal{L}^k$ must be larger than $\min_\mathbf{m} \left[\braket{\mathbf{m}| \mathcal{L}^k|\mathbf{m}} - \sum_{\mathbf{l} \neq \mathbf{m}}|\braket{\mathbf{l}|\mathcal{L}^k|\mathbf{m}}|\right]$. The approach we follow is very similar to the one outlined in \cite{temmedaviesgap2013} for the Davies generator. 
For a fixed block $k>0$, we now evaluate the bound, first establishing the following lemma:
\begin{lem}
For sufficiently large energy resolution $\sigma_E > 0$ such that for each non-classical block of the CKG Lindbladian $\mathcal{L}_\beta$ specified above, $\braket{\mathbf{m}| \mathcal{L}^k|\mathbf{m}} - \sum_{\mathbf{l} \neq \mathbf{m}}|\braket{\mathbf{l}|\mathcal{L}^k|\mathbf{m}}| \geq \frac{C}{n^{2}}\sum_{\mathbf{l}} (\nu_{l_1m_1} - \nu_{l_2m_2})^2$ for some constant $C >0$. 
\end{lem}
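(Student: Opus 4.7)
The plan is to substitute the explicit formulas for $\mathcal{L}^k$ derived in \eqref{eq:graphlindbladian} into the Gershgorin quantity, reduce the desired inequality to a pointwise statement about pairs of Bohr frequencies, and finally establish that pointwise bound via convexity of the filter function $\alpha$ on the bounded spectral range of the cyclic graph.

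First, using the formulas already derived for $\mathcal{L}_t$, $\mathcal{L}_d$, and $\mathcal{L}_c$ specialized to a cyclic graph with graph-local jumps, the diagonal of $-\mathcal{L}^k$ at index $\mathbf{m}=(m_1,m_2)$ with $m_1-m_2=k$ equals
\[
-n^{-2}\alpha(0) \;+\; \tfrac{n^{-2}}{2}\sum_j\bigl(\alpha(\nu_{jm_1}) + \alpha(\nu_{jm_2})\bigr),
\]
while each off-diagonal magnitude is $|\braket{\mathbf{l}|\mathcal{L}^k|\mathbf{m}}| = n^{-2}\,\theta(\nu_{l_1m_1},\nu_{l_2m_2})$ (the absolute value is dispensable since $\gamma\geq 0$ forces $\alpha\geq 0$ and hence $\theta\geq 0$). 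Because $\mathbf{l}$ in block $k$ is parameterized by a single free index (with $l_2=l_1-k$), the sums $\sum_j \alpha(\nu_{jm_i})$ can be rewritten as $\sum_{\mathbf{l}} \alpha(\nu_{l_im_i})$ for $i=1,2$. The $\mathbf{l}=\mathbf{m}$ term contributes $\alpha(0)$, cancelling the isolated $-n^{-2}\alpha(0)$. Invoking the identity $\theta(\nu_1,\nu_2)=\alpha\bigl(\tfrac{\nu_1+\nu_2}{2}\bigr)\exp\bigl(-\tfrac{(\nu_1-\nu_2)^2}{8\sigma_E^2}\bigr)$, the Gershgorin bound becomes
\[
\bigl|\braket{\mathbf{m}|\mathcal{L}^k|\mathbf{m}}\bigr| - \sum_{\mathbf{l}\neq \mathbf{m}}|\braket{\mathbf{l}|\mathcal{L}^k|\mathbf{m}}| = \frac{1}{n^2}\sum_{\mathbf{l}}\!\left[\frac{\alpha(\nu_1)+\alpha(\nu_2)}{2} - \alpha\!\left(\tfrac{\nu_1+\nu_2}{2}\right)\exp\!\left(-\tfrac{(\nu_1-\nu_2)^2}{8\sigma_E^2}\right)\right],
\]
where $\nu_i := \nu_{l_im_i}$. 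It then suffices to prove the pointwise inequality that the bracketed expression is $\geq C(\nu_1-\nu_2)^2$ for each pair.

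Next, I split the summand into a convexity gap and a dephasing gap,
\[
\underbrace{\frac{\alpha(\nu_1)+\alpha(\nu_2)}{2} - \alpha\!\left(\tfrac{\nu_1+\nu_2}{2}\right)}_{\text{convexity gap}} \;+\; \underbrace{\alpha\!\left(\tfrac{\nu_1+\nu_2}{2}\right)\!\left(1-\exp\!\left(-\tfrac{(\nu_1-\nu_2)^2}{8\sigma_E^2}\right)\right)}_{\text{dephasing gap, }\geq 0}.
\]
The cyclic graph has spectrum in $[-2,2]$, so every $\nu_i \in [-4,4]$. The dephasing gap is non-negative since $\alpha\geq 0$ and $e^{-x}\leq 1$. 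For the convexity gap, Taylor's theorem with second-order Lagrange remainder expanded around the midpoint $\nu:=(\nu_1+\nu_2)/2$ yields
\[
\frac{\alpha(\nu_1)+\alpha(\nu_2)}{2} - \alpha(\nu) \;=\; \frac{(\nu_1-\nu_2)^2}{16}\bigl(\alpha''(\xi_1)+\alpha''(\xi_2)\bigr)
\]
for some $\xi_1,\xi_2\in[-4,4]$. Thus the lemma reduces to establishing a uniform positive lower bound $\alpha''(\nu)\geq c > 0$ on $[-4,4]$ for all sufficiently large $\sigma_E$.

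This last step is the main technical obstacle. I would proceed by writing $\alpha = g * \gamma$ with $g$ the Gaussian density of standard deviation $\sigma_E$ and $\gamma(\omega)=\exp\!\left(-\beta\max\!\left(\omega+\tfrac{\beta\sigma_E^2}{2},0\right)\right)$, and observing that for $\sigma_E$ sufficiently large the cusp location $-\tfrac{\beta\sigma_E^2}{2}$ lies far to the left of $[-4,4]$, so that $g(\cdot-\nu)$ places only super-exponentially small mass on the flat region $\omega\leq -\tfrac{\beta\sigma_E^2}{2}$. Replacing $\gamma$ by the smooth extension $\tilde{\gamma}(\omega):=e^{-\beta(\omega+\beta\sigma_E^2/2)}$ on all of $\mathbb{R}$ gives the explicit Gaussian integral $\tilde{\alpha}(\nu):=(g*\tilde{\gamma})(\nu)=e^{-\beta\nu}$, which is strictly convex with $\tilde{\alpha}''(\nu)=\beta^2 e^{-\beta\nu}\geq \beta^2 e^{-4\beta}$ on $[-4,4]$. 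Differentiating under the integral via $g''(u)=\tfrac{u^2-\sigma_E^2}{\sigma_E^4}g(u)$, the correction $|\alpha''(\nu)-\tilde{\alpha}''(\nu)|$ is controlled by a Gaussian tail integral of $g$ past $-\tfrac{\beta\sigma_E^2}{2}-\nu$, which is $O(\exp(-c\sigma_E^2))$ for a positive constant $c$. Hence for $\sigma_E$ large enough, $\alpha''(\nu)\geq \tfrac{1}{2}\beta^2 e^{-4\beta}$ uniformly on $[-4,4]$, and combining with the Taylor expansion above and the non-negativity of the dephasing gap gives the claimed pointwise bound with $C = \tfrac{1}{16}\beta^2 e^{-4\beta}$; summing over $\mathbf{l}$ then completes the proof.
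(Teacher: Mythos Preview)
Your reduction of the Gershgorin quantity to the identity
\[
\frac{1}{n^2}\sum_{\mathbf{l}}\left[\frac{\alpha(\nu_1)+\alpha(\nu_2)}{2}-\alpha\!\left(\tfrac{\nu_1+\nu_2}{2}\right)e^{-(\nu_1-\nu_2)^2/(8\sigma_E^2)}\right]
\]
is correct and matches the paper. The gap is in the last step, where you claim that for large $\sigma_E$ one has $\alpha''(\nu)\geq \tfrac{1}{2}\beta^2 e^{-4\beta}$ on $[-4,4]$. This is false: in fact $\alpha''(\nu)\to 0$ as $\sigma_E\to\infty$ for every fixed $\nu$. The error is in the tail estimate for $|\alpha''-\tilde\alpha''|$. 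On the region $\omega<-\beta\sigma_E^2/2$ the difference $\tilde\gamma(\omega)-\gamma(\omega)=e^{-\beta(\omega+\beta\sigma_E^2/2)}-1$ is not bounded; it grows exponentially as $\omega\to-\infty$. When this growth is multiplied by $g''(\nu-\omega)$, completing the square shows the integrand is a shifted Gaussian centred at $\omega=\nu-\beta\sigma_E^2$, which for large $\sigma_E$ lies \emph{inside} the region $\omega<-\beta\sigma_E^2/2$. So your ``tail integral'' actually carries the bulk of the mass, and the correction is of the same order as $\tilde\alpha''$ itself. A direct computation via $\alpha''=g*\gamma''$ (distributionally, $\gamma''=\beta^2\gamma\,\mathbf{1}_{\omega>-\beta\sigma_E^2/2}-\beta\,\delta_{-\beta\sigma_E^2/2}$) gives
\[
\alpha''(\nu)=-\beta\, g\!\left(\nu+\tfrac{\beta\sigma_E^2}{2}\right)+\beta^2 e^{-\beta\nu}\,\Phi\!\left(\tfrac{\nu-\beta\sigma_E^2/2}{\sigma_E}\right),
\]
and both terms are $O\!\bigl(\sigma_E^{-1}e^{-\beta^2\sigma_E^2/8}\bigr)$ for $\nu\in[-4,4]$. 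So the convexity gap alone cannot supply the quadratic lower bound; you genuinely need the dephasing gap you discarded.

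The paper's argument runs in the opposite regime: it takes $\sigma_E$ \emph{small}. The dephasing term $\alpha(\tfrac{\nu_1+\nu_2}{2})\bigl(1-e^{-(\nu_1-\nu_2)^2/(8\sigma_E^2)}\bigr)$, via $1-e^{-x}\geq x/2$ for $x\leq 1$, already gives $\gtrsim \tfrac{M}{16\sigma_E^2}(\nu_1-\nu_2)^2$ when $|\nu_1-\nu_2|\leq\sqrt{8}\,\sigma_E$. The convexity gap is only controlled by the crude smoothness bound $\geq -C'(\nu_1-\nu_2)^2$, and choosing $\sigma_E$ small makes $\tfrac{1}{16\sigma_E^2}$ beat $C'$. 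The complementary case $|\nu_1-\nu_2|>\sqrt{8}\,\sigma_E$ is handled separately using $e^{-(\cdot)}\leq\tfrac{1}{2}$ together with monotonicity of $\alpha$ to get a constant lower bound, which suffices since $(\nu_1-\nu_2)^2$ is itself bounded on $[-4,4]^2$.
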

\begin{proof}
We first expand the term of interest:
\begin{align}
&\braket{\mathbf{m}| \mathcal{L}^k|\mathbf{m}} - \sum_{\mathbf{l} \neq \mathbf{m}}|\braket{\mathbf{l}|\mathcal{L}^k|\mathbf{m}}|\nonumber \\&= \braket{\mathbf{m}|\mathcal{L}^k_t|\mathbf{m}}+\braket{\mathbf{m}|\mathcal{L}^k_d|\mathbf{m}}-\sum_{\mathbf{l} \neq \mathbf{m}} |\braket{\mathbf{l}|\mathcal{L}^k_t|\mathbf{m}}|\nonumber
\\&=  \frac{1}{n^{2}}\sum_{\mathbf{l}}\left(\frac{\alpha(\nu_{l_1m_1}) + \alpha(\nu_{l_2m_2})}{2} - \alpha\left(\frac{\nu_{l_1m_1}+\nu_{l_2m_2}}{2}\right)\right)\nonumber\\
&+\frac{1}{n^{2}}\sum_{\mathbf{l}}\alpha\left(\frac{\nu_{l_1m_1}+\nu_{l_2m_2}}{2}\right)\left(1-\exp\left(\frac{-(\nu_{l_1m_1} - \nu_{l_2m_2})^2}{8\sigma_E^2}\right)\right) \nonumber.
\end{align}
Note that the sums over $\mathbf{l}$ above are only for the rows in the $k$th block, i.e. $l_1 - l_2 = k$. 
Now, since $1-\exp(-x) \geq \frac{x}{2}$ for $x \leq 1$, we have that for $\vert \nu_{l_1m_1} - \nu_{l_2m_2}\vert \leq \sqrt{8\sigma_E^2} = \sqrt{8}\sigma_E$, \[1-\exp\left(\frac{-(\nu_{l_1m_1} - \nu_{l_2m_2})^2}{8\sigma_E^2}\right) \geq \frac{(\nu_{l_1m_1} - \nu_{l_2m_2})^2}{16\sigma_E^2},\]
and therefore 
\[\frac{1}{n^2}\alpha\left(\frac{\nu_{l_1m_1}+\nu_{l_2m_2}}{2}\right)\left(1-\exp\left(\frac{-(\nu_{l_1m_1} - \nu_{l_2m_2})^2}{8\sigma_E^2}\right) \right)\geq \frac{M}{n^2}\frac{(\nu_{l_1m_1} - \nu_{l_2m_2})^2}{16\sigma_E^2}.\]
The values of $\nu$ are all bounded in the range $[-2, 2]$ so $\alpha$ has some constant lower bound, which we term $M>0$ above. Moreover, $\alpha$ is smooth, being the convolution of the smooth Gaussian with another function (the Metropolis-like filter). 
We therefore also have that 
\[\frac{1}{2}\left(\alpha(\nu_{l_1m_1})+\alpha(\nu_{l_2m_2})\right) - \alpha\left(\frac{(\nu_{l_1m_1}+\nu_{l_1m_1})}{2}\right) \geq -C'(\nu_{l_1m_1} - \nu_{l_2m_2})^2\]
for some $C' >0$. 
Setting $\sigma_E$ to a sufficiently small constant that $\frac{1}{16\sigma_E^2} \geq 2C'$, we may lower bound the expression of interest, taking only the summands $\mathbf{l'}$ with $\vert \nu_{l_1m_1} - \nu_{l_2m_2}\vert \leq \sqrt{8}\sigma_E$:
\begin{align}
\frac{1}{n^{2}}\left(\frac{1}{16\sigma_E^2} - C'\right)\sum_{\mathbf{l'}} (\nu_{l'_1m_1}-\nu_{l'_2m_2})^2  \geq \frac{C'}{n^{2}}\sum_{\mathbf{l'}} (\nu_{l'_1m_1}-\nu_{l'_2m_2})^2.
\end{align}
For $\vert \nu_{l_1m_1} - \nu_{l_2m_2}\vert \geq \sqrt{8}\sigma_E$, 
\[\exp\left(\frac{-(\nu_{l_1m_1} - \nu_{l_2m_2})^2}{8\sigma_E^2}\right) \leq \exp(-1) \leq \frac{1}{2}.\]
Note that since the Metropolis-like filter is monotone, so is the Gaussian-blurred $\alpha$, and therefore  $\alpha(\nu_{l_1m_1})+\alpha(\nu_{l_2m_2}) - \alpha\left(\frac{\nu_{l_1m_1}+\nu_{l_2m_2}}{2}\right) \geq M$, where $M$ is again the constant lower bound to $\alpha$ on the region $[-2, 2]$. 
For the summands for these values of $\mathbf{l}$, therefore, we obtain the lower bound of
\begin{align*}
\frac{1}{n^{2}}\left(\frac{\alpha(\nu_{l_1m_1}) + \alpha(\nu_{l_2m_2})}{2} - \alpha\left(\frac{\nu_{l_1m_1}+\nu_{l_2m_2}}{2}\right)\exp\left(\frac{-(\nu_{l_1m_1} - \nu_{l_2m_2})^2}{8\sigma_E^2}\right)\right) \geq \frac{M}{2n^{2}}.
\end{align*}
This value is at most a constant factor smaller than $\frac{1}{n^{2}}(\nu_{l_1m_1}-\nu_{l_2m_2})^2$, since $(\nu_{l_1m_1} - \nu_{l_2m_2})^2$ is upper bounded by a constant. 
As a result, we may finally assert that there exists a constant $C$ such that 
\[\braket{\mathbf{m}| \mathcal{L}^k|\mathbf{m}} - \sum_{\mathbf{l} \neq \mathbf{m}}|\braket{\mathbf{l}|\mathcal{L}^k|\mathbf{m}}| \geq \frac{C}{n^{2}}\sum_\mathbf{l} (\nu_{l_1m_1}-\nu_{l_2m_2})^2.\]
\end{proof}
Using the above lemma, we can now establish a lower bound on the spectral gap. 
Note that $\nu_{l_1m_1} - \nu_{l_2m_2} = \nu_{l_1l_2} - \nu_{m_1m_2} = (\cos(\frac{2\pi l_1}{n})-\cos(\frac{2\pi l_2}{n})) - (\cos(\frac{2\pi m_1}{n})-\cos(\frac{2\pi m_2}{n}))$. 
This value is equal to $\nu_{l_1m_1} - \nu_{l_2m_2} = \sin(\frac{\pi k}{n} )(\sin(\frac{\pi (l_1 + l_2)}{n} ) - \sin(\frac{\pi (m_1 + m_2)}{n} ))$. 
There is a constant fraction of choices of each $l_1$ such that $\left|\sin\left(\frac{\pi (l_1 + l_2)}{n} \right)\right| \geq 0.5$ and such that the sign of $\sin\left(\frac{\pi (l_1 + l_2)}{n} \right)$ is opposite to the sign of $\sin\left(\frac{\pi (m_1 + m_2)}{n} \right)$, so their difference has magnitude at least $0.5$.
Then, taking this fraction to be $f$:
\[\frac{C}{n^{2}}\sum_{\mathbf{l}} (\nu_{l_1m_1} - \nu_{l_2m_2})^2 \geq \frac{fC}{4n} \left|\sin\left(\frac{\pi k}{n}\right)
\right|^2 \]
The minimum value of this for any $k$ is $\Omega(n^{-3})$.
\subsubsection{Final bound}\label{sec:ProofProductCycle}
Now we prove Theorem~\ref{thm:cyclicgapn_intro}.
As shown above, $\mathcal{L}_\beta$ is a block diagonal matrix, with one classical block and $n-1$ non-classical blocks. 
The classical block is a Markov chain generator, and using the canonical path lemma the spectral gap was shown to be $\Omega(n^{-1})$. 
Meanwhile, the $k$th non-classical block has a minimum eigenvalue that is $\Omega(n^{-3})$, so the spectral gap of the full Lindbladian is $\Omega(n^{-3})$. 
\\[0.5cm]
To prove the upper bound on the spectral gap, we consider the row vector $v$ of length $n^{2}$, that is $1$ on the indices that correspond to the block $k=1$, and 0 elsewhere. 
As an operator, it takes the value 1 on one offdiagonal with a fixed $l_1 - l_2 =k$.
When calculating $(v\mathcal{L}_\beta)_\mathbf{m} = (\mathcal{L}_\beta^\dag v)_\mathbf{m}$, we obtain the same formula as found in the Gershgorin bound calculation above—indeed, the values on the diagonal are all positive, while the off-diagonal values are negative:
\begin{align}
(\mathcal{L}_\beta^\dag v)_\mathbf{m}=&\frac{1}{n^{2}}\sum_{\mathbf{l}}\left(\frac{\alpha(\nu_{l_1m_1}) + \alpha(\nu_{l_2m_2})}{2} - \alpha\left(\frac{\nu_{l_1m_1}+\nu_{l_2m_2}}{2}\right)\right)\nonumber \\
+ &\frac{1}{n^{2}}\sum_{\mathbf{l}}\alpha\left(\frac{\nu_{l_1m_1}+\nu_{l_2m_2}}{2}\right)\left(1-\exp\left(\frac{-(\nu_{l_1m_1} - \nu_{l_2m_2})^2}{8\sigma_E^2}\right)\right).
\end{align}
The previous lower bound shows that each of these values is nonnegative.
Since $l_1 - l_2 = m_1-m_2 = 1$, $\nu_{l_1m_1} - \nu_{l_2m_2} = \nu_{l_1m_1} - \nu_{l_2m_2} $ is $O(n^{-1})$. 
The first term in the expression, since it is composed of summands that are second differences in $\alpha$, is $n$ terms that are $O(n^{-2})$ scaled by $n^{-2}$—it is therefore $O(n^{-3})$. 
The summands of the second term can be similarly estimated to be $O(n^{-2})$, so it is also $O(n^{-3})$. 
The terms of $(\mathcal{L}_\beta^\dag v)_\mathbf{m}$ are therefore nonnegative and are at most $Cn^{-3}$ for some constant $C$. 

To prove the upper bound, we make use of the inner product $\langle \ ,\ \rangle_{\rho_\beta^{-1}}$ with respect to which $\mathcal{L}_\beta^\dag$ is self-adjoint. 
Note that $\langle \ketbra{i_1}{i_2}, \ketbra{j_1}{j_2} \rangle_{\rho_\beta^{-1}} = \delta_{i_2j_1}\delta_{i_1j_2} (\rho_\beta)_{i_1i_1}^{1/2}(\rho_\beta)_{i_2i_2}^{1/2} \geq 0$. 
Hence, when 
 $\langle \ , \ \rangle_{\rho_\beta^{-1}}$ is expressed in the energy basis as $vMw$ for a matrix $M$, $M$ has nonnegative elements. 
 We therefore may upper bound $\langle (\mathcal{L}_\beta^\dag v), v \rangle_{\rho_\beta^{-1}}$ by $Cn^{-3}\langle v, v \rangle$, since the coefficients of $(\mathcal{L}_\beta^\dag v)$ and $v$ are nonnegative and $(\mathcal{L}_\beta^\dag v)$ is dominated by $Cn^{-3}v$ for some $C>0$. 
 We conclude that $\frac{\langle (\mathcal{L}_\beta^\dag v), v \rangle_{\rho_\beta^{-1}}}{\langle v, v \rangle_{\rho_\beta^{-1}}}$ is $O(n^{-3})$. 
 Since $\mathcal{L}_\beta^\dag$ is self-adjoint with respect to this inner product, we obtain that $v$ has Rayleigh quotient $O(n^{-3})$. 
 $v$ is also orthogonal to $I$, since $\langle v, \rho_\beta\rangle_{\rho_\beta^{-1}} = \tr(v\rho_\beta^{1/2}I\rho_\beta^{1/2}) = \tr(v\rho_\beta) = 0$, where the last equality holds since as an operator $v$ is zero along the diagonal. 
 $v$ has no overlap with $I$, the fixed point of $\mathcal{L}_\beta^\dag$, and therefore its Rayleigh quotient is an upper bound on the spectral gap.
 The spectral gap must therefore also be $O(n^{-3})$.

\subsection{Bounded Degree Systems with 1-Design Jumps}\label{sec:BoundedProof}
In this section we prove Lemma \ref{lem:boundedaveragelindbladian} and Theorem \ref{thm:constgapbounded_intro}, demonstrating an improvement over the result in Theorem \ref{thm:cyclicgapn_intro} for local jumps in cyclic graphs. 

\subsubsection{Proof of Lemma \ref{lem:boundedaveragelindbladian}}\label{pf:boundedaveragelindbladian}
To prove Lemma \ref{lem:boundedaveragelindbladian}, we make use of the expressions~\eqref{eq:graphlindbladian} for the transition, decay, and coherent parts of a general Lindbladian, but with simply one Haar random jump (or equivalently any 1-design since the second moments of the operators are equal). The transition term is 
\[\bra{\mathbf{l}}\mathcal{L}_t\ket{\mathbf{m}} = A_{l_1m_1}\overline{A_{l_2m_2}}\theta(\nu_{l_1m_1}, \nu_{l_2m_2}).
\]
The expectation of the product $A_{l_1m_1}\overline{A_{l_2m_2}}$ is zero if $l_1 \neq l_2$ or $m_1 \neq m_2$. 
The expectation of the norm squared of an element, on the other hand, is $n^{-1}$. 
So we obtain
\begin{align}
\mathbb{E}\left[\bra{\mathbf{l}}\mathcal{L}_t\ket{\mathbf{m}}\right] &= \frac{\delta_{l_1l_2}\delta_{m_1m_2}}{n}\theta(\nu_{l_1m_1}, \nu_{l_2m_2}) \\
&= \frac{\delta_{l_1l_2}\delta_{m_1m_2}}{n}\alpha(\nu_{l_1m_1}).
\end{align}
Meanwhile, 
\begin{align*}
\mathbb{E}\left[\bra{\mathbf{l}}\mathcal{L}_d\ket{\mathbf{m}}\right] &=-\frac{1}{2} \left( \delta_{l_1m_1}\sum_{j}\mathbb{E}\left[\overline{A_{jm_2}}A_{jl_2}\right]\theta(\nu_{jm_2}, \nu_{jl_2}) + \delta_{l_2m_2}\sum_{ j}\mathbb{E}\left[\overline{A_{jl_1}}A_{jm_1}\right]\theta(\nu_{jl_1}, \nu_{jm_1})\right)\\
&= -\frac{1}{2} \left( \frac{\delta_{l_1m_1}\delta_{l_2m_2}}{n}\sum_{j}\theta(\nu_{jm_2}, \nu_{jm_2}) + \frac{\delta_{l_1m_1}\delta_{l_2m_2}}{n}\sum_{ j}\theta(\nu_{jm_1}, \nu_{jm_1})\right)\\
&= -\frac{1}{2} \left( \frac{\delta_{l_1m_1}\delta_{l_2m_2}}{n}\sum_{j}(\alpha(\nu_{jm_2})+\alpha(\nu_{jm_1}))\right), 
\\
\mathbb{E}\left[\bra{\mathbf{l}}\mathcal{L}_c\ket{\mathbf{m}}\right] &=\frac{1}{2} \Biggl( \delta_{l_1m_1}\tanh(\beta (\nu_{m_2l_2})/4)\sum_{j}\mathbb{E}\left[\overline{A_{jm_2}}A_{jl_2}\right]\theta(\nu_{jm_2}, \nu_{jl_2}) - \\ 
&\hspace{1.1cm}\delta_{l_2m_2}\tanh(\beta (\nu_{l_1m_1})/4)\sum_{ j}\mathbb{E}\left[\overline{A_{jl_1}}A_{jm_1}\right]\theta(\nu_{jl_1}, \nu_{jm_1})\Biggr ) = 0.
\end{align*}
The final Lindbladian $\mathcal{L}_\mu$ is therefore completely diagonal except for a ``classical block'' $\mathcal{L}^0$ of indices $\ket{(m, m)}$, whose off-diagonal terms are populated by the elements of $\mathcal{L}_t$. 
The spectral gap of this Lindbladian is therefore the minimum of the values along the diagonal, which are all positive, and the spectral gap of the classical block. 

As in \ref{cyclicgapnproof}, we use the Lemma \ref{lem:canonpath} to bound the spectral gap of the classical block.
Using this lemma, with the canonical path being the edge between a pair of vertices, we obtain the bound 
\begin{align}
\lambda \geq  \min_{l\neq m} \frac{\alpha(\nu_{lm})n^{-1}}{\rho_{ll}}.
\end{align}
We may upper bound $\rho_{ll}$ with $\rho_{ll} \leq \frac{e^{-E_\text{min} \beta }}{\sum_{i} E_i} \leq n^{-1} \frac{e^{-E_\text{min}\beta}}{e^{-E_\text{max}\beta}} = n^{-1}e^{-O(1)} = O(n^{-1})$ due to the fact that $\beta\norm{\vH} = O(1)$. 
Similarly, since $\alpha$ is the convolution of a Gaussian of radius $\sigma_E = \beta^{-1}$ with $\gamma(\omega) = \exp\left(-\beta \max
\left(\omega + \frac{\beta\sigma_E^2}{2}, 0\right)\right)$, the assumption that $\beta\norm{\vH} = O(1)$ again yields that $\alpha$ evaluated at $\nu_{lm}$ is $\Omega(1)$. Indeed, within $O(\beta^{-1})$ of any value of $\nu_{lm}$, $\gamma(\omega)$ is $\Omega(1)$, and as a consequence $\alpha(\nu_{lm}) = \Omega(1)$. 
This yields a lower bound on $\lambda$ of $\min_{l \neq m} \frac{\Omega(n^{-1})}{O(n^{1})} = \Omega(1)$. 

Now, we lower bound the diagonal elements outside of the classical block. 
Since each such element is of the form

\[\mathbb{E}\left[\bra{\mathbf{m}}\mathcal{L}_d\ket{\mathbf{m}}\right] = -\frac{1}{2} \left( \frac{1}{n}\sum_{j}\alpha(\nu_{jm_2}) + \frac{1}{n}\sum_{ j}\alpha(\nu_{jm_1})\right)\]
and we have already established that each $\alpha$ term is $\Omega(1)$, so the resulting diagonal values are all $\Omega(1)$. We conclude that the spectral gap of the Lindbladian is $\Omega(1)$.

\subsubsection{Proof of Theorem \ref{thm:constgapbounded_intro}}\label{pf:constgapbounded}
We construct our Lindbladian by sampling $M = \Theta(\log(n))$ unnormalized jumps $\vA^a$ from the 1-design $\mathcal{D}(U(n))$ as in Definition~\ref{def:design}, each with a corresponding Lindbladian $\mathcal{L}_a$ (which has one jump $\vA^a$ along with its adjoint, normalized by 2). 
Then, we want to prove that with high probability, $\mathcal{L}_\beta = \frac{2}{M}\sum_{a=1}^{M/2} \mathcal{L}_a$, the Lindbladian with all $M$ of these jumps now normalized by the number of jumps, has spectral gap bounded by a constant. 

To prove the result, we shall make use of the matrix Bernstein's inequality for our concentration bound:

\begin{lem}[cf. \cite{tropp2015introduction}]\label{lem:matrixbernstein}
Say $X_1, \dots, X_{N}$ are independent random $d \times d$ Hermitian matrices, such that $\mathbb{E}[X_i] = 0$ and $\norm{X_i} \leq R$. 
Define $Y = \frac{1}{N}\sum_{i=1}^N X_i$, and say that $N\mathbb{E}[Y^2] \leq T$. Then $\Pr(\norm{Y} \geq t) \leq 2d \exp(-\frac{3}{2}\frac{Nt^2}{3T + Rt})$.
\end{lem}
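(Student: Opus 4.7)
This is the matrix Bernstein inequality of Ahlswede--Winter / Tropp, and I would follow the matrix Laplace transform (Chernoff) approach. By symmetry (applying the one-sided bound to $\{X_i\}$ and $\{-X_i\}$ and union bounding), it suffices to bound $\Pr(\lambda_{\max}(S)\ge Nt)$ for $S:=\sum_{i=1}^N X_i = NY$; this accounts for the factor of $2$ in the final $2d$. Using $e^{\theta \lambda_{\max}(S)}=\lambda_{\max}(e^{\theta S})\le \tr e^{\theta S}$ and Markov's inequality, for any $\theta>0$,
\begin{equation*}
\Pr\bigl(\lambda_{\max}(S)\ge Nt\bigr)\;\le\; e^{-\theta Nt}\,\mathbb{E}\bigl[\tr\exp(\theta S)\bigr].
\end{equation*}

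\textbf{Subadditivity of the cumulant.} The crux is to move the sum over $i$ inside a logarithm of the matrix moment generating function. Because the $X_i$ do not commute, one cannot simply factor the exponential; instead I would invoke Lieb's concavity theorem, which Tropp packages as the master inequality
\begin{equation*}
\mathbb{E}\bigl[\tr\exp(\theta S)\bigr]\;\le\;\tr\exp\!\Bigl(\sum_{i=1}^N\log\mathbb{E}\bigl[e^{\theta X_i}\bigr]\Bigr).
\end{equation*}
To bound each factor, apply the scalar inequality $e^{x}\le 1+x+g(\theta)\,x^{2}$ (valid on $|x|\le \theta R$, with $g(\theta):=(e^{\theta R}-1-\theta R)/(\theta R)^{2}$) to the spectrum of $\theta X_i$, use $\mathbb{E}[X_i]=0$, and then operator monotonicity of the logarithm together with $\log(I+A)\preceq A$ for $A\succeq 0$ to conclude $\log\mathbb{E}[e^{\theta X_i}]\preceq \theta^{2} g(\theta)\,\mathbb{E}[X_i^{2}]$. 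Since the $X_i$ are independent and mean-zero, $\sum_i \mathbb{E}[X_i^{2}]=N^{2}\mathbb{E}[Y^{2}]$, whose operator norm is at most $NT$ by the hypothesis $N\mathbb{E}[Y^2]\le T$.

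\textbf{Optimizing the parameter.} Combining the previous two displays with $\tr\exp(A)\le d\,\exp(\|A\|)$ for Hermitian $A$ gives
\begin{equation*}
\Pr\bigl(\lambda_{\max}(S)\ge Nt\bigr)\;\le\; d\,\exp\!\bigl(-\theta Nt + \theta^{2} g(\theta)\cdot NT\bigr).
\end{equation*}
Choosing $\theta=R^{-1}\log(1+Rt/T)$ simplifies the exponent to $-(NT/R^{2})\bigl[(1+u)\log(1+u)-u\bigr]$ with $u=Rt/T$, and the standard scalar Bernstein estimate $(1+u)\log(1+u)-u\ge \tfrac{3u^{2}}{2(3+u)}$ converts this into the claimed $-\tfrac{3}{2}\,\tfrac{Nt^{2}}{3T+Rt}$. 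Doubling the probability for the two-sided bound produces the $2d$ prefactor.

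\textbf{Main obstacle.} The only genuinely non-elementary input is Lieb's concavity theorem, which is what enables the cumulant subadditivity step for non-commuting matrices; without it one can still run an Ahlswede--Winter style argument via Golden--Thompson, but this costs an extra factor and loses the sharp Bernstein form $3T+Rt$ in the denominator. Since the paper already cites Tropp's monograph, the cleanest write-up is simply to quote his Theorem~6.1.1 (or its Bernstein specialization) rather than redo this Lieb-based derivation in full.
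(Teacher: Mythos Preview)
Your sketch is the standard Tropp derivation via the matrix Laplace transform and Lieb's concavity, and the steps (cumulant subadditivity, the scalar bound $e^{x}\le 1+x+g(\theta)x^{2}$, the optimizing choice $\theta=R^{-1}\log(1+Rt/T)$, and the inequality $(1+u)\log(1+u)-u\ge \tfrac{3u^{2}}{2(3+u)}$) are correct and yield exactly the stated exponent. The paper, however, does not prove this lemma at all: it is simply quoted from Tropp's monograph (the ``cf.''\ citation), so your final remark---that the cleanest write-up is just to cite Tropp---is precisely what the authors do.
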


Call $\delta \mathcal{L}_a = \mathcal{L}_a - \mathcal{L}_\mu$, where $\mathcal{L}_\mu = \mathbb{E}_{\vA \sim \mathcal{D}}[\mathcal{L}_\beta]$. 
Each of these operators has zero expectation.
We have that $\delta \mathcal{L} = \frac{2}{M}\sum_{a=1}^{M/2} \delta \mathcal{L}_a$ is precisely the discrepancy between our Lindbladian $\mathcal{L}_\beta = \frac{2}{M}\sum_{a=1}^{M/2}\mathcal{L}_a$ and the expected Lindbladian $\mathcal{L}_\mu$. 
We will apply Bernstein's inequality for $X_a = \delta \mathcal{L}_a$, $Y = \delta \mathcal{L}$, and $N = \frac{M}{2}$.
By Lemma~\ref{lem:operatornorm}, the CKG Lindbladian has operator norm $O(\log(\beta \norm{\vH}))$, which is $O(1)$ in this regime.
We denote this upper bound $\frac{R}{2}$. 
We can now verify the condition $N\mathbb{E}[Y^2] \leq T$ in the statement of Lemma~\ref{lem:matrixbernstein}, since we have that $\norm{\delta \mathcal{L}_a} \leq \norm{\mathcal{L}_a} + \norm{\mathcal{L}_\mu} \leq R = \Theta(1)$, and 

\[\frac{M}{2}\left \lVert\mathbb{E}\left[\left(\frac{\sum_{a=1}^{M/2} \delta \mathcal{L}_a}{M/2}\right)^2\right]\right \rVert \leq \frac{2}{M}\sum_{a=1}^{M/2} \norm{\delta \mathcal{L}_a}^2 \leq R^2.\]

Every operator that satisfies detailed balance is Hermitian in some fixed basis, and therefore each $\mathcal{L}_a$, as well as $\mathcal{L}_\mu$, can be considered Hermitian. 
By linearity, the same holds true for $\delta \mathcal{L}_a = \mathcal{L}_a - \mathcal{L}_\mu$.
The operators $\delta \mathcal{L}_a$ therefore satisfy the conditions of the matrix Bernstein inequality, and so their average $\delta \mathcal{L}$ satisfies

\[\Pr\left(\left\Vert \delta \mathcal{L} \right \Vert\geq t\right) \leq 2n^2\exp\left(-\frac{3}{4}\frac{Mt^2}{3R^2 + Rt}\right),\]
where the $n^2$ is due to an overhead of the dimension of the Lindbladian.

For any constant $t$, there exists an $M=\Theta(\log(n))$ such that the term inside the exponential is at least $3\log(n)$, since $R$ is a constant. 
The probability that $\norm{\delta \mathcal{L}} \leq t$ is then arbitrarily close to 1 for sufficiently large $n$ and choice of $M = \Theta(\log(n))$. 
By Lemma \ref{lem:boundedaveragelindbladian}, $\mathcal{L}_\mu$ has a constant spectral gap bounded below by some $C$. 
By Weyl's theorem, the eigenvalues of $\mathcal{L}_\mu + \delta \mathcal{L}$ may differ by at most $t$ from those of $\mathcal{L}_\mu$. 
Choosing $t \leq \frac{C}{2}$, it follows that $\mathcal{L}_a$, with any constant probability, has constant spectral gap. 

\subsection{Unbounded Degree Systems with Haar Random Jumps}\label{sec:UnboundedProof}
\subsubsection{Proof of Lemma \ref{lem:averagelindbladian}}
We follow the proof of Lemma \ref{lem:boundedaveragelindbladian}.
The expected Lindbladian takes the following form:
\begin{align*}
\mathbb{E}\left[\bra{\mathbf{l}}\mathcal{L}_t\ket{\mathbf{m}}\right] &= \frac{\delta_{l_1l_2}\delta_{m_1m_2}}{n}\alpha(\nu_{l_1m_1}),\\
\mathbb{E}\left[\bra{\mathbf{l}}\mathcal{L}_d\ket{\mathbf{m}}\right] &= -\frac{1}{2} \left( \frac{\delta_{l_1m_1}\delta_{l_2m_2}}{n}\sum_{j}\alpha(\nu_{jm_2}) + \frac{\delta_{l_1m_1}\delta_{l_2m_2}}{n}\sum_{ j}\alpha(\nu_{jm_1})\right),
\\
\mathbb{E}\left[\bra{\mathbf{l}}\mathcal{L}_c\ket{\mathbf{m}}\right] &=  0.
\end{align*}
The final Lindbladian $\mathcal{L}_\mu$ is diagonal except for a ``classical block'' $\mathcal{L}^0$ of indices $\ket{(m, m)}$, whose off-diagonal terms are populated by the elements of $\mathcal{L}_t$. 
The spectral gap of this Lindbladian is therefore the minimum of the values along the diagonal, which are all positive, and the spectral gap of the classical block. 

As in Lemma \ref{lem:boundedaveragelindbladian}, we may obtain the following bound on the classical block:
\begin{align}
\lambda \geq \min_{l\neq m} \frac{\alpha(\nu_{lm})n^{-1}}{\rho_{ll}}.
\end{align}
Now, note that $\rho_{ll} = -n\exp(-\beta E_l)\left(\frac{1}{n}\sum_j \exp(-\beta E_j)\right)$, so the above expression simplifies 

\[\lambda \geq \min_{l \neq m}\frac{\alpha(\nu_{lm})\left(\frac{1}{n}\sum_j \exp(-\beta E_j)\right)}{\exp(-\beta E_l)}.\]
By assumption, $\delta(n)$ of the eigenvalues lie within $C\beta^{-1}$ of $E_\text{min}$ for some constant $C$. 
We may therefore bound
\[\left(\frac{1}{n}\sum_j \exp(-\beta E_j)\right) \geq \delta(n)\exp(-\beta E_\text{min} - C).\]
Moreover, $\alpha$ is a convolution of a Gaussian with a monotone function $\lambda$.
So, $\alpha(\nu_{lm}) \geq \frac{1}{2}\gamma(\nu_{lm}) = \frac{1}{2}\exp\left(-\beta \max\left(\omega + \beta \frac{\sigma_E^2}{2}, 0\right)\right)$, which for $\sigma_E = \beta^{-1}$ is easily seen to be within a constant $D$ of $\exp(-\beta\max(\omega, 0))$. 

So we write
\begin{align*}
\lambda &\geq D\exp(-C)\min_{l \neq m} \frac{\delta(n)\exp(-\beta\max(\nu_{lm}, 0))\exp(-\beta E_\text{min})}{\exp(-\beta E_l)}\\
&\geq D\exp(-C)\min_{l \neq m} \frac{\delta(n)\exp(-\beta E_\text{min})}{\exp(-\beta \max(E_l, E_m))}\\
&= \Omega(\delta(n)).
\end{align*}
We now lower bound the diagonal elements outside of the classical block. 
Each such element is of the form
\[\mathbb{E}\left[\bra{\mathbf{m}}\mathcal{L}_d\ket{\mathbf{m}}\right] = -\frac{1}{2} \left( \frac{1}{n}\sum_{j}\alpha(\nu_{jm_2}) + \frac{1}{n}\sum_{ j}\alpha(\nu_{jm_1})\right).\]
By assumption, we have that $\delta(n)$ of eigenvalues are within $O(\beta^{-1})$ of $\lambda_\text{min}$. 
For these eigenvalues, $\nu_{jm_1}$ and $\nu_{jm_2}$ are at most $O(\beta^{-1})$, and therefore $\alpha$ is bounded below. 
The above sum is therefore $\frac{1}{n}\Omega(\delta(n)n) = \Omega(\delta(n))$, as desired.

\subsubsection{Proof of Theorem~\ref{thm:gapunbounded_intro}}\label{pf:gapunbounded_intro}
We follow the proof of Theorem $\ref{thm:constgapbounded_intro}$. 
Defining once again $\mathcal{L}_a$ to be the Lindbladian with one jump operator $\vA^a$ and its adjoint (normalized by 2), and defining $\delta \mathcal{L}_a = \mathcal{L}_a - \mathcal{L}_\mu$, we can apply the matrix Bernstein's inequality to obtain
\[\Pr\left(\left\Vert \delta \mathcal{L} \right \Vert\geq t\right) \leq n^2\exp\left(-\frac{3}{4}\frac{Mt^2}{3R^2 + Rt}\right),\]
since all the conditions of the inequality are again satisfied. Here $\delta \mathcal{L} = \frac{2}{M}\sum_{a=1}^{M/2} \delta \mathcal{L}_a$, so that $\mathcal{L}_\beta = \mathcal{L}_\mu + \delta \mathcal{L}$. $R = O(\log(\beta \norm{\vH}))$, since $\norm{\delta \mathcal{L}_a} \leq \norm{\mathcal{L}_a} + \norm{\mathcal{L}_\mu} = O(\log(\beta \norm{\vH}))$. This latter equality arises from the operator norm bound of the CKG Lindbladian by Lemma~\ref{lem:operatornorm}.

By Lemma \ref{lem:boundedaveragelindbladian}, $\mathcal{L}_\mu$ has a constant spectral gap bounded below by some $C\delta(n)$. Selecting $t$ to be $\frac{C}{2}\delta(n)$, there exists an $M = \Theta(\delta(n)^{-2}\log(\beta\norm{ \vH})^2\log(n))$ for which the value inside the exponential is at least $3\log(n)$. 
The probability that $\norm{\delta \mathcal{L}} \leq t$ is therefore above any constant probability for sufficiently large $n$. 
By Weyl's theorem, the eigenvalues of $\mathcal{L}_\mu + \delta \mathcal{L}$ may differ by at most $t \leq \frac{C}{2}\delta(n)$. 
It follows that $\mathcal{L}_a$, with any constant probability, has spectral gap $\Omega(\delta(n))$. 

\subsection{Unbounded Degree System with Local Jumps}
In this section, we prove the spectral gap bounds on the Lindbladian for a a hypercube, where using well-chosen local jumps achieves polylogarithmic time complexity in the number of vertices (polynomial in the number of qubits), an exponential improvement over naive 1-design jumps. 

\subsubsection{Proof of Theorem \ref{thm:hypercubelocalgap}}
The proof is closely related to the spectral gap bound of the non-interacting Hamiltonian in \cite{consttempquantadv}. 
We first prove the following lemma:
\begin{lem}
The CKG Lindbladian $\mathcal{L}_\beta$ with Hamiltonian $H_1 \otimes I + I \otimes H_2$ and jump operator $\vA \otimes I$ can be decomposed as $\mathcal{L}_1 \otimes I$ (i.e., it only acts on the first space).
\end{lem}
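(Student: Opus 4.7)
The plan is to show that each of the three ingredients making up $\mathcal{L}_\beta$ (transition, decay, and coherent) factorizes in the required sense. The pivotal observation is that the Heisenberg evolution of $\vA \otimes I$ under $H = H_1 \otimes I + I \otimes H_2$ ignores $H_2$ entirely. Since $H_1 \otimes I$ and $I \otimes H_2$ commute, $e^{iHt} = e^{iH_1 t} \otimes e^{iH_2 t}$, and hence
\[
e^{iHt}(\vA \otimes I)e^{-iHt} = \bigl(e^{iH_1 t}\vA e^{-iH_1 t}\bigr) \otimes I.
\]
Substituting this into the defining integral of the Gaussian-supported OFT immediately gives $\hat{\vA}(\omega) = \hat{\vA}_1(\omega) \otimes I$, where $\hat{\vA}_1(\omega)$ is the OFT of $\vA$ computed with respect to $H_1$ alone. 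Equivalently, the Bohr frequencies of $\vA \otimes I$ with respect to $H$ coincide with those of $\vA$ with respect to $H_1$, so $\vA_\nu = \vA_{1,\nu}\otimes I$ for every frequency $\nu$.

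Next I would check term by term that each summand of the Lindbladian acts trivially on the second factor. For the transition part, conjugation by $\hat{\vA}(\omega) = \hat{\vA}_1(\omega) \otimes I$ on a product state $\rho_1 \otimes \rho_2$ returns $\bigl(\hat{\vA}_1(\omega)\rho_1\hat{\vA}_1(\omega)^\dag\bigr) \otimes \rho_2$, and by linearity this extends to all $\rho$ as a superoperator of the form $\mathcal{T}_1 \otimes \mathrm{id}_2$. Identical reasoning handles the decay term $-\tfrac{1}{2}\{\hat{\vA}(\omega)^\dag\hat{\vA}(\omega),\cdot\}$, since $\hat{\vA}(\omega)^\dag\hat{\vA}(\omega) = \bigl(\hat{\vA}_1(\omega)^\dag\hat{\vA}_1(\omega)\bigr)\otimes I$ still has trivial action on the second factor.

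For the coherent term, I would use the explicit formula
\[
\vB = \sum_{\nu_1,\nu_2}\frac{\tanh(-\beta(\nu_1-\nu_2)/4)}{2i}\,\vA_{\nu_2}^\dag \vA_{\nu_1}.
\]
Plugging in $\vA_\nu = \vA_{1,\nu}\otimes I$ gives $\vA_{\nu_2}^\dag \vA_{\nu_1} = (\vA_{1,\nu_2}^\dag\vA_{1,\nu_1})\otimes I$ for every pair, so $\vB = \vB_1 \otimes I$, where $\vB_1$ is the coherent term of the first-subsystem Lindbladian $\mathcal{L}_1$. Since commutation with any operator of the form $X \otimes I$ defines a superoperator of the form $[X,\cdot]\otimes \mathrm{id}_2$, we obtain $-i[\vB,\cdot] = (-i[\vB_1,\cdot])\otimes \mathrm{id}_2$. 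Summing the transition, decay, and coherent contributions identifies the full Lindbladian as $\mathcal{L}_1 \otimes \mathrm{id}_2$, as claimed.

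I do not anticipate a serious obstacle here: the argument is a clean tensor-product factorization in the energy-OFT formalism of Appendix~\ref{sec:CKGdetails2}. The only mildly delicate point, worth stating carefully rather than glossing over, is the distinction between operators of the form $X_1 \otimes I$ and superoperators of the form $\mathcal{S}_1 \otimes \mathrm{id}_2$, and the fact that left/right multiplication by $X_1 \otimes I$ and commutator/anticommutator against such operators all descend to the latter form. Once that bookkeeping is in place, the entire lemma reduces to the one structural identity $\hat{\vA}(\omega) = \hat{\vA}_1(\omega)\otimes I$.
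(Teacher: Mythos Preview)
Your proposal is correct. It agrees with the paper's argument for the dissipative part (both begin by showing $\hat{\vA}(\omega) = \hat{\vA}_1(\omega)\otimes I$ and then observe that the transition and decay terms consequently act only on the first factor), but diverges in how the coherent term is handled. You plug the factorized Bohr-frequency components $\vA_\nu = \vA_{1,\nu}\otimes I$ directly into the explicit formula for $\vB$ to read off $\vB = \vB_1\otimes I$. The paper instead invokes the \emph{uniqueness} of $\vB$ (up to a scalar): since $\vrho_\beta = \vrho_\beta^1\otimes\vrho_\beta^2$ is separable, one can first find $\vB_1$ making the first-factor Lindbladian detailed-balanced with respect to $\vrho_\beta^1$, observe that $(\mathcal{L}_1\text{ with }\vB_1)\otimes\mathrm{id}_2$ is then detailed-balanced with respect to $\vrho_\beta$, and conclude by uniqueness that $\vB = \vB_1\otimes I$. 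Your route is more elementary and self-contained given the explicit formula already recorded in the paper; the paper's route is more structural and would survive even if only the characterizing property of $\vB$, rather than its closed form, were available.
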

\begin{proof}
We will rebuild the Lindbladian from its definition to obtain the result.
Firstly, the operator Fourier transform of the jump operators yields
\begin{align*}\hat{\vA}^a(\omega) &= \frac{1}{\sqrt{2\pi}}\int_{-\infty}^\infty e^{iHt }(\vA^a\otimes I) e^{-iHt}e^{-i\omega t}f(t)dt\\
&= \left(\frac{1}{\sqrt{2\pi}}\int_{-\infty}^\infty e^{iH_1t }\vA^a e^{-iH_1t}e^{-i\omega t}f(t)dt\right) \otimes I.
\end{align*}
Then,

\[\mathcal{L}_\beta[\cdot] = -i[\vB, \cdot] + \sum_{a \in [M]} \int_{-\infty}^\infty \gamma(\omega) \left(\hat{\vA}^a(\omega) (\cdot) \hat{\vA}^a(\omega)^\dag - \frac{1}{2}\{\hat{\vA}^a(\omega)^\dag\hat{\vA}^a(\omega), \cdot\}\right)d\omega.\]
Since the dissipative part of the Lindbladian only acts on the state with $\hat{\vA}^a(\omega)$, it can only act on the the first space. 
We therefore must only prove that $-i[\vB, \cdot]$ acts only on the first space. 
In \cite{chen2023efficient} it is established that there is a unique such operator $-i[\vB, \cdot]$ such that the sum satisfies detailed balance with repect to $\rho_\beta$.
Since $\vH$ is separable, $\rho_\beta = \rho_\beta^1 \otimes \rho_\beta^2$. 
The form of detailed balance is
\[\mathcal{L}_\beta[\cdot] = \vrho_\beta^{1/2}\mathcal{L}_\beta^\dag[\vrho_\beta^{-1/2} (\cdot) \vrho_\beta^{-1/2}]\vrho_\beta^{1/2}.\]
As $\rho_\beta$ is separable, satisfying detailed balance for two distinct operators $\mathcal{L}_1$ for $\rho_\beta^1$ and $\mathcal{L}_2$ for $\rho_\beta^2$ is sufficient to satisfy the detailed balance for $\mathcal{L}_\beta = \mathcal{L}_1 \otimes \mathcal{L}_2$. 

We may therefore find some $-i[\vB_1, \cdot]$ such that the sum with the dissipative Lindbladian satisfies detailed balance with repect to $\rho_\beta^1$. 
Taking the tensor product with $\mathcal{L}_2 = I$ yields a Lindbladian that satisfies detailed balance with respect to $\rho_\beta$, and therefore by uniqueness $\vB = \vB_1 \otimes \vI$. 
This completes the proof. 
\end{proof}
Treating the vertices of a hypercube with $2^d$ vertices as bitstrings of length $d$, with an edge between two bitstrings of Hamming distance 1, the Hamiltonian can be represented as $H = \sum_{i = 1}^d X_i$. 
Choosing the jump operators to be $\frac{1}{\sqrt{d}}Z_a$, the resulting CKG Lindbladian can be written as the sum of $d$ Lindbladians for each jump operator. 
Since these jump operators only act on one index, by the lemma above we may consider $\mathcal{L}_\beta$ as a sum $\mathcal{L}_\beta = \sum_{i=1}^d \mathcal{L}_i$, where each acts on a 4-dimensional space. 
It follows that the spectral gap of $\mathcal{L}_\beta = \min_i \lambda_{\text{gap}}(\mathcal{L}_i)$. 

We may now explicitly write $\mathcal{L}_i$ in the energy basis ($\ketbra{0}{0}, \ketbra{1}{1}, \ketbra{0}{1}, \ketbra{1}{0}$), using the previous formula \eqref{eq:graphlindbladian}:
\[\frac{1}{2d}\begin{bmatrix}
-2\alpha(-2) & 2\alpha(2) & 0 & 0 \\
2\alpha(-2) & -2\alpha(2) & 0 & 0 \\
0 & 0 & -\alpha(-2)-\alpha(2) & 2\theta(2, -2) \\
0 & 0 & 2\theta(2, -2) & -\alpha(-2)-\alpha(2)
\end{bmatrix}.\]

The first block has rank 1 and the trace is $-\frac{\alpha(-2)+\alpha(2)}{d}$, giving the eigenvalues as $0$ and $-\frac{\alpha(-2) + \alpha(2)}{2d}$. 
Moreover, $-\alpha(-2) - \alpha(2) + 2\theta(2, -2) < 0$.
We conclude that the spectral gap is $\Omega(\frac{1}{d})$ for a sufficiently small constant $\sigma_E$ given that $\theta(2, -2) = \alpha(0)\exp(-\frac{2}{\sigma_E^2})$.

Since this lower bound holds for each index, this completes the proof that the spectral gap is $\Omega(\frac{1}{d})$. 
\bibliographystyle{alphaUrlePrint.bst}
\bibliography{ref,qc_gily}

\end{document}